\newcommand{\bmu}{{\boldsymbol\mu}}
\newcommand{\bnu}{{\boldsymbol\nu}}
\newcommand{\bgam}{{\boldsymbol\gamma}}
\newtheorem{proposition}{Proposition}
\newtheorem{theorem}{Theorem}
\newtheorem{lemma}{Lemma}
\newtheorem*{question*}{Question}
\newtheorem{definition}{Definition}
\newcommand{\Cb}{\mathbb C} 
\newcommand{\Rb}{\mathbb R} 
\newcommand{\Zb}{\mathbb Z} 
\newcommand{\mo}[1]{\left| #1 \right|} 
\newcommand{\abs}{\mo} 
\newcommand{\hi}{\mathcal{H}} 
\newcommand{\hh}{\mathcal{H}} 
\newcommand{\lhs}{\mathcal{L}_s(\hi)} 
\newcommand{\spanno}[1]{{\rm span}\,\left\{ #1 \right\}}
\newcommand{\ip}[2]{\left\langle\,#1\,|\,#2\,\right\rangle} 
\newcommand{\kb}[2]{|#1\rangle\langle#2|} 
\newcommand{\no}[1]{\left\|#1\right\|} 
\newcommand{\tr}[1]{{\rm tr}\left[#1\right]} 
\newcommand{\id}{\mathbbm{1}} 
\renewcommand{\rho}{\varrho}
\newcommand{\lam}{\lambda}
\newcommand{\rank}{{\rm rank}\,} 
\newcommand{\Ao}{\mathsf{A}}
\newcommand{\Bo}{\mathsf{B}}
\newcommand{\Mo}{\mathsf{M}}
\newcommand{\Po}{\mathsf{P}}
\newcommand{\Qo}{\mathsf{Q}}
\newcommand{\Uo}{\mathsf{U}}
\newcommand{\phii}{\varphi}
\newcommand{\en}{\mathcal{E}} 
\newcommand{\enf}{\mathcal{F}} 
\newcommand{\enp}{{\hat{\mathcal{E}}}} 
\newcommand{\Pg}{P_{{\rm guess}}}
\newcommand{\Ppg}{\Pg^{{\rm post}}}
\newcommand{\Prg}{\Pg^{{\rm prior}}}
\newcommand{\V}{\ca{V}}
\newcommand{\A}{C_0}
\newcommand{\B}{\overline{C}_0}
\newcommand{\CC}{C}
\newcommand{\ca}[1]{\mathcal{#1}} 
\newcommand{\wit}{\xi} 
\newcommand{\Wit}{\Xi} 
\newcommand{\De}[1]{{\mathcal{D}}(#1)} 
\newcommand{\OO}{\ca{O}_{X,Y}} 
\newcommand{\cOO}{\ca{O}^{\mathrm{com}}_{X,Y}} 
\newcommand{\iOO}{\ca{O}^{\mathrm{inc}}_{X,Y}} 
\definecolor{darkgreen}{rgb}{0,0.6,0.2}
\definecolor{darkyellow}{rgb}{1,0.6,0}
\newcommand{\aligno}[1]{\begin{align*} #1 \end{align*}}
\newcommand{\aligsi}[1]{\begin{align} #1 \end{align}}
\newcommand{\equano}[1]{\begin{equation*} #1 \end{equation*}}
\newcommand{\equasi}[2]{\begin{equation} \label{#1} #2 \end{equation}}
\newcommand{\pair}[2]{\langle\,#2\,,\,#1\,\rangle} 
\newcommand{\Cscr}{C}
\newcommand{\Fscr}{\mathcal{F}}
\newcommand{\Mscr}{\mathcal{M}}
\newcommand{\Oscr}{\mathcal{O}}
\newcommand{\Sscr}{\mathcal{S}}
\newcommand{\fun}[1]{\Fscr( #1 )} 
\newcommand{\funo}[2]{\Fscr_{#1}( #2 )} 
\newcommand{\obs}[1]{\Oscr( #1 )} 
\newcommand{\Aff}[1]{\Mscr(#1)}
\newcommand{\ri}[1]{{\rm ri}(#1)}
\begin{document}

\title[]{Quantum Incompatibility Witnesses}

\author{Claudio Carmeli}
\email{claudio.carmeli@gmail.com}
\affiliation{DIME, Universit\`a di Genova, Via Magliotto 2, I-17100 Savona, Italy}

\author{Teiko Heinosaari}
\email{teiko.heinosaari@utu.fi}
\affiliation{QTF Centre of Excellence, Turku Centre for Quantum Physics, Department of Physics and Astronomy, University of Turku, FI-20014 Turku, Finland}

\author{Alessandro Toigo}
\email{alessandro.toigo@polimi.it}
\affiliation{Dipartimento di Matematica, Politecnico di Milano, Piazza Leonardo da Vinci 32, I-20133 Milano, Italy}
\affiliation{I.N.F.N., Sezione di Milano, Via Celoria 16, I-20133 Milano, Italy}

\begin{abstract}
We demonstrate that quantum incompatibility can always be detected by means of a state discrimination task with partial intermediate information. 
This is done by showing that only incompatible measurements allow for an efficient use of premeasurement information in order to improve the probability of guessing the correct state. Thus, the gap between the guessing probabilities with pre- and postmeasurement information is a witness of the incompatibility of a given collection of measurements. We prove that all linear incompatibility witnesses can be implemented as some state discrimination protocol according to this scheme.
{As an application, we characterize the joint measurability region of two noisy mutually unbiased bases.}
\end{abstract}

\maketitle


{\em Introduction.---} Quantum incompatibility is one of the key features that separate the quantum from the classical world \cite{HeMiZi16}. 
It gives rise to several among the most intriguing quantum phenomena, including measurement uncertainty relations \cite{BuLaWe14}, contextuality \cite{LiSpWi11} and nonlocality \cite{Fine82}. 
So far, however, the direct experimental verification of quantum incompatibility has been a demanding task, as the known detection methods, based on Bell experiments \cite{WoPeFe09,BaGaGhKa13,KaGhChBa16} and steering protocols \cite{UoMoGu14,QuVeBr14,UoBuGuPe15,ChBuLiCh16}, rely on entanglement.

In this paper, we show that quantum incompatibility can be detected by means of a state discrimination task with partial intermediate information. 
More precisely, we consider a scenario where Alice sends Bob a quantum system that she has prepared into a state chosen from one of $n$ disjoint state ensembles, but she reveals to him the chosen ensemble only at a later time. 
Bob can then decide to perform his measurement either before or after Alice's announcement and, importantly, the achievable success probabilities can be compared.
We show that Bob can benefit from prior compared to posterior measurement information and improve his probability of guessing the correct state only if his measurements are incompatible.

Looking at it from another perspective, the difference between Bob's guessing probabilities with pre- and postmeasurement information is a witness of the incompatibility of the collection of measurements he uses in the discrimination task. 
Since the complement set of incompatible collections of measurements is the closed and convex set of all the compatible collections of measurements, this observation sets the previous detection scheme for incompatibility within the broader framework of witnesses.

In general, a witness is any experimentally assessable linear function whose value is greater than or equal to zero whenever the measured object does not have the investigated property, but gives a negative value at least for some object with that property. 
The paradigmatic example of witnesses is that of entanglement witnesses, which have become one of the main methods to detect entanglement \cite{GuTo09,HoHoHoHo09}.
Other examples include the detection of non-Gaussianity of states \cite{HuGeTuPaKi14}, dimensionality of correlations \cite{BrPiAcGiMeSc08}, or for the unital channels the detection of not being a random unitary channel \cite{MeWo09}. 
The fact that witnesses can be applied to detect incompatibility has been recently noted in \cite{Jencova18,BlNe18}.

We prove that any incompatibility witness essentially arises as a state discrimination task with intermediate information of the type described above. By standard separation results for convex sets, this implies that all incompatible sets of measurements can be detected by performing some state discrimination where premeasurement information is strictly better than postmeasurement information. 
This yields a novel operational interpretation of quantum incompatibility, and provides a method to detect it in a physically feasible experiment.
In particular, this proves that entanglement is not needed to reveal incompatibility.


{\em General framework of witnesses.---} We briefly recall the general setting of witnesses as this clarifies our main results on incompatibility witnesses and makes the reasoning behind them easy to follow.

Let $\V$ be a real linear space and $\CC\subset\V$ a compact convex subset that mathematically describes the objects we are interested in. 
This set is further divided into two disjoint subsets $\A$ and $\B$, with $\A$ being closed and convex. 
We can think of $\A$ and $\B$ as properties -- either an element $x\in \CC$ is in $\A$ or in $\B$.
A witness of the property $\B$, or \emph{$\B$-witness}, is a map $\wit:\CC \to \Rb$ such that 
\begin{enumerate}
\item[(W1)] $\wit(x) \geq 0$ for all $x\in \A$ and $\wit(x) < 0$ at least for some $x\in \B$;
\item[(W2)] $\wit(t x + (1-t)y)=t\wit(x) + (1-t)\wit(y)$ for all $x,y\in\CC$ and $t\in[0,1]$.
\end{enumerate}
By condition (W2), each witness generates a hyperplane separating \(\V\) into two half-spaces. Condition (W1) then asserts that one of the two halves entirely contains \(\A\), but still does not contain all of $\CC$ (see Fig.~\ref{fig:w}).

We say that an element $x\in \B$ is \emph{detected} by $\wit$ if $\wit(x)<0$, and we denote by $\De{\wit}$ the subset of all elements of $\B$ that are detected by $\wit$.
Another $\B$-witness $\wit'$ is called {\em finer} than $\wit$ if $\De{\wit'}\supseteq\De{\wit}$, and in this case we write $\wit\preccurlyeq\wit'$. 
If $\De{\wit'}=\De{\wit}$, we say that $\wit$ and $\wit'$ are \emph{detection equivalent} and denote this by $\wit\approx\wit'$ (see Fig.~\ref{fig:w}).
As we typically aim to detect as many elements as possible, we favor witnesses that cannot be made any finer. 
A necessary condition for $\wit$ being optimal in that sense is that $\wit$ is \emph{tight}, meaning that $\wit(x)=0$ for some $x\in\A$.

\begin{figure}
\centering
\includegraphics[width=0.40\textwidth]{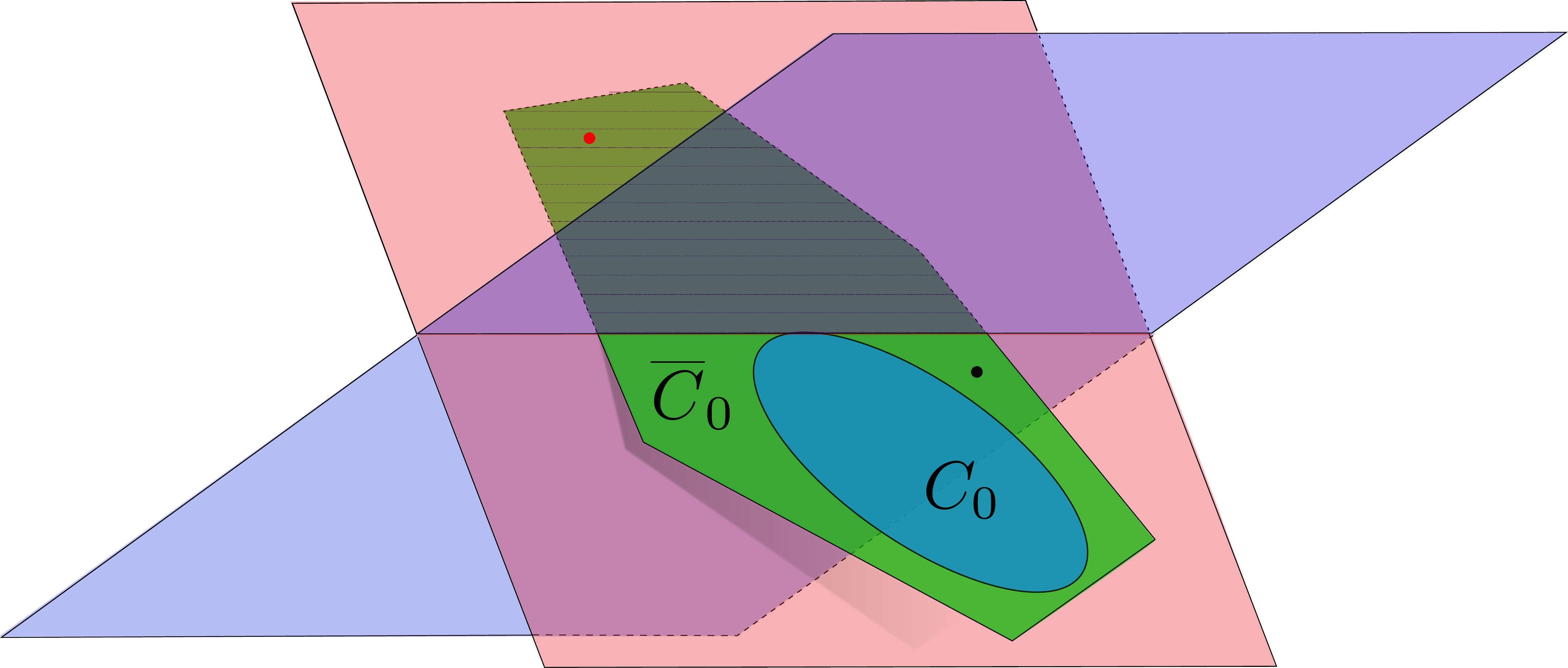}
\caption{Witnesses are associated with hyperplanes, and they are detection equivalent if they yield the same separation of the set $C$. Here, two tight equivalent witnesses detect the red point, but not the black one.\label{fig:w}}
\end{figure}

Any $\B$-witness $\wit$ can be written in the form
\begin{equation}\label{eq:witness_main}
\wit(x) = \delta - v^*(x) \qquad \forall x\in\CC \,,
\end{equation}
where $v^*:\V\to\Rb$ is a linear map and $\delta\in\Rb$ is a constant.
An essential point for our later developments is that the representation \eqref{eq:witness_main} of a witness $\wit$ is not unique but there is some freedom in the choice of $v^*$ and $\delta$.
 In addition, if we are only interested in the set of detected elements $\De{\wit}$, we have a further degree of freedom, coming from the possibility to switch from $\wit$ to an equivalent $\B$-witness $\wit' = \alpha\wit$ for some constant $\alpha>0$. 


{\em Detecting quantum incompatibility.---} A measurement with a finite outcome set $X$ is mathematically described as a positive operator valued measure (POVM), i.e., a map $\Ao$ from $X$ to the set $\lhs$ of self-adjoint linear operators on a Hilbert space $\hi$ such that the operators $\Ao(x)$ are positive (meaning that $\ip{\psi}{\Ao(x)\psi}\geq 0$ for all $\psi\in\hi$) and they satisfy the normalization condition $\sum_x \Ao(x) = \id$.

For clarity, we limit our discussion to pairs of measurements. 
The treatment of finite collections of measurements is similar. 
Two measurements $\Ao$ and $\Bo$, having outcome sets $X$ and $Y$, respectively, are \emph{compatible} if there exists a measurement $\Mo$, called their \emph{joint measurement}, with  outcome set $X\times Y$, such that
$\sum_y \Mo(x,y) = \Ao(x)$ and $\sum_x \Mo(x,y) = \Bo(y)$.
Otherwise, $\Ao$ and $\Bo$ are \emph{incompatible}.

By $\OO$ we denote the compact set of all pairs of measurements $(\Ao,\Bo)$ with outcome sets $X,Y$, respectively.
This set is divided into compatible pairs $\cOO$ and incompatible pairs $\iOO\equiv \overline{\cOO}$.
 We define convex combinations in $\OO$ componentwise, and it follows that  the subset $\cOO$ of compatible pairs is closed and convex. 
Hence we can consider $\iOO$-witnesses; we call them \emph{incompatibility witnesses} (IWs).


{\em Discrimination scenario as an incompatibility witness.---} In the standard state discrimination scenario \cite{Holevo73,YuKeLa75,QDET76}, Alice picks a label $z$ from a given set $Z$ with probability $p(z)$. 
She encodes the label into a quantum state $\varrho_z$ and delivers the state to Bob.
Bob knows the set $\{\varrho_z\}_{z\in Z}$ of states used in the encoding.
He is trying to recover the label by making a measurement on the quantum system that he has received.
It is convenient to merge the a priori probability distribution $p$ and the state encoding into a single map $\en$, given as $\en(z)=p(z)\varrho_z$. 
We call this map a \emph{state ensemble}; its defining properties are that $\en(z)$ is positive for all $z$, and $\sum_z \tr{\en(z)} = 1$. 
The guessing probability depends on the measurement $\Mo$ that Bob uses, and it is given as
\begin{equation*}
\Pg(\en;\Mo) =  \sum_{z} \tr{\en(z) \Mo(z)} \, .
\end{equation*}
Further, we denote
\begin{equation}\label{eq:def_Pguess}
\Pg(\en) =  \max_{\Mo}  \Pg(\en;\Mo) \, ,
\end{equation}
where the optimization is done over all measurements with outcome set $Z$. 

We are then considering two modifications of the standard state discrimination scenario, where partial classical information concerning the correct label is given either before or after the measurement is performed \cite{BaWeWi08,GoWe10,CaHeTo18,AkKaMa18}. 
The form of the partial information is given as a partitioning $Z=X\cup Y$ of $Z$ into two disjoint subsets. 
By conditioning the state ensemble $\en$ to the occurrence of a label in $X$ or $Y$, we obtain new state ensembles $\en_X$ and $\en_Y$, which we call \emph{subensembles} of $\en$; they are given as
\begin{align*}
\en_X(x) = \tfrac{1}{p(X)} \en(x) \, ,  \qquad \en_Y(y) = \tfrac{1}{p(Y)} \en(y) \, ,
\end{align*}
and their label sets are $X$ and $Y$, respectively.
Here we have denoted $p(X)=\sum_{z\in X} p(z)$ and $p(Y)=\sum_{z\in Y} p(z)$.
We write $\enp=(\en,\{X,Y\})$ for the partitioned state ensemble, i.e., the state ensemble $\en$ with the partitioning of $Z$ into disjoint subsets $X$ and $Y$.

If Alice announces the correct subensemble before Bob  chooses his measurement, we call the task \emph{discrimination with premeasurement information}.
In this case, Bob can choose a measurement $\Ao$ with the outcome set $X$ to discriminate $\en_X$ and a measurement $\Bo$ with the outcome set $Y$ to discriminate $\en_Y$.
At each round of the experiment he measures either $\Ao$ or $\Bo$, depending on Alice's announcement.  
Bob's total guessing probability is
\begin{align}\label{eq:prior}
\Prg(\enp;\Ao,\Bo) = p(X)\Pg(\en_X;\Ao) + p(Y)\Pg(\en_Y;\Bo)
\end{align}
and its maximal value is
\begin{equation}\label{eq:prior-max}
\begin{aligned}
\Prg(\enp) & = \max_{(\Ao,\Bo)\in \OO} \Prg(\enp;\Ao,\Bo) \\
& = p(X)\Pg(\en_X) + p(Y)\Pg(\en_Y) \, . 
\end{aligned}
\end{equation}

In the other variant of the discrimination scenario, Alice announces the correct subensemble only after Bob  has performed his measurement.
Bob has to use a fixed measurement at each round but he can postprocess the obtained measurement outcome according to the additional information. 
We call this task \emph{discrimination with postmeasurement information}.
It has been shown in \cite{CaHeTo18} that now the maximal guessing probability, denoted as $\Ppg(\enp)$, is given by
\begin{align}\label{eq:post-max}
\Ppg(\enp) = \max_{(\Ao,\Bo)\in \cOO} \Prg(\enp;\Ao,\Bo)  \, .
\end{align}
A comparison of \eqref{eq:prior-max} and \eqref{eq:post-max} reveals that the maximal guessing probabilities $\Prg(\enp)$ and $\Ppg(\enp)$ result in optimizing the same mathematical quantity, with the important difference that in the latter the optimization is restricted to compatible pairs of measurements.
From this, we already conclude that if $\Prg(\enp;\Ao,\Bo) > \Ppg(\enp)$ for some partitioned state ensemble $\enp$, then $\Ao$ and $\Bo$ are incompatible.
This conclusion is essentially \cite[Thm.~1]{CaHeTo18}, stated in slightly different words.
In the following, we develop this observation into a necessary and sufficient condition for incompatibility by using the framework of witnesses.

We first notice that, for a partitioned state ensemble $\enp=(\en,\{X,Y\})$ with $\Prg(\enp) > \Ppg(\enp)$, the function
\begin{equation}\label{eq:wit_enp}
\wit_{\enp}(\Ao,\Bo) = \Ppg(\enp) - \Prg(\enp;\Ao,\Bo) 
\end{equation}
is a tight IW for pairs of measurements in $\OO$; we call it the \emph{incompatibility witness associated with $\enp$}. 
In some cases, the exact evaluation of $\Ppg(\enp)$ may be a difficult task, but still by finding a number $\delta$ such that $\Ppg(\enp) \leq \delta <\Prg(\enp)$ one obtains an IW by setting
\begin{equation}\label{eq:wit_enp0}
\wit^\delta_{\enp} (\Ao,\Bo) =\delta - \Prg(\enp;\Ao,\Bo) \, .
\end{equation}
Clearly, we then have $\wit^\delta_{\enp}\preccurlyeq\wit_{\enp}$. 

An important feature of the witnesses arising from partitioned state ensembles is that their physical implementation is straightforward.
Namely, the quantities $\Pg(\en_X;\Ao)$ and $\Pg(\en_Y;\Bo)$ are obtained by performing standard state discrimination experiments, and $\Prg(\enp;\Ao,\Bo)$ is then given via \eqref{eq:prior}.
The constant term $\Ppg(\enp)$ must be calculated analytically or numerically, or at least upper bounded tightly enough. 
It has been shown in \cite{CaHeTo18} that the calculation of $\Ppg(\enp)$ reduces to the evaluation of the standard guessing probability $\Pg(\en')$ of an auxiliary state ensemble $\en'$, and the techniques for calculating the standard guessing probability (see, e.g., \cite{Bae13}) are thereby applicable.


{\em Characterization of incompatibility witnesses.---} The following two theorems are the main results of this paper.

\begin{theorem}\label{teo:main_standard}
For any incompatibility witness $\wit$, there exists a partitioned state ensemble $\enp$ such that the associated incompatibility witness $\wit_{\enp}$ is finer than $\wit$.
Further, if $\wit$ is tight, there exists a partitioned state ensemble $\enp$ such that $\wit$ is detection equivalent to $\wit_{\enp}$.
\end{theorem}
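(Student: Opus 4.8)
The plan is to start from the canonical representation \eqref{eq:witness_main} of an arbitrary IW $\wit$ and to reverse-engineer a partitioned state ensemble from its linear part. Writing $\wit(\Ao,\Bo)=\delta-v^*(\Ao,\Bo)$ and using that a pair $(\Ao,\Bo)$ lives in the (finite-dimensional) space $\lhs^X\oplus\lhs^Y$, the trace pairing identifies $v^*$ with self-adjoint operators $\{G_x\}_{x\in X}$ and $\{H_y\}_{y\in Y}$ through
\begin{equation*}
v^*(\Ao,\Bo)=\sum_{x\in X}\tr{G_x\Ao(x)}+\sum_{y\in Y}\tr{H_y\Bo(y)}\,.
\end{equation*}
The aim is to turn the $G_x,H_y$ into positive, normalized operators $\en(x),\en(y)$ defining a partitioned state ensemble $\enp$, so that $v^*$ becomes proportional to $\Prg(\enp;\cdot\,)$ as given by \eqref{eq:prior}.

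The key device is the redundancy flagged right after \eqref{eq:witness_main}. By the POVM normalizations $\sum_x\Ao(x)=\id=\sum_y\Bo(y)$, replacing $G_x\mapsto G_x+c_X\id$ and $H_y\mapsto H_y+c_Y\id$ by common scalar shifts alters $v^*$ only by the constant $(c_X+c_Y)\dim\hi$, which is reabsorbed into $\delta$. I would pick $c_X,c_Y$ just large enough that $\tilde G_x:=G_x+c_X\id\geq0$ and $\tilde H_y:=H_y+c_Y\id\geq0$ for all $x,y$. Condition (W1) forces $\wit$, hence $v^*$, to depend nontrivially on each argument separately (a witness constant in $\Ao$ would be nonnegative on all of $\OO$, since every $\Bo$ is compatible with a trivial $\Ao$, contradicting negativity on $\iOO$); with the minimal shifts this yields $\sum_x\tr{\tilde G_x}>0$ and $\sum_y\tr{\tilde H_y}>0$. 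Setting $S:=\sum_x\tr{\tilde G_x}+\sum_y\tr{\tilde H_y}>0$ and rescaling by $1/S$ multiplies $\wit$ by a positive constant, producing a detection-equivalent IW; defining $\en(x)=\tilde G_x/S$, $\en(y)=\tilde H_y/S$ then gives a genuine partitioned state ensemble $\enp$ with $p(X),p(Y)>0$, and direct substitution gives $\tfrac1S\wit(\Ao,\Bo)=\delta''-\Prg(\enp;\Ao,\Bo)$ for an explicit constant $\delta''$.

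It remains to locate $\delta''$ relative to the guessing probabilities, which is where the witness axioms do the real work. Because $\tfrac1S\wit$ is again an IW, nonnegativity on $\cOO$ forces $\delta''\geq\Prg(\enp;\Ao,\Bo)$ for every compatible pair, i.e.\ $\delta''\geq\Ppg(\enp)$ by \eqref{eq:post-max}, while negativity somewhere on $\iOO$ forces $\delta''<\Prg(\enp)$ by \eqref{eq:prior-max}. Hence $\Ppg(\enp)\leq\delta''<\Prg(\enp)$, so $\tfrac1S\wit$ is precisely a witness $\wit^{\delta''}_{\enp}$ of the form \eqref{eq:wit_enp0}. Using $\wit^{\delta''}_{\enp}\preccurlyeq\wit_{\enp}$ together with the invariance of detection sets under positive rescaling, I conclude $\wit\approx\tfrac1S\wit=\wit^{\delta''}_{\enp}\preccurlyeq\wit_{\enp}$, which proves the first assertion.

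For the tight case, tightness of $\wit$ passes to $\tfrac1S\wit$, so $\tfrac1S\wit(\Ao_0,\Bo_0)=0$ for some compatible $(\Ao_0,\Bo_0)$; this gives $\delta''=\Prg(\enp;\Ao_0,\Bo_0)\leq\Ppg(\enp)$, which combined with $\delta''\geq\Ppg(\enp)$ pins $\delta''=\Ppg(\enp)$. Then $\tfrac1S\wit=\wit_{\enp}$ and therefore $\wit\approx\wit_{\enp}$. I expect the main obstacle to be not any single calculation but the bookkeeping of the representation freedom: one must verify that the nonuniqueness in \eqref{eq:witness_main} is exactly the common-shift freedom needed to convert an abstract separating functional into operationally meaningful subensembles, and that (W1) is strong enough to guarantee both the strict positivity ensuring $\enp$ is well defined and the bracketing $\Ppg(\enp)\leq\delta''<\Prg(\enp)$ that ports the geometry into a statement about pre- versus postmeasurement information.
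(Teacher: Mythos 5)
Your proposal is correct and follows essentially the same route as the paper's proof: exploit the identity-shift freedom in the linear representation of $\wit$ to make the dual operators positive, rescale by the total trace to obtain a partitioned state ensemble $\enp$, and use (W1) to bracket the resulting constant between $\Ppg(\enp)$ and $\Prg(\enp)$, with tightness pinning it to $\Ppg(\enp)$. The only cosmetic difference is that you use two separate minimal shifts $c_X,c_Y$ justified by noting that an IW cannot be constant in either argument alone, whereas the paper uses a single large common shift $\mu$ and only needs to rule out the witness being globally constant on $\OO$.
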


In the case of IWs, the natural choice for the ambient vector space $\V$ containing $\OO$ is the Cartesian product \(\fun{X}\times\fun{Y}\), where $\fun{X}$ is the vector space of all operator valued functions $F\colon X \to \lhs$. 
All linear maps on $\fun{X}\times\fun{Y}$ are expressible in terms of scalar products with elements $(F,G)\in\fun{X}\times\fun{Y}$, so that the basic representation \eqref{eq:witness_main} of witnesses takes the form
\begin{equation}\label{eq:nonstandard_IW}
\wit(\Ao,\Bo) = \delta - \sum_x \tr{F(x)\Ao(x)} - \sum_y \tr{G(y)\Bo(y)}
\end{equation}
for all $(\Ao,\Bo)\in\OO$.
The proof of Thm.~\ref{teo:main_standard} is based on the freedom in the choice of $(F,G)$ and $\delta$.

\begin{proof}[Proof of Thm.~\ref{teo:main_standard}]
Starting from an IW $\wit$ of the general form \eqref{eq:nonstandard_IW}, we similarly define a map $\wit'$ by choosing $F'(x) = \alpha[F(x)-\mu\id]$, $G'(y) = \alpha[G(y)-\mu\id]$ and $\delta' = \alpha(\delta-2\mu d)$, where $d$ is the dimension of the Hilbert space and $\alpha,\mu\in\Rb$ are constants that we determine next.
A direct calculation shows that $\wit' = \alpha\wit$ on $\OO$. 
First, we fix the value of $\mu$ by setting
\begin{equation*}
-\mu = \sum_{x\in X}\no{F(x)} + \sum_{y\in Y}\no{G(y)} \, , 
\end{equation*}
{where \(\no{\cdot}\) denotes the uniform operator norm on \(\lhs\).}  With this choice, all the operators $\en(x) = \abs{\alpha}[F(x)-\mu\id]$ and $\en(y) = \abs{\alpha}[G(y)-\mu\id]$ are positive. 
Secondly, we fix the value of $\alpha$ by setting
\begin{equation*}
\frac{1}{\alpha} = \sum_{x\in X} \tr{F(x) - \mu\id} + \sum_{y\in Y} \tr{G(y) - \mu\id} \,.
\end{equation*}
The right-hand side of this expression is strictly positive, as otherwise $F(x) = G(y) = \mu\id$ for all $x,y$ and so the original IW \eqref{eq:nonstandard_IW} would be constant on $\OO$, which is impossible. 
Thereby, $\alpha > 0$; hence, the map $\wit' = \alpha\wit$ is an IW and $\wit'\approx\wit$. 
Moreover, in this way we have obtained a partitioned state ensemble $\enp=(\en,\{X,Y\})$, for which the witness \(\wit'\) has the form \eqref{eq:wit_enp0}:
$\wit'(\Ao,\Bo) = \delta' - \Prg(\enp;\Ao,\Bo)$.
Since $\wit'$ is an IW and hence satisfies (W1), we must have $\Ppg(\enp)\leq \delta' < \Prg(\enp)$.
Thereby, $\wit' \preccurlyeq\wit_{\enp}$. 
If in addition $\wit$ is tight, then $\delta' = \Ppg(\enp)$, and thus $\wit' = \wit_{\enp}$. 
\end{proof}
 
An important consequence of Thm.~\ref{teo:main_standard} is the following novel operational interpretation for quantum incompatibility.

\begin{theorem}\label{teo:detect_AB}
Two measurements $\Ao$ and $\Bo$ are incompatible if and only if there exists a partitioned state ensemble $\enp$ such that $\Prg(\enp;\Ao,\Bo) > \Ppg(\enp)$.
\end{theorem}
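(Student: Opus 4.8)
The plan is to prove the two implications separately; the substance lies in the ``only if'' direction, for which I intend to combine the separation theorem for convex sets with Theorem~\ref{teo:main_standard}.

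The ``if'' direction is immediate from the characterization \eqref{eq:post-max}. Indeed, suppose a partitioned state ensemble $\enp$ satisfies $\Prg(\enp;\Ao,\Bo) > \Ppg(\enp)$. Since \eqref{eq:post-max} gives $\Ppg(\enp) = \max_{(\Co,\Do)\in\cOO}\Prg(\enp;\Co,\Do)$, the strict inequality rules out $(\Ao,\Bo)\in\cOO$, so $\Ao$ and $\Bo$ are incompatible. This is exactly the observation already recorded below \eqref{eq:post-max}, so no further work is needed here.

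For the ``only if'' direction, assume $\Ao$ and $\Bo$ are incompatible, i.e.\ $(\Ao,\Bo)\in\iOO$. The set $\cOO$ is a closed convex (indeed compact) subset of $\V=\fun{X}\times\fun{Y}$, and $(\Ao,\Bo)$ lies outside it, so the Hahn--Banach separation theorem produces a linear functional $v^*\colon\V\to\Rb$ and a constant $\delta\in\Rb$ that strictly separate the point from the set. I would orient the separation so that $v^*(\Co,\Do)\le\delta$ for every $(\Co,\Do)\in\cOO$ while $v^*(\Ao,\Bo)>\delta$. Defining $\wit=\delta-v^*$ then yields a map that is affine, hence satisfies (W2), and that obeys (W1): it is nonnegative on all of $\cOO$ and strictly negative at the incompatible pair $(\Ao,\Bo)$. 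In other words, $\wit$ is a bona fide incompatibility witness with $(\Ao,\Bo)\in\De{\wit}$. It then remains to convert detection by this abstract witness into the state-discrimination inequality, and here Theorem~\ref{teo:main_standard} does the work: it supplies a partitioned state ensemble $\enp$ for which $\wit_{\enp}$ is finer than $\wit$, so that $\De{\wit}\subseteq\De{\wit_{\enp}}$. Since $(\Ao,\Bo)\in\De{\wit}$ by construction, we obtain $(\Ao,\Bo)\in\De{\wit_{\enp}}$, which upon unwinding the definition \eqref{eq:wit_enp} reads $\Ppg(\enp)-\Prg(\enp;\Ao,\Bo)<0$, i.e.\ the desired $\Prg(\enp;\Ao,\Bo)>\Ppg(\enp)$.

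The main obstacle is confined to the ``only if'' direction, and more precisely to verifying that the hyperplane delivered by Hahn--Banach can be oriented and normalized so that the associated $\wit=\delta-v^*$ satisfies (W1) in the required form, namely nonnegative on the whole compatible set and strictly negative at the given pair, rather than merely separating two sets in the abstract. Once such a genuine incompatibility witness detecting $(\Ao,\Bo)$ is in hand, the previously established Theorem~\ref{teo:main_standard} carries the argument to its conclusion with no additional calculation.
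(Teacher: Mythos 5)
Your proposal is correct and follows essentially the same route as the paper: the ``if'' direction is read off from \eqref{eq:post-max}, and the ``only if'' direction strictly separates the incompatible pair from the compact convex set $\cOO$ (the paper cites \cite[Cor.~11.4.2]{Rock} for exactly the orientation you worry about) to obtain an incompatibility witness detecting $(\Ao,\Bo)$, and then invokes Thm.~\ref{teo:main_standard} to replace it by a finer witness of the form $\wit_{\enp}$. The ``obstacle'' you flag at the end is resolved by the standard separation result and requires no extra work.
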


The probability $\Prg(\enp;\Ao,\Bo)$ is assessable by using Alice's classical information, and then performing quantum measurements only on Bob's side. Since no entangled state is shared in the state discrimination protocol, Thm.~\ref{teo:detect_AB} provides a much more practical way to detect incompatibility than schemes based on Bell experiments or steering. In particular, as a fundamental fact, entanglement is not needed to detect incompatibility.

\begin{proof}[Proof of Thm.~\ref{teo:detect_AB}]
The ``if'' statement has already been observed earlier, so here we prove the ``only if'' part. Let us assume that $(\Ao,\Bo)\notin\cOO$. Then, by the usual separation results for compact convex sets \cite[Cor.~11.4.2]{Rock}, there exist $(F,G)\in\fun{X}\times\fun{Y}$ and $\delta\in\Rb$ such that, defining $\wit$ as in \eqref{eq:nonstandard_IW}, we have $\wit(\Ao',\Bo')\geq 0$ for all $(\Ao',\Bo')\in\cOO$ and $\wit(\Ao,\Bo)<0$. 
By Thm.~\ref{teo:main_standard} there exists a partitioned state ensemble $\enp$ such that $\wit\preccurlyeq\wit_{\enp}$. 
It follows that $\wit_{\enp}(\Ao,\Bo) < 0$, i.e., $\Prg(\enp;\Ao,\Bo) > \Ppg(\enp)$.
\end{proof}


{\em Bounding the compatibility region by means of two mutually unbiased bases.---} As we have seen, constructing an IW involves the solution of two convex optimization problems: the evaluation of the maximal guessing probabilities defined in \eqref{eq:prior-max} and \eqref{eq:post-max}. 
In particular, if $\enp$ is a partitioned state ensemble for which the two probabilities differ, whenever the maximum in the right-hand side of \eqref{eq:post-max} admits an analytical computation, one can insert the resulting value of $\Ppg(\enp)$ into \eqref{eq:wit_enp} and thus write the tight IW associated with $\enp$ in an explicit form. 

Interestingly, solving the optimization problem \eqref{eq:post-max} yields even more. Indeed, evaluating a constrained maximum typically requires finding some feasible points where the maximum is attained; if the optimization problem is convex, these points are necessarily located on the relative boundary of the feasible domain. In our specific case, it means that, as a byproduct of solving \eqref{eq:post-max}, we get points lying on the relative boundary $\partial\cOO$ of the convex set $\OO$. Then, by taking convex combinations of these points, we can even have an insight into the set $\cOO$ itself. We thus see that the solution of \eqref{eq:post-max} has a twofold purpose: on the one hand, through the IW constructed in \eqref{eq:wit_enp}, it provides a simple method to detect the incompatibility of many measurement pairs; on the other hand, by using the resulting optimal points, some information on the set of compatible pairs can be inferred.

An interesting special case in which the optimization problems \eqref{eq:prior-max} and \eqref{eq:post-max} admit an analytical solution is when the partitioned state ensemble $\enp$ is made up of two mutually unbiased bases (MUB) of the system Hilbert space $\hh$, or, more generally, smearings of two MUB. Indeed, suppose $\{\phii_h\}_{h\in\{1,\ldots,d\}}$ and $\{\psi_k\}_{k\in\{1,\ldots,d\}}$ is a fixed pair of MUB; then, we can use it to construct a partitioned state ensemble as follows. First, we choose $Z= \{1,\ldots,d\}\times\{\phii,\psi\}$ as the overall label set of the ensemble
\begin{equation}\label{eq:ensemble_MUB}
\en_\bmu (j,\ell) = \frac{1}{2d} \left[\mu_\ell\kb{\ell_j}{\ell_j} + (1-\mu_\ell) \frac{1}{d}\,\id\right] \,,
\end{equation}
where ${\bmu} = (\mu_\phii,\mu_\psi)$ and $\mu_\phii,\mu_\psi\in[1/(1-d),1]$ are real parameters. Next, we partition $Z$ into the subsets $X = \{(1,\phii),\ldots (d,\phii)\}$ and $Y = \{(1,\psi),\ldots (d,\psi)\}$; here, the letters $\phii$ and $\psi$ are just symbols, which are needed to distinguish labels in different subsets. Finally, we set $\enp_\bmu = (\en_\bmu,\{X,Y\})$.

The detailed solution to the optimization problems \eqref{eq:prior-max} and \eqref{eq:post-max} for the partitioned state ensemble $\enp_\bmu$ is provided in Supplemental Material. 
It turns out that the pair of measurements
\begin{equation}\label{eq:def_Alam_Bgam}
\begin{aligned}
\Ao (h,\phii) &= \gamma_\phii\kb{\phii_h}{\phii_h} + (1-\gamma_\phii) \frac{1}{d}\,\id \\\Bo (k,\psi) & = \gamma_\psi\kb{\psi_k}{\psi_k} + (1-\gamma_\psi) \frac{1}{d}\,\id
\end{aligned}
\end{equation}
is a feasible maximum point for a suitable choice of real numbers $\gamma_\phii$ and $\gamma_\psi$, which depend on $\bmu$. 
The next two theorems then follow by our earlier observations.

\begin{theorem}
\label{theo:tightMUBwit}
Let $\bmu = (\mu_\phii,\mu_\psi) \in [1/(1-d),1]\times [1/(1-d),1]$ with $\bmu\neq (0,0)$. Then $\Ppg(\enp_\bmu) < \Prg(\enp_\bmu)$ if and only if $\mu_\phii\mu_\psi\neq 0$ and either $d=2$ or $\max\{\mu_\phii,\mu_\psi\} > 0$. In this case, the tight incompatibility witness associated with the partitioned state ensemble $\enp_\bmu$ is
\begin{equation}\label{eq:MUBwitness}
\begin{aligned}
&\wit_{\enp_\bmu} (\Ao,\Bo) =  \frac{1}{4}\left[\mu_\phii + \mu_\psi + \sqrt{\mu_\phii^2 + \mu_\psi^2 - 2\left(1-\frac{2}{d}\right) \mu_\phii\mu_\psi} \right]  \\
& \quad\ - \frac{1}{2d} \sum_{j=1}^d \left[ \mu_\phii \ip{\phii_j}{\Ao(j,\phii)\phii_j} + \mu_\psi \ip{\psi_j}{\Bo(j,\psi)\psi_j} \right] \,.
\end{aligned}
\end{equation}
Finally, the ensembles $\enp_\bmu$ and $\enp_\bnu$ determine detection equivalent incompatibility witnesses if and only if $\bnu = \alpha\bmu$ for some $\alpha > 0$.
\end{theorem}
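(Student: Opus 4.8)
The plan is to reduce the entire statement to the two convex optimizations \eqref{eq:prior-max} and \eqref{eq:post-max} evaluated at $\enp_\bmu$, and then to read off (a) the strict inequality, (b) the closed form of the witness, and (c) the detection-equivalence classification from the framework of the previous sections. The starting point is that for the ensemble \eqref{eq:ensemble_MUB} one has $p(X)=p(Y)=\tfrac12$, and a direct computation gives
\begin{equation*}
\Prg(\enp_\bmu;\Ao,\Bo) = L_\bmu(\Ao,\Bo) + \tfrac{2-\mu_\phii-\mu_\psi}{2d}\,,
\end{equation*}
where $L_\bmu(\Ao,\Bo) = \tfrac{1}{2d}\sum_{j}\big[\mu_\phii\ip{\phii_j}{\Ao(j,\phii)\phii_j} + \mu_\psi\ip{\psi_j}{\Bo(j,\psi)\psi_j}\big]$. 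Hence both maxima optimize the same affine functional -- over $\OO$ for $\Prg(\enp_\bmu)$ and over $\cOO$ for $\Ppg(\enp_\bmu)$ -- and the offset $\tfrac{2-\mu_\phii-\mu_\psi}{2d}$ is common to both.

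For the premeasurement value I would solve the two decoupled minimum-error problems $\Pg(\en_X),\Pg(\en_Y)$: each subensemble is a basis state plus white noise, so the optimum is the sharp basis POVM when the corresponding $\mu_\ell\ge0$ and the ``anti-aligned'' POVM $\tfrac{1}{d-1}(\id-\kb{\phii_j}{\phii_j})$ when $\mu_\ell<0$. This sign dichotomy is exactly what later produces the hypotheses on $\bmu$ in (a). For the postmeasurement value I would maximize $L_\bmu$ over $\cOO$; using the covariance of the configuration under the group generated by the two MUB and the reduction of $\Ppg$ to a standard guessing probability from \cite{CaHeTo18}, the maximizer can be taken of the covariant form \eqref{eq:def_Alam_Bgam}, reducing the problem to maximizing a linear function of $(\gamma_\phii,\gamma_\psi)$ over the joint-measurability region of two noisy MUB. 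That region is bounded by a quadric whose cross term carries the overlap $\tfrac1d$ (for $d=2$ it is the disk $\gamma_\phii^2+\gamma_\psi^2\le1$), and its support function in the direction $\bmu$ is precisely the radical $\sqrt{\mu_\phii^2+\mu_\psi^2-2(1-\tfrac2d)\mu_\phii\mu_\psi}$; a Lagrange-multiplier computation yields both the optimal $(\gamma_\phii,\gamma_\psi)$ and the closed form of $\Ppg(\enp_\bmu)$.

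With both values available, (a) and (b) are short. For (b) I substitute $\Ppg(\enp_\bmu)$ into \eqref{eq:wit_enp}: the offset $\tfrac{2-\mu_\phii-\mu_\psi}{2d}$ cancels against the one in $\Prg(\enp_\bmu;\Ao,\Bo)$, leaving exactly \eqref{eq:MUBwitness}, and tightness holds automatically because the maximum $\Ppg(\enp_\bmu)$ is attained at the feasible compatible pair \eqref{eq:def_Alam_Bgam}. For (a) I compare the two closed forms; $\Ppg(\enp_\bmu)<\Prg(\enp_\bmu)$ reduces to an elementary algebraic inequality in $\mu_\phii,\mu_\psi,d$, which I check fails exactly on the degenerate configurations $\mu_\phii\mu_\psi=0$ and, for $d>2$, $\max\{\mu_\phii,\mu_\psi\}\le0$ -- precisely the cases in which the optimal premeasurement strategy is already compatible, so that no gap can arise.

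Part (c) is where I expect the real work. The ``if'' direction is a computation: the radical in \eqref{eq:MUBwitness} is positively homogeneous of degree one in $\bmu$ and $L_\bmu$ is linear in $\bmu$, so $\wit_{\enp_{\alpha\bmu}}=\alpha\,\wit_{\enp_\bmu}$ for $\alpha>0$, and scaling a witness by a positive constant preserves its sign and hence $\De{\wit_{\enp_\bmu}}$. For the ``only if'' direction the detected set is the open cap $\{(\Ao,\Bo)\in\OO : L_\bmu(\Ao,\Bo) > \Ppg(\enp_\bmu)-\tfrac{2-\mu_\phii-\mu_\psi}{2d}\}$, so detection equivalence forces the two caps cut from $\OO$ to coincide. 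The main obstacle is promoting this equality of caps to proportionality of the defining functionals: $\OO$ is not full dimensional in $\fun{X}\times\fun{Y}$ because of the normalizations $\sum_x\Ao(x)=\id$, so each $L_\bmu$ is determined only modulo linear maps constant on $\OO$. I would handle this by observing that the two partial functionals $\Ao\mapsto\sum_j\ip{\phii_j}{\Ao(j,\phii)\phii_j}$ and $\Bo\mapsto\sum_j\ip{\psi_j}{\Bo(j,\psi)\psi_j}$ stay linearly independent modulo such constants (one can vary either one while freezing the other), so equality of the supporting hyperplanes of the two tight witnesses forces $(\nu_\phii,\nu_\psi)$ to be a positive multiple of $(\mu_\phii,\mu_\psi)$. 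Should the passage from ``equal caps'' to ``equal hyperplanes'' prove delicate in low dimension, I would instead argue directly on the explicit contact pairs \eqref{eq:def_Alam_Bgam}, whose parameters $(\gamma_\phii,\gamma_\psi)$ depend injectively on the ray of $\bmu$, so that non-proportional $\bmu$ and $\bnu$ produce different contact faces and therefore different detected sets.
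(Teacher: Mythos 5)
Your overall architecture matches the paper's: compute $\Prg(\enp_\bmu)$ and $\Ppg(\enp_\bmu)$ in closed form, subtract, and settle the equivalence statement within the general witness framework. The decomposition $\Prg(\enp_\bmu;\Ao,\Bo)=L_\bmu(\Ao,\Bo)+(2-\mu_\phii-\mu_\psi)/(2d)$, the treatment of the premeasurement value via the sign dichotomy in $\mu_\ell$, and the ``if'' half of part (c) are all correct. The serious problem is your evaluation of $\Ppg(\enp_\bmu)$. First, the reduction of the maximization in \eqref{eq:post-max} to pairs of the covariant form \eqref{eq:def_Alam_Bgam} ``by covariance under the group generated by the two MUB'' is not available here: the theorem is stated for an \emph{arbitrary} pair of MUB, and for a generic such pair there is no symmetry group acting transitively on the relevant configuration and preserving both bases (this is precisely why the paper stresses that its derivation, unlike \cite{CaHeTo12}, requires no symmetry hypothesis on the bases). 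Second, even granting that reduction, you then maximize a linear functional over ``the joint-measurability region of two noisy MUB, bounded by a quadric'' --- but that quadric is Theorem~\ref{theo:boundary}, which the paper \emph{derives as a byproduct} of solving \eqref{eq:post-max}; before this work the region was known only for $d=2$, for Fourier-conjugate MUB with nonnegative noise, and on the diagonal $\gamma_\phii=\gamma_\psi$. Taking it as input makes the argument circular. The paper instead converts $\Ppg(\enp_\bmu)$ into a standard guessing probability for the auxiliary ensemble $\enf(x,y)\propto\en(x)+\en(y)$ (Thm.~2 of \cite{CaHeTo18}), explicitly diagonalizes the operators $\mu_\phii\kb{\phii_h}{\phii_h}+\mu_\psi\kb{\psi_k}{\psi_k}$, and verifies an eigenprojection criterion (Prop.~2 of \cite{CaHeTo18}) certifying the optimality of one concrete joint measurement; that its margins have the form \eqref{eq:def_Alam_Bgam} is then a \emph{conclusion}, not an ansatz. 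Your two steps would have to be replaced by something of this kind.

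Two smaller gaps. In the regime $d\geq 3$, $\max\{\mu_\phii,\mu_\psi\}\leq 0$, your remark that ``the optimal premeasurement strategy is already compatible'' presupposes that the corner pair with $\bgam_0=(1/(1-d),1/(1-d))$ is compatible; this is true but not obvious, and in the paper it again falls out of the explicit optimal joint measurement rather than being assumed. In part (c), the ``only if'' direction hinges on promoting equality of detected sets to proportionality of the representing functionals modulo maps constant on $\OO$; you correctly flag this as the crux, but the actual argument (Props.~S2 and S4 of the supplement) requires the fact that $\cOO$ meets the relative interior of $\OO$ (the pair of uniform measurements is such an interior point), and your fallback via contact faces still implicitly relies on the same lemma, since distinct contact sets rule out detection equivalence only once one knows that detection-equivalent tight witnesses share their zero set on $\cOO$.
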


By the equivalence statement in the previous theorem, no generality is lost if we express the vector $\bmu$ in terms of a single real parameter $\theta$. Consequently, also the vector $\bgam = (\gamma_\phii,\gamma_\psi)$ parametrizing the optimal measurements \eqref{eq:def_Alam_Bgam} becomes a function of $\theta$. Thus, solving the optimization problem \eqref{eq:post-max} for the present case actually yields a curve in the relative boundary $\partial\cOO$.

\begin{theorem}\label{theo:boundary}
The pair of measurements $(\Ao,\Bo)$ of \eqref{eq:def_Alam_Bgam} lies on the relative boundary $\partial\cOO$ if
\begin{align}\label{eq:boundary}
\bgam = \left(\frac{d-2-d\cos(\theta+\theta_0)}{2(d-1)} \,,\,\frac{d-2-d\cos(\theta-\theta_0)}{2(d-1)} \right)
\end{align}
for $\theta \in[-\theta_0,\theta_0]$ and $\theta_0 = \pi-\arctan\sqrt{d-1}$.
\end{theorem}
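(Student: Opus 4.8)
The plan is to exploit the convex-geometric remark made just before the statement: since $\cOO$ is compact and convex and the map $\Prg(\enp_\bmu;\cdot)$ is linear, any maximizer of the postmeasurement problem \eqref{eq:post-max} that is not forced into the relative interior must lie on $\partial\cOO$. I would therefore first take as given, from the solution of \eqref{eq:post-max} carried out in the Supplemental Material, that for every admissible $\bmu$ the value $\Ppg(\enp_\bmu)$ is attained at a pair of the noisy-MUB form \eqref{eq:def_Alam_Bgam}, with the parameters $\gamma_\phii,\gamma_\psi$ determined by $\bmu$. The theorem then splits into two tasks: (i) certifying that such an optimizer sits on the relative boundary, and (ii) rewriting the dependence $\bgam=\bgam(\bmu)$ in the claimed trigonometric parametrization.

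For (i), I would note that $\Prg(\enp_\bmu;\cdot)$ is non-constant on $\cOO$ whenever $\bmu\neq(0,0)$, since already the trivial pair $\Ao=\Bo=\tfrac1d\id$ and a weakly sharpened, still compatible, noisy-MUB pair yield different values. A non-constant linear functional on a compact convex set cannot attain its maximum at a relative-interior point, because one could move along the gradient and remain inside; hence the optimizer belongs to $\partial\cOO$. Equivalently, by Thm.~\ref{theo:tightMUBwit} the tight witness $\wit_{\enp_\bmu}$ is nonnegative on $\cOO$ and vanishes exactly on the maximizing pair, so that pair is a contact point of a supporting hyperplane, and since $\cOO$ is not contained in the hyperplane the contact set is a proper face lying in $\partial\cOO$.

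For (ii), I would invoke the equivalence statement of Thm.~\ref{theo:tightMUBwit}: because $\enp_\bmu$ and $\enp_{\alpha\bmu}$ give detection-equivalent witnesses, only the direction of $\bmu$ matters, and I would encode that direction by a single angle $\theta$. Substituting the resulting one-parameter family into the closed form $\bgam(\bmu)$ coming from \eqref{eq:post-max} and simplifying should reproduce \eqref{eq:boundary}; letting $\theta$ sweep the admissible directions of $\bmu$ traces out the whole curve, which settles the ``if'' claim. The hard part will be this last simplification: I expect the square-root term in \eqref{eq:MUBwitness}, namely $\sqrt{\mu_\phii^2+\mu_\psi^2-2(1-2/d)\mu_\phii\mu_\psi}$, to be a law-of-cosines expression for an auxiliary angle $\omega$ with $\cos\omega=1-2/d$, and tracking $\omega$ is what produces the phase shift $\theta_0$ via the identity $\arctan\sqrt{d-1}+\arctan(1/\sqrt{d-1})=\pi/2$, while simultaneously pinning the range $\theta\in[-\theta_0,\theta_0]$ as precisely the set of directions for which Thm.~\ref{theo:tightMUBwit} guarantees $\Ppg(\enp_\bmu)<\Prg(\enp_\bmu)$. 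As a consistency check I would verify $d=2$, where $\theta_0=3\pi/4$ and \eqref{eq:boundary} collapses to the circle $\gamma_\phii^2+\gamma_\psi^2=1$, recovering the standard qubit joint-measurability boundary.
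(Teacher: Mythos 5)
Your overall strategy coincides with the paper's: solve the postmeasurement optimization \eqref{eq:post-max} to exhibit an optimal compatible pair of the noisy-MUB form \eqref{eq:def_Alam_Bgam} with $\bgam$ an explicit function of $\bmu$, observe that a maximizer of a non-constant linear functional over the compact convex set $\cOO$ must sit on $\partial\cOO$ (this is exactly the paper's boundary proposition, proved via the support-function lemma), and then trade $\bmu$ for a single angle $\theta$ using the scale invariance guaranteed by the equivalence statement of Thm.~\ref{theo:tightMUBwit}. The paper's concrete choice is an elliptical parametrization of $\bmu(\theta)$ that keeps the $G$-norm $\sqrt{\mu_\phii^2+\mu_\psi^2-2(1-2/d)\mu_\phii\mu_\psi}$ constant, which is precisely the law-of-cosines bookkeeping you anticipate; substituting into the closed form for $\bgam(\bmu)$ then yields \eqref{eq:boundary}. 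On the open interval $\theta\in(-\theta_0,\theta_0)$ (and for all $\theta$ when $d=2$) your argument goes through as written.

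The gap is at the endpoints $\theta=\pm\theta_0$ for $d\geq 3$, which the theorem explicitly includes. There one of $\mu_\phii,\mu_\psi$ vanishes and the other is nonpositive, the spectral structure of the auxiliary ensemble changes, and the margins of the optimal joint measurement are given by a \emph{different} formula ($\gamma_\ell=(1-\delta_{\mu_\ell,0})/(1-d)$), which does \emph{not} reproduce the point \eqref{eq:boundary} (at $\theta=\theta_0$ the curve gives $\bgam=((d-2)/(d-1),\,1/(1-d))$, while the optimal margins give $(0,\,1/(1-d))$). Moreover $\Ppg(\enp_\bmu)=\Prg(\enp_\bmu)$ at these values, so Thm.~\ref{theo:tightMUBwit} provides no witness and your supporting-hyperplane variant has nothing to bite on; nor is the endpoint of the curve known to be an optimizer of \eqref{eq:post-max}, so the "maximizers lie on the relative boundary" argument does not apply to it directly. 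The paper closes this by a one-line continuity argument: the curve points for $\theta\in(-\theta_0,\theta_0)$ lie in $\partial\cOO$, the map $\theta\mapsto\bgam(\theta)$ is continuous, and $\partial\cOO$ is closed, hence the endpoints belong to $\partial\cOO$ as well. You need to add this step (and, correspondingly, note that "sweeping the admissible directions" only produces the open arc directly). The rest of your outline, including the $d=2$ check that \eqref{eq:boundary} collapses to the unit circle, is consistent with the paper's computation.
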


When $\theta = 0$, the common value of the two components of \eqref{eq:boundary} is the {\em noise robustness} of the two MUB at hand; it was already derived by different methods in \cite{UoLuMoHe16,DeSkFrBr18}. On the other hand, under the assumption that the two MUB are Fourier conjugate, the portion of the curve \eqref{eq:boundary} with $\gamma_\phii > 0$ and $\gamma_\psi > 0$ was found in \cite{CaHeTo12}.

\begin{figure}[h!]
\centering
\includegraphics[width=0.47\textwidth]{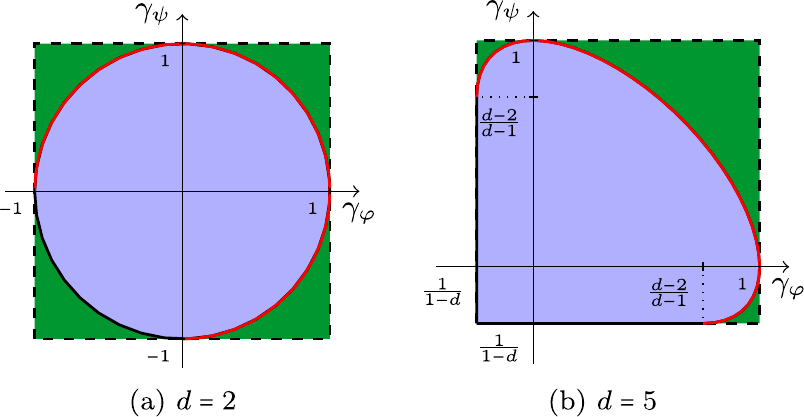}
\caption{The set of $\bgam = (\gamma_\phii,\gamma_\psi)$ for which Eq.~\eqref{eq:def_Alam_Bgam} defines two measurements (green square), and the one for which these measurements are compatible (blue region) for different values of the dimension $d$. The red line is the curve \eqref{eq:boundary}. The case $d=2$ is special, and was already treated in \cite{Busch86}.\label{fig:MUBregion}}
\end{figure}

The operators in \eqref{eq:def_Alam_Bgam} are positive if and only if $\bgam\in [1/(1-d)\,,\,1]\times [1/(1-d)\,,\,1]$. Thus, all pairs of measurements of the form \eqref{eq:def_Alam_Bgam} constitute a square-shaped section of the set $\OO$. Remarkably, the lower-left vertex $(1/(1-d)\,,\,1/(1-d))$ of this square corresponds to a compatible pair of measurements if and only if $d\geq 3$; on the contrary, when $d=2$ the relative boundary is symmetric around $(0,0)$ \cite{Busch86}. Combining these considerations and Thm.~\ref{theo:boundary}, we can give a partial inspection of the two sets $\OO$ and $\cOO$, as shown in Fig.~\ref{fig:MUBregion}.

{\em Discussion.---} The framework of witnesses is an effective tool in the detection of properties described by sets with compact and convex complements.
We have shown that for incompatibility of measurements, witnesses are not only a mathematical tool, but can be implemented in simple discrimination experiments. 
An important feature of this implementation  is that it does not require entanglement.

Our characterization yields a novel operational interpretation of incompatibility: a collection of measurements is incompatible if and only if there is a state discrimination task where premeasurement information is strictly better than postmeasurement information. 

Entanglement witnesses have been used not only to detect entanglement but also to quantify entanglement \cite{EiBrAu07}. 
Further, one can drop the condition (W2) and consider nonlinear witnesses \cite{GuLu06}. 
These and other modifications or generalizations will be an interesting matter of investigation in the case of incompatibility witnesses.


\begin{acknowledgments}

\emph{Acknowledgments.---} The authors thank Anna Jen{\v{c}}ov\'a for the valuable discussion on incompatibility witnesses.
T.~H.~acknowledges financial support from the Academy of Finland via the Centre of Excellence program (Grant No.~312058) as well as Grant No.~287750.
\end{acknowledgments}

\emph{Note added.---} Recently we have been informed about two works that contain related results \cite{UoKrShYuGu18,SkSuCa19}.

\clearpage
\newpage

\onecolumngrid

\renewcommand{\V}{\mathcal{V}}
\renewcommand{\iOO}{\overline{\ca{O}^{\mathrm{com}}_{X,Y}}}
\setcounter{equation}{0}
\setcounter{figure}{0}
\setcounter{definition}{0}
\setcounter{theorem}{0}
\setcounter{proposition}{0}
\setcounter{lemma}{0}
\setcounter{section}{0}
\renewcommand{\theequation}{S\arabic{equation}}
\renewcommand{\thedefinition}{S\arabic{definition}}
\renewcommand{\thetheorem}{S\arabic{theorem}}
\renewcommand{\theproposition}{S\arabic{proposition}}
\renewcommand{\thelemma}{S\arabic{lemma}}

\begin{center}
\textbf{\Large Supplementary Material: Quantum Incompatibility Witnesses}
\end{center}

\bigskip

This supplement is planned as follows. In Sec.~\ref{sec:general_certification}, we provide the general framework for witnesses, recalling the basic related notions from convex analysis and deriving some useful results; in particular, by making a natural assumption upon the convex set to be detected, we fully characterize the equivalence classes of witnesses under the detection equivalence relation. In Sec.~\ref{sec:inco_wit}, these results are applied to incompatibility witnesses; to do it, we describe the convex structure of the set of all pairs of measurements and its subset of all the pairs that are compatible. Finally, in Sec.~\ref{sec:MUB_suppl}, we solve the optimization problem for the guessing probability with postmeasurement information for a state ensemble constructed by means of two mutually unbiased bases; we thus provide the detailed proofs of Thms.~3 and~4 of the main paper.

{\bf Notation:} In this supplement, the numberings of equations, theorems etc.~are preceeded by the letter `S' (e.g.: Eq.~(S1), Thm.~S1 etc.). When we refer to results in the main paper, we simply drop the `S' (Eq.~(1), Thm.~1 etc.).

\section{Witnesses for general convex sets}\label{sec:general_certification}

\subsection{Preliminaries from convex analysis}\label{subsec:convex_preliminaries}

In the following, we will need some standard terminology and notations from convex analysis. We refer to \cite{SRock} for further details.

Suppose $\V$ is a finite dimensional, real and normed linear space. If $\V_0\subseteq\V$ is a linear subspace, we denote by $\V_0^*$ the dual linear space of $\V_0$, and $\pair{v}{v^*}$ the canonical pairing between an element $v\in\V_0$ and a dual vector $v^*\in\V_0^*$.

We recall that
\begin{enumerate}[-]
\item an {\em affine} [respectively, {\em convex}] {\em combination} of elements $v_1,\ldots,v_n\in\V$ is any linear combination $\lam_1 v_1 + \ldots + \lam_n v_n$ such that $\lam_1 + \ldots + \lam_n = 1$ [resp., such that $\lam_1 + \ldots + \lam_n = 1$ and $\lam_i\in [0,1]$ for all $i=1,\ldots,n$];
\item an {\em affine} [resp., {\em convex}] {\em set} is any subset $\Sscr\subseteq\V$ such that all affine [resp., convex] combinations of elements of $\Sscr$ are still contained in $\Sscr$;
\item an {\em affine} [resp., {\em c-affine}] {\em  map} on an affine [resp., convex] set $\Sscr$ is a function $\wit:\Sscr\to\Rb$ such that
$$
\wit(\lam_1 v_1 + \ldots + \lam_n v_n) = \lam_1\wit(v_1) + \ldots + \lam_n \wit(v_n)
$$
for all $v_1,\ldots,v_n\in\Sscr$ and any affine [resp., convex] combination $\lam_1 v_1 + \ldots + \lam_n v_n$.
\end{enumerate}
If $\Mscr$ is an affine set, there is a unique vector subspace $\V(\Mscr)\subseteq\V$ such that $\Mscr = v_0 + \V(\Mscr)$ for some (hence for any) $v_0\in\Mscr$. Moreover, if $\wit:\Mscr\to\Rb$ is an affine map and $v_0\in\Mscr$ is fixed, there exist unique $v_\wit^*\in\V(\Mscr)^*$ and $\delta_\wit\in\Rb$ such that $\wit(v_0 + v) = \delta_\wit - \pair{v}{v_\wit^*}$ for all $v\in\V(\Mscr)$. In particular, given an affine map $\Wit:\V\to\Rb$, there exist unique $v^*\in\V^*$ and $\delta\in\Rb$ such that
\equasi{eq:general_functional}{
\Wit(v) = \delta-\pair{v}{v^*} =: \Wit[v^*,\delta](v) \qquad \forall v\in\V \,.
}
By picking any $v^*\in\V^*$ whose restriction to $\V(\Mscr)$ coincides with $v_\wit^*$ and choosing $\delta = \delta_\wit - \pair{v_0}{v^*}$, we see that the previous affine map $\wit:\Mscr\to\Rb$ extends to the affine map $\Wit[v^*,\delta]:\V\to\Rb$. Clearly, this extension is not unique unless $\V(\Mscr) = \V$.

Now, suppose $\Cscr\subseteq\V$ is a convex set. Then, the {\em affine hull} of $\Cscr$ is the smallest affine set $\Aff{\Cscr}$ containing $\Cscr$; equivalently, it is the set of all affine combinations of elements of $\Cscr$. We abbreviate $\V(\Cscr) = \V(\Aff{\Cscr})$; further, we introduce the following {\em annihilator subspace} of $\V(\Cscr)$ in $\V^*$\,:
$$
\V(\Cscr)^\perp = \{v^*\in\V^*\mid \pair{v}{v^*} = 0 \ \ \forall v\in\V(\Cscr)\} \,.
$$
Any c-affine map $\wit:\Cscr\to\Rb$ uniquely extends to an affine map $\tilde{\wit}:\Aff{\Cscr}\to\Rb$; as we have already seen, such a map $\tilde{\wit}$ can be further extended to an affine map $\Wit:\V\to\Rb$, the latter extension being in general not unique. Actually, the first assertion is a particular case of the following more general result \cite{SVesely}.

\begin{proposition}\label{prop:extension}
Suppose $\Cscr_1$ and $\Cscr_2$ are convex sets, and let $\phi : \Cscr_1\to \Cscr_2$ be a map such that $\phi(\lam_1 v_1 + \ldots + \lam_n v_n) = \lam_1\phi(v_1) + \ldots + \lam_n\phi(v_n)$ for all convex combinations of elements $v_1,\ldots,v_n\in\Cscr_1$. Then, there exists a unique map $\tilde{\phi} : \Aff{\Cscr_1}\to \Aff{\Cscr_2}$ such that the restriction $\tilde{\phi}\big|_{\Cscr_1}$ coincides with $\phi$ and $\tilde{\phi}(\lam_1 v_1 + \ldots + \lam_n v_n) = \lam_1\tilde{\phi}(v_1) + \ldots + \lam_n\tilde{\phi}(v_n)$ for all affine combinations of elements $v_1,\ldots,v_n\in\Aff{\Cscr_1}$.
\end{proposition}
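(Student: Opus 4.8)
The plan is to define the extension by the only formula an affine map can possibly have, namely $\tilde{\phi}(\lam_1 v_1 + \ldots + \lam_n v_n) = \lam_1\phi(v_1) + \ldots + \lam_n\phi(v_n)$ for affine combinations of points $v_i\in\Cscr_1$, and then to verify that this prescription is well defined. Uniqueness can be disposed of first and is immediate: every point of $\Aff{\Cscr_1}$ is, by definition of the affine hull, an affine combination of elements of $\Cscr_1$, and any affine map agreeing with $\phi$ on $\Cscr_1$ must respect such combinations, so its values on $\Aff{\Cscr_1}$ are forced. Note also that the right-hand side of the defining formula is itself an affine combination of points of $\Cscr_2$, hence automatically lands in $\Aff{\Cscr_2}$; so the only real issue is that a point $p\in\Aff{\Cscr_1}$ admits many representations as an affine combination, and I must show they all give the same value.

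To prove well-definedness I would first normalize each representation. Given $p = \sum_i \lam_i v_i$ with $\sum_i \lam_i = 1$ and $v_i\in\Cscr_1$, I split the index set according to the sign of $\lam_i$ and set $s = \sum_{\lam_i\ge 0}\lam_i$ and $t = -\sum_{\lam_i<0}\lam_i$, so that $s - t = 1$, $s\ge 1$ and $t\ge 0$. Grouping the positive and the negative terms and dividing by $s$ and by $t$ respectively exhibits two genuine convex combinations $a,b\in\Cscr_1$ with $p = s\,a - t\,b$; because $\phi$ preserves convex combinations, the candidate value collapses to $\sum_i\lam_i\phi(v_i) = s\,\phi(a) - t\,\phi(b)$. (When $t=0$ the point $p$ already lies in $\Cscr_1$ and the $b$-term simply drops, with $b$ taken to be any point of $\Cscr_1$.)

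The heart of the argument is then to show that two normalized representations $p = s\,a - t\,b = s'\,a' - t'\,b'$, with $s-t = s'-t' = 1$, give the same value. The key observation is that rearranging the equality yields $s\,a + t'\,b' = s'\,a' + t\,b$, where both sides are nonnegative combinations of points of $\Cscr_1$ with the \emph{same} total weight $T := 1 + t + t'$ (using $s=1+t$ and $s'=1+t'$); dividing by $T$ therefore produces an identity between two honest convex combinations of points of $\Cscr_1$, equal to a single point $q\in\Cscr_1$. Applying the convexity-preservation of $\phi$ to both sides gives $s\,\phi(a) + t'\,\phi(b') = \phi(q) = s'\,\phi(a') + t\,\phi(b)$, and moving the primed and unprimed terms back across yields $s\,\phi(a) - t\,\phi(b) = s'\,\phi(a') - t'\,\phi(b')$, which is exactly the required equality of values. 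This step is where the defining property of $\phi$ is genuinely used, and I expect it to be the main (indeed the only) real obstacle.

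Finally, with $\tilde{\phi}$ well defined, checking that it preserves arbitrary affine combinations of points of $\Aff{\Cscr_1}$ is routine: expanding each argument as an affine combination of points of $\Cscr_1$ and collecting coefficients shows that an affine combination of affine combinations is again an affine combination of $\Cscr_1$-points whose weights sum to one, so the defining formula applied to this single expansion delivers affinity, and the restriction to $\Cscr_1$ is $\phi$ by construction. This completes the plan.
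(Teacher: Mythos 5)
Your proof is correct and follows essentially the same route as the paper's: the well-definedness argument hinges on the identical trick of moving the negatively weighted terms to the opposite side so that both sides of the resulting identity become convex combinations of points of $\Cscr_1$ with equal total weight (your $T=1+t+t'$ is exactly the paper's normalizing constant $\sigma$), after which the convexity-preservation of $\phi$ is applied and the terms are moved back. Your preliminary normalization of each representation into the two-point form $s\,a-t\,b$ is only a cosmetic repackaging of the paper's direct manipulation of the individual coefficients, and the uniqueness and affinity steps are handled the same way in both.
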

\begin{proof}
Any element of the affine hull $\Aff{\Cscr_i}$ is an affine combination of elements of $C_i$. Then, if $x = \lam_1 v_1 + \ldots + \lam_n v_n \in\Aff{\Cscr_1}$ is any affine combination of $v_1,\ldots,v_n\in\Cscr_1$, we define
$$
\tilde{\phi}(x) = \lam_1 \phi(v_1) + \ldots + \lam_n \phi(v_n) \,.
$$
First of all, we claim that this definition is independent of the chosen representation of $x$. Indeed, suppose $x = \mu_1 w_1 + \ldots + \mu_m w_m$ for some other $w_1,\ldots,v_m\in\Cscr_1$ and $\mu_1,\ldots,\mu_m\in\Rb$ with $\mu_1 + \ldots + \mu_m = 1$. We set
$$
\sigma = \sum_{i \mid \lam_i > 0} \lam_i - \sum_{j \mid \mu_j < 0} \mu_j = \sum_{j \mid \mu_j > 0} \mu_j - \sum_{i \mid \lam_i < 0} \lam_i \,,
$$
and observe that $\sigma > 0$ since at least one among $\lam_1,\ldots,\lam_n$ is necessarily strictly positive. Then, as
$$
\sum_{i \mid \lam_i > 0} \frac{\lam_i}{\sigma}\, v_i + \sum_{j \mid \mu_j < 0} \frac{-\mu_j}{\sigma}\, w_j = \sum_{j \mid \mu_j > 0} \frac{\mu_j}{\sigma}\, w_j + \sum_{i \mid \lam_i < 0} \frac{-\lam_i}{\sigma}\, v_i \,,
$$
and the two sides of the latter equation are convex combinations of elements of $\Cscr_1$, it follows that
$$
\sum_{i=1}^n \lam_i \phi(v_i) - \sum_{j=1}^m \mu_j \phi(w_j)  = \sigma \left[\phi\left( \sum_{i \mid \lam_i > 0} \frac{\lam_i}{\sigma}\, v_i + \sum_{j \mid \mu_j < 0} \frac{-\mu_j}{\sigma}\, w_j \right) - \phi\left( \sum_{j \mid \mu_j > 0} \frac{\mu_j}{\sigma}\, w_j + \sum_{i \mid \lam_i < 0} \frac{-\lam_i}{\sigma}\, v_i \right)\right] = 0 \,,
$$
thus showing that $\tilde{\phi}$ is well defined.\\
Secondly, we prove that $\tilde{\phi}$ is an affine map, that is
\equasi{eq:def_tilde_phi}{\tag{$\circ$}
\tilde{\phi}\left(\sum_{i=1}^n \lam_i u_i\right) = \sum_{i=1}^n \lam_i \tilde{\phi}(u_i) \quad \text{for all} \quad u_1,\ldots, u_n\in\Aff{C_1} \quad \text{and} \quad \lam_1,\ldots,\lam_n\in\Rb \quad \text{with} \quad \sum_{i=1}^n \lam_i = 1 \,.
}
To do it, write each $u_i$ as an affine combination $u_i = \mu_{i,1}v_{i,1} + \ldots + \mu_{i,m_i}v_{i,m_i}$ of elements $v_{i,1},\ldots, v_{i,m_i}\in\Cscr_1$. Then,
$$
\sum_{i=1}^n \lam_i u_i = \sum_{i=1}^n \sum_{j=1}^{m_i} \lam_i \mu_{i,j} v_{i,j} \,,
$$
where the right hand side is an affine combination of the elements $v_{i,j}\in\Cscr_1$. Hence,
$$
\tilde{\phi}\left(\sum_{i=1}^n \lam_i u_i\right) = \sum_{i=1}^n \sum_{j=1}^{m_i} \lam_i \mu_{i,j} \phi(v_{i,j}) \qquad \text{and} \qquad \tilde{\phi}(u_i) = \sum_{j=1}^{m_i} \mu_{i,j} \phi(v_{i,j})
$$
by definition of $\tilde{\phi}$. This proves \eqref{eq:def_tilde_phi}.\\
Finally, the equality $\tilde{\phi}\big|_{\Cscr_1} = \phi$ also follows by the very definition of $\tilde{\phi}$.
\end{proof}

The {\em relative interior} $\ri{\Cscr}$ of the convex set $\Cscr$ is the set of all its interior points with respect to the relative topology of $\Aff{\Cscr}$. The {\em relative boundary} of $\Cscr$ is the set-theoretic difference $\partial\Cscr = \Cscr\setminus\ri{\Cscr}$. An element $z\in\Cscr$ is an {\em extreme point} of $\Cscr$ if the equality $z=\lam x + (1-\lam) y$ with $x,y\in\Cscr$ and $\lam\in(0,1)$ implies $x=y=z$.

If $\Cscr$ is a compact convex set, its {\em support function} is the map
$$
\delta_\Cscr : \V^*\to\Rb\,, \qquad \qquad \delta_\Cscr(v^*) = \max\{\pair{x}{v^*}\mid x\in\Cscr\} \,.
$$
Clearly, if $\Cscr_0\subseteq\Cscr$ is a compact convex subset, then $\delta_{\Cscr_0}(v^*) \leq \delta_\Cscr(v^*)$ for all $v^*\in V^*$. The support function satisfies $\delta_\Cscr(\alpha v^*) = \alpha \delta_\Cscr(v^*)$ for all $\alpha > 0$; moreover, if $v^*-u^*\in\V(\Cscr)^\perp$, then $\delta_\Cscr(v^*) - \delta_\Cscr(u^*) = \pair{v_0}{v^*-u^*}$ for some (hence for all) $v_0\in\Aff{\Cscr}$.

\subsection{Detecting a convex property}

We are interested in a set of objects $\Cscr$, in which we consider the subset $\Cscr_0\subset\Cscr$ of all the objects sharing some given property. We assume $\emptyset\neq\Cscr_0\neq\Cscr$, and we denote by $\overline{\Cscr}_0 = \Cscr\setminus\Cscr_0$ the subset of all the objects which {\em do not} possess the property at hand; we want to find simple sufficient conditions guaranteeing that an object $x\in\Cscr$ actually belongs to $\overline{\Cscr}_0$.

In the following, we always suppose that both $\Cscr$ and $\Cscr_0$ are convex and compact subsets of a finite dimensional, real and normed linear space $\V$. Then, the simplest conditions involve some specific c-affine map $\wit:\Cscr\to\Rb$, related to both $\Cscr$ and $\Cscr_0$, and the value that $\wit$ takes at $x\in\Cscr$.
\begin{definition}\label{def:Cc_witness}
A {\em $\overline{\Cscr}_0$-witness} on the convex set $\Cscr$ is a c-affine map $\wit:\Cscr\to\Rb$ such that
\begin{enumerate}[(i)]
\item $\wit(y)\geq 0$ for all $y\in\Cscr_0$;\label{it:1_def:Cc_witness}
\item $\wit(x) < 0$ for some $x\in\overline{\Cscr}_0$.\label{it:2_def:Cc_witness}
\end{enumerate}
A $\overline{\Cscr}_0$-witness is {\em tight} if it satisfies the further condition
\begin{enumerate}[(i)]\setcounter{enumi}{2}
\item $\wit(z) = 0$ for some $z\in\Cscr_0$.\label{it:3_def:Cc_witness}
\end{enumerate}
\end{definition}
If $\wit$ is a $\overline{\Cscr}_0$-witness, the inequality $\wit(x)<0$ entails that $x\in\Cscr$ does not possess the property $\Cscr_0$. The set $\De{\wit} = \{x\in\Cscr\mid\wit(x)<0\}$ is thus the subset of all the objects of $\overline{\Cscr}_0$ which are {\em detected} by $\wit$. Note that there always exists a tight $\overline{\Cscr}_0$-witness $\wit'$ detecting at least as many objects of $\overline{\Cscr}_0$ as $\wit$. Namely, it is enough to set $\wit'(x) = \wit(x)-\min\{\wit(y)\mid y\in\Cscr_0\}$. In general, whenever two $\overline{\Cscr}_0$-witnesses $\wit$ and $\wit'$ are such that $\De{\wit}\subseteq\De{\wit'}$, we say that $\wit'$ is {\em finer} than $\wit$, and in this case we write $\wit\preccurlyeq\wit'$.  Moreover, if $\De{\wit} = \De{\wit'}$, we say that $\wit$ and $\wit'$ are {\em detection equivalent} and write $\wit\approx\wit'$.

\subsection{Structure of $\overline{\Cscr}_0$-witnesses}

\begin{proposition}\label{prop:cond_delta}
A c-affine map $\wit:\Cscr\to\Rb$ is a $\overline{\Cscr}_0$-witness if and only if there exist $v^*\in\V^*$ and $\delta\in\Rb$ such that
\begin{enumerate}[(i)]
\item $\delta_{\Cscr_0}(v^*) \leq \delta < \delta_\Cscr(v^*)$;\label{it:1_cond_delta}
\item $\wit(x) = \delta - \pair{x}{v^*}$ for all $x\in\Cscr$.\label{it:2_cond_delta}
\end{enumerate}
In \eqref{it:1_cond_delta}, the equality is attained if and only if $\wit$ is tight.
\end{proposition}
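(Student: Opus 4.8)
The plan is to prove the two implications of the equivalence separately, in each case translating the defining conditions \eqref{it:1_def:Cc_witness}--\eqref{it:2_def:Cc_witness} of a $\overline{\Cscr}_0$-witness (Def.~\ref{def:Cc_witness}) into statements about the support functions $\delta_\Cscr$ and $\delta_{\Cscr_0}$. The bridge between the two formulations is the extension result of the preliminaries (Prop.~\ref{prop:extension} and the discussion around \eqref{eq:general_functional}): any c-affine map on the convex set $\Cscr$ extends to an affine map on all of $\V$, which has the form $x\mapsto\delta-\pair{x}{v^*}$ for suitable $v^*\in\V^*$ and $\delta\in\Rb$. Fixing any such representation is precisely condition \eqref{it:2_cond_delta}, so the entire argument reduces to matching \eqref{it:1_cond_delta} against the witness axioms, and then keeping track of the case of equality.

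For the ``if'' direction I would start from a pair $v^*,\delta$ satisfying \eqref{it:1_cond_delta} and set $\wit(x)=\delta-\pair{x}{v^*}$, which is manifestly c-affine, so that \eqref{it:2_cond_delta} holds by construction. The left inequality $\delta_{\Cscr_0}(v^*)\leq\delta$ says exactly that $\pair{y}{v^*}\leq\delta$ for every $y\in\Cscr_0$, i.e.\ $\wit(y)\geq 0$ on $\Cscr_0$, which is \eqref{it:1_def:Cc_witness}. For \eqref{it:2_def:Cc_witness}, the strict inequality $\delta<\delta_\Cscr(v^*)$ furnishes a point $x^\ast\in\Cscr$ attaining the maximum $\pair{x^\ast}{v^*}=\delta_\Cscr(v^*)>\delta$ (the max is attained since $\Cscr$ is compact), whence $\wit(x^\ast)<0$. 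The one point requiring care is that $x^\ast$ must lie in $\overline{\Cscr}_0$ and not in $\Cscr_0$; but this is automatic, since $\wit\geq 0$ on $\Cscr_0$ forces $x^\ast\notin\Cscr_0$.

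For the ``only if'' direction I would use the extension result to write $\wit(x)=\delta-\pair{x}{v^*}$ on $\Cscr$, which is \eqref{it:2_cond_delta}. Reading \eqref{it:1_def:Cc_witness} back through this formula gives $\pair{y}{v^*}\leq\delta$ for all $y\in\Cscr_0$, hence $\delta_{\Cscr_0}(v^*)\leq\delta$; reading \eqref{it:2_def:Cc_witness} gives a point of $\Cscr$ with $\pair{x}{v^*}>\delta$, hence $\delta_\Cscr(v^*)>\delta$. Together these yield \eqref{it:1_cond_delta}, completing the equivalence.

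For the tightness statement I would exploit that for every $y\in\Cscr_0\subseteq\Cscr$ one has $\pair{y}{v^*}=\delta-\wit(y)$, so that $\delta_{\Cscr_0}(v^*)=\delta-\min_{y\in\Cscr_0}\wit(y)$, the minimum being attained because $\Cscr_0$ is compact. Hence $\delta_{\Cscr_0}(v^*)=\delta$ holds exactly when $\min_{y\in\Cscr_0}\wit(y)=0$, i.e.\ when $\wit(z)=0$ for some $z\in\Cscr_0$, which is the definition of tightness \eqref{it:3_def:Cc_witness}. A reassuring feature is that $\delta_{\Cscr_0}(v^*)-\delta$ depends only on $\wit$ and not on the (non-unique) representing pair $(v^*,\delta)$, so the equivalence is well posed. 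All in all the proof is essentially bookkeeping: the only genuinely non-trivial input is the affine extension supplying \eqref{it:2_cond_delta}, together with the two appeals to compactness ensuring the relevant maxima are attained, and I expect the sole subtle point to be the verification that the detected maximizer $x^\ast$ belongs to $\overline{\Cscr}_0$.
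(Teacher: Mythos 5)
Your proof is correct and follows essentially the same route as the paper: extend the c-affine map to an affine functional $x\mapsto\delta-\pair{x}{v^*}$ on $\V$ and translate the witness axioms directly into the support-function inequalities $\delta_{\Cscr_0}(v^*)\leq\delta<\delta_\Cscr(v^*)$, with equality on the left corresponding to tightness. Your explicit check that the maximizer $x^\ast$ lies in $\overline{\Cscr}_0$ is a point the paper leaves implicit, but it is the same argument.
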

\begin{proof}
Any c-affine map $\wit$ on $\Cscr$ extends to an affine map $\Wit = \Wit[v^*,\delta]$ on $\V$, and then $\wit(x) = \delta - \pair{x}{v^*} \ \ \forall x\in\Cscr$ by \eqref{eq:general_functional}. In this case, $\wit$ is a $\overline{\Cscr}_0$-witness if and only if, for all $y\in\Cscr_0$ and some $x\in\overline{\Cscr}_0$,
$$
\Wit[v^*,\delta](y) \geq 0 > \Wit[v^*,\delta](x) \quad \Leftrightarrow \quad \pair{y}{v^*} \leq \delta < \pair{x}{v^*} \,.
$$
This is equivalent to $\delta_{\Cscr_0}(v^*) \leq \delta < \delta_\Cscr(v^*)$, where the equality is attained if and only if $\pair{z}{v^*} = \delta$ for some $z\in\Cscr_0$, that is, $\wit(z) = 0$.
\end{proof}

If $\V$ is an Euclidean space and $v^*$ is given by the scalar product $\pair{v}{v^*} = \hat{e}\cdot v\ \ \forall v\in\V$ for some unit vector $\hat{e}\in\V$ not belonging to $\V(\Cscr_0)^\perp$, the nonnegative gap \(\delta_\Cscr(v^*) -\delta_{\Cscr_0}(v^*)\) is the distance between the affine hyperplane of $\V$ which is orthogonal to $\hat{e}$, touches $\Cscr$ at its relative boundary and has $\Cscr$ on its side opposite to $\hat{e}$, and the analogous hyperplane touching the relative boundary of $\Cscr_0$.

\subsection{Tight witnesses}

The following proposition establishes the natural connection between tight $\overline{\Cscr}_0$-witnesses and the extremality property for points of the set $\Cscr_0$. Although it is an easy consequence of more general and standard results (see e.g.~\cite[Thm.~32.1 and Cor.~32.3.1]{SRock}), for the reader's convenience we provide a simple proof adapted to the present case.
\begin{proposition}\label{prop:tightness}
Suppose $\wit$ is a tight $\overline{\Cscr}_0$-witness on $\Cscr$. Then, the following facts hold.
\begin{enumerate}[(a)]
\item $\wit(z_0) = 0$ for some extreme point $z_0$ of $\Cscr_0$.\label{it:1_prop_tightness}
\item If $\V(\Cscr_0) = \V(\Cscr)$ and $z\in\Cscr_0$, the equality $\wit(z) = 0$ implies that $z\in\partial\Cscr_0$.\label{it:2_prop_tightness}
\end{enumerate}
\end{proposition}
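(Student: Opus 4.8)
The plan is to read off the analytic form of $\wit$ from Prop.~\ref{prop:cond_delta} and then reduce both statements to elementary facts about the maximizers of a linear functional over a compact convex set. Since $\wit$ is a tight $\overline{\Cscr}_0$-witness, that proposition gives $v^*\in\V^*$ and $\delta\in\Rb$ with $\wit(x) = \delta - \pair{x}{v^*}$ for all $x\in\Cscr$, and tightness forces the equality case, namely
$$
\delta = \delta_{\Cscr_0}(v^*) < \delta_\Cscr(v^*) \,.
$$
The first thing I would record is that the zero set of $\wit$ inside $\Cscr_0$,
$$
F = \{y\in\Cscr_0 \mid \wit(y) = 0\} = \{y\in\Cscr_0 \mid \pair{y}{v^*} = \delta_{\Cscr_0}(v^*)\} \,,
$$
is exactly the set of maximizers of $\pair{\cdot}{v^*}$ over $\Cscr_0$, i.e.\ the exposed face of $\Cscr_0$ determined by $v^*$. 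It is nonempty because $\wit$ is tight, and it is compact and convex, being the intersection of $\Cscr_0$ with the affine hyperplane $\{\pair{\cdot}{v^*}=\delta\}$.

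For part~\eqref{it:1_prop_tightness} I would apply the Minkowski (finite-dimensional Krein--Milman) theorem to the nonempty compact convex set $F$ to produce an extreme point $z_0$ of $F$, which satisfies $\wit(z_0)=0$ by construction. It then remains to promote extremality in $F$ to extremality in $\Cscr_0$: if $z_0 = \lam x + (1-\lam) y$ with $x,y\in\Cscr_0$ and $\lam\in(0,1)$, then pairing with $v^*$ and using that $\pair{z_0}{v^*}=\delta$ is the maximum forces $\pair{x}{v^*}=\pair{y}{v^*}=\delta$, so $x,y\in F$; extremality of $z_0$ in $F$ then yields $x=y=z_0$. (Alternatively one can simply invoke \cite[Cor.~32.3.1]{SRock}.)

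For part~\eqref{it:2_prop_tightness} I would argue by contradiction. Assume $\V(\Cscr_0)=\V(\Cscr)$ and that some $z\in\Cscr_0$ has $\wit(z)=0$ but $z\in\ri{\Cscr_0}$. By the prolongation property of the relative interior, for each $y\in\Cscr_0$ there is $\varepsilon>0$ with $z+\varepsilon(z-y)\in\Cscr_0$. Pairing with $v^*$ and using $\pair{z}{v^*}=\delta=\delta_{\Cscr_0}(v^*)$ together with $\pair{y}{v^*}\leq\delta$ gives $\pair{z+\varepsilon(z-y)}{v^*} = \delta + \varepsilon(\delta - \pair{y}{v^*}) \geq \delta$; since this point lies in $\Cscr_0$ the bound $\leq\delta$ also holds, forcing $\pair{y}{v^*}=\delta$. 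As $y$ was arbitrary, $\pair{\cdot}{v^*}$ is constant on $\Cscr_0$, hence $v^*\in\V(\Cscr_0)^\perp = \V(\Cscr)^\perp$. But then $\pair{\cdot}{v^*}$ is constant on all of $\Cscr$ as well (any two of its points differ by an element of $\V(\Cscr)$), so $\delta_\Cscr(v^*) = \pair{z}{v^*} = \delta$, contradicting $\delta < \delta_\Cscr(v^*)$. Therefore $z\notin\ri{\Cscr_0}$, that is, $z\in\partial\Cscr_0$.

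The two support-function computations are routine. The step that I expect to require the most care is the prolongation argument in~\eqref{it:2_prop_tightness}: one must invoke the relative-interior property in the correct direction, moving slightly \emph{beyond} $z$ while staying inside $\Cscr_0$, in order to conclude that $\pair{\cdot}{v^*}$ is constant on $\Cscr_0$. The subsequent passage from ``constant on $\Cscr_0$'' to ``constant on $\Cscr$'' is precisely where the hypothesis $\V(\Cscr_0)=\V(\Cscr)$ enters, and it is essential: without it, $v^*$ could be nonconstant on $\Cscr$ and the conclusion would fail.
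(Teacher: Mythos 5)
Your proof is correct and follows essentially the same route as the paper's: the paper factors the argument through a separate lemma (Lemma~\ref{lem:delta}) on maximizers of $\pair{\cdot}{v^*}$ over $\Cscr_0$, proving (a) via Krein--Milman on the exposed face plus the standard promotion of extremality, and (b) via a relative-interior perturbation. Your only deviation is organizational: in (b) the paper directly picks a direction $v\in\V(\Cscr_0)$ with $\pair{v}{v^*}\neq 0$ to contradict maximality, whereas you first deduce that $v^*$ would be constant on $\Cscr_0$ and then contradict tightness --- the same idea in contrapositive form.
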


The proof relies on the following lemma, that is sometimes useful by itself.

\begin{lemma}\label{lem:delta}
For $v^*\in\V^*$, the following facts hold.
\begin{enumerate}[(a)]
\item $\delta_{\Cscr_0}(v^*) = \pair{z_0}{v^*}$ for some extreme point $z_0$ of $\Cscr_0$.\label{it:1_lem_delta}
\item Suppose $v^*\in\V^*\setminus\V(\Cscr_0)^\perp$. Then, if $z\in\Cscr_0$, the equality $\pair{z}{v^*} = \delta_{\Cscr_0}(v^*)$ implies that $z\in\partial\Cscr_0$.\label{it:2_lem_delta}
\end{enumerate}
\end{lemma}
\begin{proof}
\eqref{it:1_lem_delta} The set $Z = \{z\in\Cscr_0\mid\pair{z}{v^*} = \delta_{\Cscr_0}(v^*)\}$ is a nonempty closed and convex subset of $\Cscr_0$. By Krein-Milman theorem \cite[Cor.~18.5.1]{SRock}, $Z$ has some extreme point $z_0$. We claim that $z_0$ is extreme also for $\Cscr_0$. Indeed, suppose that $z_0=\lam y_1 + (1-\lam) y_2$ with $y_1,y_2\in\Cscr_0$ and $\lam\in(0,1)$. The conditions $\delta_{\Cscr_0}(v^*) = \pair{z_0}{v^*} = \lam \pair{y_1}{v^*} + (1-\lam) \pair{y_2}{v^*}$ and $\pair{y_i}{v^*}\leq \delta_{\Cscr_0}(v^*)$ for $i=1,2$ then imply that $y_1,y_2\in Z$, hence $y_1=y_2=z_0$.
 
\eqref{it:2_lem_delta} Suppose by contradiction that $z\in\ri{\Cscr_0}$. Then, for any $v\in\V(\Cscr_0)$ with $\pair{v}{v^*}\neq 0$, there exists $\varepsilon\in\Rb$ such that $z+\varepsilon v\in\Cscr_0$ and $\varepsilon \pair{v}{v^*} > 0$. It follows that $\pair{z+\varepsilon v}{v^*} > \delta_{\Cscr_0}(v^*)$, which is impossible.
\end{proof}

\begin{proof}[Proof of Prop.~\ref{prop:tightness}]
By Prop.~\ref{prop:cond_delta}, for some $v^*\in\V^*$ we have $\wit(x) = \delta_{\Cscr_0}(v^*) - \pair{x}{v^*}$ for all $x\in\Cscr$, and $\delta_{\Cscr_0}(v^*) < \delta_\Cscr(v^*)$; in particular, $v^*$ is not constant on $\Cscr$, hence $v^*\in\V^*\setminus\V(\Cscr)^\perp$. Since $\wit(z) = 0$ is then equivalent to $\pair{z}{v^*} = \delta_{\Cscr_0}(v^*)$, the two claims follow by the analogous statements of Lemma \ref{lem:delta}.
\end{proof}

\subsection{Detection equivalent witnesses}

If the subset $\Cscr_0$ is sufficiently large in $\Cscr$, we have the following characterization of detection equivalence.
\begin{proposition}\label{prop:general_equivalence}
Suppose $\Cscr_0\cap\ri{\Cscr}\neq\emptyset$. Then, for two $\overline{\Cscr}_0$-witnesses $\wit_1$ and $\wit_2$ on $\Cscr$, the following facts are equivalent.
\begin{enumerate}[(a)]
\item $\wit_1\approx\wit_2$.\label{it:1_general_equivalence}
\item $\wit_1 = \alpha\wit_2$ for some $\alpha > 0$.\label{it:2_general_equivalence}
\item If $v_i^*\in\V^*$ and $\delta_i\in\Rb$ are such that $\wit_i(x) = \delta_i - \pair{x}{v_i^*}$ for all $x\in\Cscr$ and $i=1,2$, then there exists $\alpha>0$ such that\label{it:3_general_equivalence}
\begin{enumerate}[(i)]
\item $v^*_1 - \alpha v^*_2\in\V(\Cscr)^\perp$;
\item $\delta_1 - \alpha\delta_2 = \pair{v_0}{v^*_1 - \alpha v^*_2}$ for some (hence for all) $v_0\in\Aff{\Cscr}$.
\end{enumerate}
\end{enumerate}
\end{proposition}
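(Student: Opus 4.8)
The plan is to establish the cycle of implications $(c)\Rightarrow(b)\Rightarrow(a)\Rightarrow(c)$, which is the cleanest logical route. The implication $(b)\Rightarrow(a)$ is immediate: if $\wit_1=\alpha\wit_2$ with $\alpha>0$, then $\wit_1(x)<0$ exactly when $\wit_2(x)<0$, so $\De{\wit_1}=\De{\wit_2}$ and $\wit_1\approx\wit_2$. For $(c)\Rightarrow(b)$, I would start from representations $\wit_i(x)=\delta_i-\pair{x}{v_i^*}$ and the two conditions in $(c)$. Condition (i) says $v_1^*-\alpha v_2^*$ annihilates $\V(\Cscr)$, so for any $x\in\Cscr$ the pairing $\pair{x}{v_1^*-\alpha v_2^*}$ is constant in $x$ (equal to $\pair{v_0}{v_1^*-\alpha v_2^*}$ for any fixed $v_0\in\Aff{\Cscr}$, since differences of elements of $\Cscr$ lie in $\V(\Cscr)$). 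Then a direct computation gives, for all $x\in\Cscr$,
\begin{equation*}
\wit_1(x)-\alpha\wit_2(x)=(\delta_1-\alpha\delta_2)-\pair{x}{v_1^*-\alpha v_2^*}=(\delta_1-\alpha\delta_2)-\pair{v_0}{v_1^*-\alpha v_2^*}=0,
\end{equation*}
where the last equality is exactly condition (ii). Hence $\wit_1=\alpha\wit_2$ with $\alpha>0$.

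The substantive direction is $(a)\Rightarrow(c)$, and this is where the hypothesis $\Cscr_0\cap\ri{\Cscr}\neq\emptyset$ does the real work. Here I would exploit the fact that both witnesses are tightenable: replacing $\wit_i$ by $\wit_i(x)-\min\{\wit_i(y)\mid y\in\Cscr_0\}$ does not change $\De{\wit_i}$ (the detected sets are unaffected by subtracting a nonnegative constant from a function whose zero set stays in $\Cscr_0$), so I may assume both are tight, i.e. $\delta_i=\delta_{\Cscr_0}(v_i^*)$ by Prop.~\ref{prop:cond_delta}. The key geometric observation is that $\De{\wit_i}=\{x\in\Cscr\mid\pair{x}{v_i^*}>\delta_i\}$ is an open half-space intersected with $\Cscr$, and the supporting hyperplane $\{x\mid\pair{x}{v_i^*}=\delta_i\}$ touches $\Cscr_0$. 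When $\De{\wit_1}=\De{\wit_2}$, the two open regions carved out of $\Cscr$ coincide, forcing the two bounding hyperplanes to agree on $\Aff{\Cscr}$; this is where I need $\Cscr_0$ to meet $\ri{\Cscr}$, so that the separating hyperplane genuinely cuts through the interior and the detected region has nonempty relative interior, pinning down the hyperplane uniquely up to the scaling that relates $v_1^*$ and $v_2^*$.

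The main obstacle will be this last uniqueness argument: translating the set equality $\De{\wit_1}=\De{\wit_2}$ into the linear-algebraic relations (i) and (ii). I would argue that since $\Cscr_0\cap\ri{\Cscr}\neq\emptyset$, the detected set $\De{\wit_i}$, if nonempty, contains relative-interior points of $\Cscr$ arbitrarily close to the touching hyperplane, so the restriction of each linear functional $v_i^*$ to $\V(\Cscr)$ is determined up to a positive multiple by its sublevel structure on $\Cscr$. Concretely, I expect to show that the two affine functions $x\mapsto\delta_i-\pair{x}{v_i^*}$ restricted to $\Aff{\Cscr}$ have the same zero hyperplane and the same sign of their gradient, and that a nondegenerate affine function on a full-dimensional (within $\Aff{\Cscr}$) convex body is determined up to positive scalar by its zero set together with the choice of positive side. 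This yields $\alpha>0$ with $v_1^*-\alpha v_2^*\in\V(\Cscr)^\perp$, giving (i); evaluating the common value of $\wit_1-\alpha\wit_2$ at any $v_0\in\Aff{\Cscr}$ (it is constant by (i) and vanishes on the shared hyperplane by tightness) then yields (ii). Care is needed to handle the possibility that the representation $(v_i^*,\delta_i)$ is not the tight one supplied by the reduction; but since (c) is required to hold for \emph{any} valid representation, I would first prove it for the tight representatives and then observe that changing $(v_i^*,\delta_i)$ by an element of $\V(\Cscr)^\perp$ and a matching constant (the only freedom, by the extension discussion preceding Prop.~\ref{prop:cond_delta}) preserves both conditions (i) and (ii).
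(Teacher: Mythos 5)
Your overall route is essentially the paper's. The equivalence of (b) and (c) is the same direct computation with affine extensions (including the observation that conditions (i)--(ii) are stable under changing the representation $(v_i^*,\delta_i)$ by an element of $\V(\Cscr)^\perp$ and a matching constant), and for the substantive implication the paper likewise produces a common zero of $\wit_1$ and $\wit_2$ lying in $\ri{\Cscr}$ --- which is exactly where the hypothesis $\Cscr_0\cap\ri{\Cscr}\neq\emptyset$ enters --- and then concludes that the affine extensions to $\Aff{\Cscr}$ are positive multiples of one another. The paper makes your ``a nondegenerate affine function is determined up to a positive scalar by its zero hyperplane and its negative side'' step explicit via an affine basis $z_0,\dots,z_m$ chosen in $\ri{\Cscr}$; your version is more informal but the underlying fact is correct once the common zero set is known to meet $\ri{\Cscr}$.

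There is, however, one step whose justification is wrong: the reduction to tight witnesses. Subtracting $m_i=\min\{\wit_i(y)\mid y\in\Cscr_0\}\geq 0$ does change the detected set whenever $\wit_i$ is not tight: $\De{\wit_i-m_i}=\{x\in\Cscr\mid\wit_i(x)<m_i\}$ strictly contains $\De{\wit_i}$ if $m_i>0$, because $\wit_i$ is affine on the segment joining a detected point to a minimizer over $\Cscr_0$ and therefore attains every value in $[0,m_i)$ on $\Cscr$. (The supplement's own remark after the definition of witnesses accordingly claims only that the tightened witness detects \emph{at least} as many elements.) Moreover the constants $m_1$ and $m_2$ are a priori unrelated, so after tightening you can no longer assume $\De{\wit_1-m_1}=\De{\wit_2-m_2}$; hypothesis (a) is not preserved by the reduction. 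Fortunately the reduction is unnecessary: from $\De{\wit_1}=\De{\wit_2}$ alone the zero sets $Z_i=\{z\in\Cscr\mid\wit_i(z)=0\}$ already coincide (every zero of $\wit_i$ is a limit of detected points along a segment from any detected point, so $Z_i$ is the relative boundary of the common detected set within $\Cscr$), and this common zero set --- which meets $\ri{\Cscr}$ by the hypothesis together with the segment argument from a point of $\Cscr_0\cap\ri{\Cscr}$ --- is all you need both to identify the common hyperplane in $\Aff{\Cscr}$ and to evaluate the constant in (ii) at a common zero, with no appeal to tightness. Dropping the tightening step and running the remainder of your argument on the raw witnesses yields a correct proof, which is then the paper's.
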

Note that, if the two $\overline{\Cscr}_0$-witnesses in the above proposition are tight, then in statement \eqref{it:3_general_equivalence} we have $\delta_i = \delta_{\Cscr_0}(v^*_i)$ by Prop.~\ref{prop:cond_delta}, hence item (\ref{it:3_general_equivalence}.ii) is automatically implied by (\ref{it:3_general_equivalence}.i) and the properties of the support function $\delta_{\Cscr_0}$ recalled at the end of Sec.~\ref{subsec:convex_preliminaries}.

Prop.~\ref{prop:general_equivalence} exhibits the degree of freedom one has in choosing detection equivalent $\overline{\Cscr}_0$-witnesses: namely, if we have a $\overline{\Cscr}_0$-witness $\wit(x) = \delta - \pair{x}{v^*} \ \ \forall x\in\Cscr$, we can turn it into an equivalent one by: (1) replacing the dual vector $v^*\in\V^*$ with $v^{\prime\ast} = \alpha(v^* + u^*)$, where $u^* \in \V(\Cscr)^\perp$ and $\alpha >0$; (2) redefining the constant $\delta$ according to Prop.~\ref{prop:general_equivalence}.(\ref{it:3_general_equivalence}.ii). The smaller is the linear space $\V(\Cscr)$ inside $\V$, the larger is the freedom in the choice of the dual vector $v^*$. This freedom is the crux of the proof of Thm.~1  in the main paper, and therefore it is at the heart of the interpretation of incompatibility witnesses as postmeasurement discrimination problems.

\begin{proof}[Proof of Prop.~\ref{prop:general_equivalence}]
\eqref{it:1_general_equivalence} $\Rightarrow$ \eqref{it:2_general_equivalence}\,: Assuming statement \eqref{it:1_general_equivalence}, we preliminarly show that the two nonempty sets $Z_i = \{z\in\Cscr\mid\wit_i(z) = 0\}$ ($i=1,2$) actually coincide. Indeed, suppose by contradiction that $Z_1\neq Z_2$. We can assume with no restriction that there is some $z\in\Cscr$ such that $\wit_1(z) = 0$ and $\wit_2(z) \neq 0$, hence $\wit_2(z) > 0$ since $\De{\wit_1} = \De{\wit_2} =: \ca{D}$. Picking any $x\in\ca{D}$, we have $\wit_1(\lam x + (1-\lam)z) = \lam\wit_1(x) < 0$, or, equivalently, $\wit_2(\lam x + (1-\lam)z) < 0$ for all $\lam\in (0,1)$, that contradicts continuity of the mapping $\lam\mapsto\wit_2(\lam x + (1-\lam)z)$ at $\lam = 0$.\\
We next claim that there is $z_0\in\ri{\Cscr}$ such that $\wit_i(z_0) = 0$ for all $i=1,2$. To this aim, let $y\in\Cscr_0\cap\ri{\Cscr}$. Then, for all $i=1,2$, either $\wit_i(y) = 0$ and we are done, or $\wit_i(y) > 0$. In the latter case, again by a continuity argument, for any $x\in\ca{D}$ there is some $\lam\in (0,1)$ such that $\wit_1(\lam x + (1-\lam)y) = \lam\wit_1(x) + (1-\lam)\wit_1(y) = 0$. Setting $z_0 = \lam x + (1-\lam)y$, we thus see that $z_0\in Z_1=Z_2$, and $z_0\in\ri{\Cscr}$ by \cite[Thm.~6.1]{SRock}.\\
For $i=1,2$, let $\tilde{\wit}_i$ be the extension of $\wit_i$ to an affine map on $\Aff{\Cscr}$. Then, the mapping $\V(\Cscr)\ni v\mapsto\tilde{\wit}_1(z_0+v)\in\Rb$ is linear and nonzero, hence there exists a linear basis $\{v_1,\ldots,v_m\}$ of $\V(\Cscr)$ such that $\tilde{\wit}_1(z_0+v_k) = 0$ if $k=1,\ldots,m-1$, and $\tilde{\wit}_1(z_0+v_m) < 0$. By possibly replacing all the $v_k$'s with $\mu v_k$ for some $\mu\in (0,1)$, we can assume that $z_k:= z_0+v_k\in\ri{\Cscr}$ for all $k=1,\ldots,m$, and so $\wit_1(z_k) = 0$ if $k\leq m-1$, and $\wit_1(z_m) < 0$. Hence, also $\wit_2(z_k) = 0$ if $k\leq m-1$, and $\wit_2(z_m) < 0$. It follows that
$$
\left[\tilde{\wit}_1 - \frac{\wit_1(z_m)}{\wit_2(z_m)} \tilde{\wit}_2\right](z_k) = 0 \qquad \forall k=0,\ldots,m \,,
$$
which implies $\tilde{\wit}_1 - (\wit_1(z_m) / \wit_2(z_m))\, \tilde{\wit}_2 = 0$ because any element of $\Aff{\Cscr}$ is an affine combination of $z_0,\ldots,z_m$. This yields statement \eqref{it:2_general_equivalence}.

\eqref{it:2_general_equivalence} $\Rightarrow$ \eqref{it:1_general_equivalence}\,: The implication is clear.

\eqref{it:2_general_equivalence} $\Leftrightarrow$ \eqref{it:3_general_equivalence}\,: Suppose $v_i^*\in\V^*$ and $\delta_i\in\Rb$ are as in statement \eqref{it:3_general_equivalence}, and extend the c-affine map $\wit_i$ to an affine map $\tilde{\wit}_i$ on $\Aff{\Cscr}$ by means of the relation $\tilde{\wit}_i(x) = \delta_i - \pair{x}{v^*_i} \ \ \forall x\in\Aff{\Cscr}$. Then, since affine extensions are unique, statement \eqref{it:2_general_equivalence} is equivalent to $\tilde{\wit}_1 = \alpha\tilde{\wit}_2$. Picking any $v_0\in\Aff{\Cscr}$, this is in turn equivalent to
\begin{align*}
0 & = (\tilde{\wit}_i - \alpha \tilde{\wit}_i) (v_0+v) = [(\delta_1 - \alpha\delta_2) - \pair{v_0}{v^*_1 - \alpha v^*_2}] - \pair{v}{v^*_1 - \alpha v^*_2} \qquad \forall v\in\V(\Cscr) \,,
\end{align*}
that is the same as statement \eqref{it:3_general_equivalence}.
\end{proof}

\section{Incompatibility witnesses}\label{sec:inco_wit}

We fix a finite dimensional complex Hilbert space $\hh$, with $\dim\hh = d$. We denote by $\lhs$ the real linear space of all selfadjoint operators on $\hh$, endowed with the uniform operator norm $\no{\cdot}$. We write $\id$ for the identity operator. If $Z$ is a set, we let $\abs{Z}$ be its cardinality. A {\em measurement} with outcomes in a finite set $Z$ is any map $\Mo : Z\to\lhs$ such that $\Mo(z)\geq 0$ for all $z\in Z$ and $\sum_{z\in Z}\Mo(z) = \id$. The {\em uniform measurement} with outcomes in $Z$ is given by $\Uo_Z(z) = \id/|Z|$ for all $z\in Z$.

All measurements with outcomes in $Z$ constitute a closed and bounded convex subset $\obs{Z}$ in the real linear space of all operator valued functions $H:Z\to\lhs$. We denote by $\fun{Z}$ the latter linear space of functions, and we regard it as a normed space with the sup-norm $\no{H}_\infty = \max\{\no{H(z)}\mid z\in Z\}$; the dimension of $\fun{Z}$ is $d^2\abs{Z}$. For any $A\in\lhs$, we define the affine set $\funo{A}{Z} = \{H\in\fun{Z}\mid\sum_{z\in Z} H(z) = A\}$; the inclusion $\obs{Z}\subset\funo{\id}{Z}$ is clear.

If $X$ and $Y$ are finite sets, two measurements $\Ao\in\obs{X}$ and $\Bo\in\obs{Y}$ are {\em compatible} if there exists a third measurement $\Mo\in\obs{X\times Y}$ such that $\Ao$ and $\Bo$ are the {\em margins} of $\Mo$, that is,
\equasi{eq:defC}{
\Mo_{X}(x) := \sum_{y\in Y} \Mo(x,y) = \Ao(x) \qquad\text{and}\qquad \Mo_{Y}(y) := \sum_{x\in X} \Mo(x,y) = \Bo(y)
}
for all $x$ and $y$. In this case, we say that $\Mo$ is a {\em joint measurement} of $\Ao$ and $\Bo$.
We denote by $\OO = \obs{X}\times\obs{Y}$ the set of all pairs of measurements on $X$ and $Y$, and by $\cOO$ the subset of all pairs made up of compatible measurements. If $(\Ao,\Bo)\in\iOO=\OO\setminus\cOO$, the two measurements $\Ao$ and $\Bo$ are {\em incompatible}.

As it is well known, unless $\hh = \Cb$ or $\min\{|X|,|Y|\} = 1$, the inclusion $\cOO\subset\OO$ is strict. So, in the following we will always assume $d\geq 2$ and $\min\{|X|,|Y|\}\geq 2$ to avoid degeneracies.

The sets $\OO$ and $\cOO$ are convex and compact in the direct product linear space $\V = \fun{X}\times\fun{Y}$; here, as the norm of $\V$ we choose the $\ell_\infty$-norm $\no{(F,G)}_\infty = \max\{\no{F}_\infty,\no{G}_\infty\}$. Indeed, only the compactness of $\cOO$ needs to be checked; it follows by the compactness of $\obs{X\times Y}$ and the continuity of the mapping $\obs{X\times Y}\ni\Mo\mapsto (\Mo_{X},\Mo_{Y}) \in\obs{X}\times\obs{Y}$.

The next proposition gives some further insight into the convex structure of the sets $\OO$ and $\cOO$.

\begin{proposition}\label{prop:convex_structure}
The following properties hold.
\begin{enumerate}[(a)]
\item $\Aff{\cOO} = \Aff{\OO} = \funo{\id}{X}\times\funo{\id}{Y}$.\label{it:a_convex_structure}
\item $\V(\cOO) = \V(\OO) = \funo{0}{X}\times\funo{0}{Y}$.\label{it:b_convex_structure}
\item $\ri{\cOO}\subset\ri{\OO}$ and $(\Uo_X,\Uo_Y)\in\ri{\cOO}$.\label{it:c_convex_structure}
\end{enumerate}
\end{proposition}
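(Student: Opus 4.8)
The plan is to reduce all three parts to a single explicit construction showing that every pair of measurements lying sufficiently close to $(\Uo_X,\Uo_Y)$ is automatically compatible; once this is in hand, (a) and (b) follow by routine convex-analytic bookkeeping and (c) is essentially a restatement.

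First I would dispose of the easy inclusions. Since every measurement is normalized, $\obs{X}\subset\funo{\id}{X}$ and $\obs{Y}\subset\funo{\id}{Y}$, so $\OO\subseteq\funo{\id}{X}\times\funo{\id}{Y}$. As the right-hand side is an affine set, passing to affine hulls yields $\Aff{\cOO}\subseteq\Aff{\OO}\subseteq\funo{\id}{X}\times\funo{\id}{Y}$. Thus for (a) it remains only to prove the reverse inclusion $\funo{\id}{X}\times\funo{\id}{Y}\subseteq\Aff{\cOO}$, and this is where the real content sits: the compatible pairs must affinely span everything.

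The key step is the perturbation ansatz. Given $(\Ao,\Bo)\in\OO$, write $\Ao(x)=\Uo_X(x)+D_A(x)$ and $\Bo(y)=\Uo_Y(y)+D_B(y)$, where normalization forces $D_A\in\funo{0}{X}$ and $D_B\in\funo{0}{Y}$. I would then exhibit the candidate joint measurement
\[
\Mo(x,y)=\frac{1}{|X||Y|}\id+\frac{1}{|Y|}D_A(x)+\frac{1}{|X|}D_B(y),
\]
and check directly that its margins reproduce $\Ao$ and $\Bo$, using precisely $\sum_x D_A(x)=\sum_y D_B(y)=0$. The only nontrivial point is positivity of $\Mo(x,y)$: writing $\Mo(x,y)=\tfrac{1}{|X||Y|}\id+R(x,y)$ one has $\no{R(x,y)}\leq\tfrac{1}{|Y|}\no{D_A(x)}+\tfrac{1}{|X|}\no{D_B(y)}$, so requiring $\no{D_A(x)},\no{D_B(y)}<\varepsilon$ with $\varepsilon\leq 1/(|X|+|Y|)$ forces $\no{R(x,y)}<\tfrac{1}{|X||Y|}$, whence every eigenvalue of $\Mo(x,y)$ is positive. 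The same bound also keeps each $\Ao(x),\Bo(y)$ positive, so the pair lies in $\OO$ to begin with. This construction is the one genuine obstacle; everything downstream is formal.

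From this I would conclude as follows. Inside the affine space $\funo{\id}{X}\times\funo{\id}{Y}$, the $\varepsilon$-ball around $(\Uo_X,\Uo_Y)$ consists of pairs $(\Uo_X+D_A,\Uo_Y+D_B)$ with $\no{D_A}_\infty,\no{D_B}_\infty<\varepsilon$, and by the key step this whole ball lies in $\cOO$. Hence $\cOO$ contains a relatively open $\funo{\id}{X}\times\funo{\id}{Y}$-neighborhood $N$ of $(\Uo_X,\Uo_Y)$; since the affine hull of any relatively open subset of an affine space is the whole space, $\Aff{\cOO}\supseteq\Aff{N}=\funo{\id}{X}\times\funo{\id}{Y}$, which with the first paragraph establishes (a). For (b), the direction subspace of $\funo{\id}{Z}=\Uo_Z+\funo{0}{Z}$ is $\funo{0}{Z}$, so $\V(\OO)=\V(\Aff{\OO})=\funo{0}{X}\times\funo{0}{Y}$, and identically for $\cOO$ because $\Aff{\cOO}=\Aff{\OO}$. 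Finally, for (c), the neighborhood $N$ exhibits $(\Uo_X,\Uo_Y)$ as a relative interior point of $\cOO$; and since $\cOO\subseteq\OO$ share the same affine hull by (a), any relative interior point of $\cOO$ carries an affine-hull neighborhood contained in $\cOO\subseteq\OO$, giving $\ri{\cOO}\subseteq\ri{\OO}$ (the inclusion being proper, since the nonempty relatively open set $\iOO$ must meet the dense set $\ri{\OO}$).
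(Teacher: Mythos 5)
Your proposal is correct and follows essentially the same route as the paper: the explicit joint measurement $\Mo(x,y)=\tfrac{1}{|X||Y|}\id+\tfrac{1}{|Y|}D_A(x)+\tfrac{1}{|X|}D_B(y)$ is exactly the paper's construction (with $D_A=\varepsilon F$, $D_B=\varepsilon G$), and the derivation of (a)--(c) from the resulting relatively open neighborhood of $(\Uo_X,\Uo_Y)$ inside $\cOO$ is the same bookkeeping. The only differences are an immaterial choice of $\varepsilon$ and your added (correct, though not required) argument that the inclusion $\ri{\cOO}\subseteq\ri{\OO}$ is proper.
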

\begin{proof}
Clearly,
\equasi{eq:inclusions}{\tag{$\ast$}
\Aff{\cOO} \subseteq \Aff{\OO} \subseteq \funo{\id}{X}\times\funo{\id}{Y} = (\Uo_X,\Uo_Y) + \funo{0}{X}\times\funo{0}{Y} \,.
}
In order to prove that the previous inclusions actually are equalities (item \eqref{it:a_convex_structure}), it is enough to show that
\equasi{eq:neighborhood}{\tag{$\ast\ast$}
(\Uo_X,\Uo_Y) + \varepsilon B \subseteq \cOO \quad\text{for}\quad \varepsilon = \frac{1}{2}\, \min \left\{ \frac{1}{\abs{X}} , \frac{1}{\abs{Y}} \right\} \quad\text{and}\quad B = \{(F,G)\in\funo{0}{X}\times\funo{0}{Y} \mid \no{(F,G)}_\infty < 1\}\,;
}
here, $B$ is the open unit ball in $\funo{0}{X}\times\funo{0}{Y}$. Indeed, if \eqref{eq:neighborhood} holds, then
$$
(\Uo_X,\Uo_Y) + \funo{0}{X}\times\funo{0}{Y} = \Aff{(\Uo_X,\Uo_Y) + \varepsilon B} \subseteq \Aff{\cOO} \,.
$$
Now, \eqref{eq:neighborhood} immediately follows, since for $(F,G)\in B$, the formula
$$
\Mo(x,y) = \Uo_{X\times Y}(x,y) + \varepsilon\left[\frac{1}{\abs{Y}}\, F(x) + \frac{1}{\abs{X}}\, G(y)\right]
$$
defines an element $\Mo\in\obs{X\times Y}$ such that $(\Mo_{X},\Mo_{Y}) = (\Uo_X,\Uo_Y) + \varepsilon (F,G)$.
Having proven that in \eqref{eq:inclusions} all inclusions actually are equalities, item \eqref{it:b_convex_structure} is obvious, while item \eqref{it:c_convex_structure} follows from \eqref{eq:neighborhood} and \cite[Cor.~6.5.2]{SRock}.
\end{proof}

The dual space $\V^*$ of $\V = \fun{X}\times\fun{Y}$ can be identified with $\V$ itself by means of the pairing
\equasi{eq:iden_V*}{
\pair{(F_2,G_2)}{(F_1,G_1)} = \sum_{x\in X} \tr{F_1(x)F_2(x)} + \sum_{y\in Y} \tr{G_1(y)G_2(y)} \,.
}
With this identification, Prop.~\ref{prop:convex_structure}.\eqref{it:b_convex_structure} and a simple dimension counting lead to the equalities
\equasi{eq:annihilator_obs}{
\begin{aligned}
\V(\cOO)^\perp & = \V(\OO)^\perp = \{(F,G)\in\fun{X}\times\fun{Y} \mid F(x_1) = F(x_2) \\
& \qquad\qquad\qquad\qquad\qquad \text{and} \quad G(y_1) = G(y_2) \quad \forall x_1,x_2\in X,\, y_1,y_2\in Y\} \,.
\end{aligned}
}

We recall from the main paper that any $\iOO$-witness on the convex set $\OO$ is an {\em incompatibility witness} (IW). Thus, by Prop.~\ref{prop:cond_delta} and \eqref{eq:iden_V*}, any IW is of the form
\equano{
\wit(\Ao,\Bo) = \delta - \pair{(\Ao,\Bo)}{(F,G)} \qquad \forall (\Ao,\Bo)\in\OO
}
for some $(F,G)\in\fun{X}\times\fun{Y}$ and $\delta\in [\delta_{\cOO}(F,G)\,,\,\delta_{\OO}(F,G))$. In particular, $\wit$ is tight if in the above formula $\delta = \delta_{\cOO}(F,G)$.

Combining Props.~\ref{prop:tightness} and \ref{prop:convex_structure}.\eqref{it:b_convex_structure} immediately proves the following connection between tight IWs and the relative boundary of the set $\cOO$.

\begin{proposition}
Suppose $\wit$ is a tight IW on the set $\OO$ and $(\Ao,\Bo)\in\cOO$. Then, the equality $\wit(\Ao,\Bo) = 0$ implies that $(\Ao,\Bo)\in\partial\cOO$. Moreover, there always exists some extreme point $(\Ao_0,\Bo_0)$ of $\cOO$ such that $\wit(\Ao_0,\Bo_0) = 0$.
\end{proposition}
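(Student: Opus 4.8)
The plan is to read the statement as the incompatibility specialization of the abstract result Prop.~\ref{prop:tightness}, under the dictionary $\Cscr = \OO$, $\Cscr_0 = \cOO$, and hence $\overline{\Cscr}_0 = \iOO$. With this identification a tight IW is precisely a tight $\overline{\Cscr}_0$-witness, the two conclusions of the present proposition are the two items of Prop.~\ref{prop:tightness} transcribed into the concrete notation, and the whole task collapses to checking that the hypothesis of that proposition holds here.

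The ``Moreover'' clause requires nothing beyond invoking Prop.~\ref{prop:tightness}.\eqref{it:1_prop_tightness}, which produces an extreme point of $\Cscr_0$ at which any tight witness vanishes. Taking $\Cscr_0 = \cOO$ delivers an extreme point $(\Ao_0,\Bo_0)$ of $\cOO$ with $\wit(\Ao_0,\Bo_0) = 0$, with no additional hypothesis to discharge.

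For the first assertion I would apply Prop.~\ref{prop:tightness}.\eqref{it:2_prop_tightness} with $z = (\Ao,\Bo)$, which concludes $z \in \partial\cOO$ as soon as the direction spaces agree, $\V(\cOO) = \V(\OO)$. This equality is exactly the content of Prop.~\ref{prop:convex_structure}.\eqref{it:b_convex_structure}, where both spaces are computed to be $\funo{0}{X}\times\funo{0}{Y}$; quoting it settles the one standing hypothesis and closes the argument.

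The honest summary is that there is no residual obstacle at this stage: all the analytic weight sits upstream, in Lemma \ref{lem:delta} and Prop.~\ref{prop:tightness} (an extremality/separation argument resting on Krein--Milman) together with the structural computation Prop.~\ref{prop:convex_structure}. The single conceptual subtlety I would flag explicitly is \emph{why} the equality $\V(\cOO) = \V(\OO)$ is indispensable rather than cosmetic: were $\cOO$ to lie in a proper affine subspace of $\Aff{\OO}$, a tight witness could vanish at relative-interior points of the lower-dimensional $\cOO$, and the implication ``$\wit(\Ao,\Bo) = 0 \Rightarrow (\Ao,\Bo)\in\partial\cOO$'' would break. It is precisely the full-dimensionality of $\cOO$ inside $\OO$ --- certified in Prop.~\ref{prop:convex_structure} through the explicit open ball around $(\Uo_X,\Uo_Y)$ contained in $\cOO$ --- that legitimizes the reduction to Prop.~\ref{prop:tightness}.\eqref{it:2_prop_tightness}.
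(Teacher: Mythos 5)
Your proposal is correct and takes essentially the same route as the paper, whose entire proof consists of combining Prop.~\ref{prop:tightness} (with $\Cscr=\OO$, $\Cscr_0=\cOO$) with the equality $\V(\cOO)=\V(\OO)$ from Prop.~\ref{prop:convex_structure}.\eqref{it:b_convex_structure}. The only difference is that you make explicit why the full-dimensionality of $\cOO$ in $\Aff{\OO}$ is needed for the boundary implication, a point the paper leaves implicit.
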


As a consequence of Prop.~\ref{prop:convex_structure}.\eqref{it:c_convex_structure}, also the hypothesis of Prop.~\ref{prop:general_equivalence} is satisfied by the sets $\OO$ and $\cOO$. We then obtain the following characterization of detection equivalence for two IWs.

\begin{proposition}\label{teo:equivalence}
If $\wit_1$, $\wit_2$ are two IWs on the set $\OO$ and $\wit_i(\Ao,\Bo) = \delta_i - \pair{(\Ao,\Bo)}{(F_i,G_i)}$ for all $(\Ao,\Bo)\in\OO$ and $i=1,2$, then $\wit_1\approx\wit_2$ if and only if
\equano{
F_2(x) = \alpha F_1(x) + A\,, \qquad\qquad G_2(y) = \alpha G_1(y) + B\,, \qquad\qquad \delta_2 = \alpha\delta_1 + \tr{A+B}
}
for some $\alpha > 0$ and $A,B\in\lhs$.
\end{proposition}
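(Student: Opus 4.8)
The plan is to derive the claim from the abstract characterization of detection equivalence in Prop.~\ref{prop:general_equivalence}, rewriting its conditions in the concrete operator language supplied by the annihilator formula \eqref{eq:annihilator_obs} and the pairing \eqref{eq:iden_V*}. The first step is to check that Prop.~\ref{prop:general_equivalence} is applicable to the pair $(\OO,\cOO)$: its hypothesis $\cOO\cap\ri{\OO}\neq\emptyset$ holds because $(\Uo_X,\Uo_Y)\in\ri{\cOO}\subseteq\ri{\OO}$ by Prop.~\ref{prop:convex_structure}.\eqref{it:c_convex_structure}. Consequently $\wit_1\approx\wit_2$ is equivalent to the existence of a scalar $\beta>0$ for which (i) $(F_1,G_1)-\beta(F_2,G_2)\in\V(\OO)^\perp$ and (ii) $\delta_1-\beta\delta_2=\pair{v_0}{(F_1,G_1)-\beta(F_2,G_2)}$ for some (hence any) $v_0\in\Aff{\OO}$.

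Next I would translate these two conditions. By the explicit description \eqref{eq:annihilator_obs} of $\V(\OO)^\perp$, condition (i) says exactly that $x\mapsto F_1(x)-\beta F_2(x)$ and $y\mapsto G_1(y)-\beta G_2(y)$ are constant, equal to some $C,D\in\lhs$. Writing $\alpha=1/\beta>0$, $A=-\alpha C$, $B=-\alpha D$ and solving for $F_2,G_2$ then gives precisely $F_2=\alpha F_1+A$ and $G_2=\alpha G_1+B$. For condition (ii) I would evaluate the pairing at the convenient base point $v_0=(\Uo_X,\Uo_Y)\in\OO\subseteq\Aff{\OO}$: since $\Uo_X(x)=\id/\abs{X}$ and $F_1-\beta F_2\equiv C$, the $x$-sum in \eqref{eq:iden_V*} collapses to $\frac{1}{\abs{X}}\sum_x\tr{C}=\tr{C}$, and likewise the $y$-sum to $\tr{D}$, so (ii) becomes $\delta_1-\beta\delta_2=\tr{C+D}$. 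Substituting $C=-\beta A$, $D=-\beta B$ and dividing by $\beta$ yields $\delta_2=\alpha\delta_1+\tr{A+B}$, completing the forward implication.

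The converse is the same chain of equalities read backward: starting from the asserted relations I would put $\beta=1/\alpha>0$, $C=-\beta A$, $D=-\beta B$, verify (i) and (ii), and invoke Prop.~\ref{prop:general_equivalence} to conclude $\wit_1\approx\wit_2$. I do not anticipate a real obstacle, as the whole argument is a translation of Prop.~\ref{prop:general_equivalence} through the structural facts already established; the only points that require attention are the reciprocal relation $\alpha=1/\beta$ between the scalar in the statement and the one in Prop.~\ref{prop:general_equivalence}, the sign bookkeeping for $A$ and $B$, and the recognition that the freedom to add the operators $A,B$ (and the resulting correction $\tr{A+B}$) is exactly the large annihilator $\V(\OO)^\perp$ from \eqref{eq:annihilator_obs}. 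The single genuine computation is the collapse of the pairing at $(\Uo_X,\Uo_Y)$ that produces the constant $\tr{A+B}$.
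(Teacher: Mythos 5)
Your proposal is correct and follows essentially the same route as the paper: invoke Prop.~\ref{prop:general_equivalence} (its hypothesis being guaranteed by Prop.~\ref{prop:convex_structure}.\eqref{it:c_convex_structure}), identify the annihilator condition with constancy of the differences via \eqref{eq:annihilator_obs}, and evaluate the pairing at the base point $(\Uo_X,\Uo_Y)$ to get the trace correction. Your explicit bookkeeping of the reciprocal scalar $\alpha=1/\beta$ and the signs of $A,B$ is slightly more careful than the paper's (which states the conditions with the roles of $\wit_1$ and $\wit_2$ effectively interchanged), but the argument is the same.
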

\begin{proof}
By Props.~\ref{prop:general_equivalence} and \ref{prop:convex_structure}.\eqref{it:c_convex_structure}, the equivalence $\wit_1\approx\wit_2$ holds if and only if, for some $\alpha > 0$, both the following conditions are satisfied:
\begin{enumerate}[-]
\item $(F_1,G_1) - \alpha (F_2,G_2) \in \V(\cOO)^\perp$ \ $\Leftrightarrow$ \ by \eqref{eq:annihilator_obs}, there exist $A,B\in\lhs$ such that $F_1(x) - \alpha F_2(x) =  A$ for all $x\in X$ and $G_1(y) - \alpha G_2(y) =  B$ for all $y\in Y$;
\item $\delta_1 - \alpha\delta_2 = \pair{(\Uo_X,\Uo_Y)}{(F_1,G_1) - \alpha (F_2,G_2)}$  \ $\Leftrightarrow$ \ $\delta_1 - \alpha\delta_2 = \tr{A} + \tr{B}$ with $A$ and $B$ given by the previous item.
\end{enumerate}
This concludes the proof.
\end{proof}

We have already seen in tha main paper that, up to detection equivalence, any tight IW can be associated with a state discrimination problem with postmeasurement information. This problem consists in discriminating the classical labels of some {\em partitioned state ensemble}, that is, a couple $\enp = (\en,\{X,Y\})$, in which
\begin{enumerate}[(i)]
\item $X$ and $Y$ are disjoint finite sets, and
\item $\en$ is a {\em state ensemble} with label set $X\cup Y$, i.e., an element of $\fun{X\cup Y}$ such that $\en(z)\geq 0$ for all $z\in X\cup Y$ and $\sum_{z\in X\cup Y} \tr{\en(z)} = 1$.
\end{enumerate}
Notice that the pairing \eqref{eq:iden_V*} for the restrictions $F_1 = \left.\en\right|_X$ and $G_1 = \left.\en\right|_Y$ rewrites
\equasi{eq:pairing_Ppg}{
\pair{(\Ao,\Bo)}{(\left.\en\right|_X,\left.\en\right|_Y)} = \Prg(\enp;\Ao,\Bo)  \qquad \forall (\Ao,\Bo)\in\OO \,,
}
where $\Prg(\enp;\Ao,\Bo)$ is the guessing probability with premeasurement information defined in~(4) of the main text. It follows that
\equasi{eq:link_P_delta}{
\delta_{\OO}(\left.\en\right|_X,\left.\en\right|_Y) = \Prg(\enp)\,, \qquad\qquad
\delta_{\cOO}(\left.\en\right|_X,\left.\en\right|_Y) = \Ppg(\enp) \,,
}
where $\Prg(\enp)$ and $\Ppg(\enp)$ are the optimal guessing probabilities with premeasurement and postmeasurement information given by~(5) and~(6) of the paper. In particular, whenever the strict inequality $\Ppg(\enp) < \Prg(\enp)$ holds, we can define the tight IW associated with the partitioned state ensemble $\enp$ as we did in~(7):
\equasi{eq:standard_IW_suppl}{
\wit_\enp(\Ao,\Bo) = \Ppg(\enp) - \Prg(\enp;\Ao,\Bo) \qquad \forall (\Ao,\Bo)\in\OO\,.
}

We remark that the evaluation of $\Prg(\enp)$ consists in solving two separate standard state discrimination problems: one for the subensemble $\en_X = p(X)^{-1}\left.\en\right|_X$ and another one for $\en_Y = p(Y)^{-1}\left.\en\right|_Y$. (Note: Although we use similar notations for subensembles and margin measurements, their correct interpretation is always clear from the context.) On the other hand, the optimization problem in the definition of $\Ppg(\enp)$ can be turned into a single standard state discrimination problem by means of \cite[Thm.~2]{SCaHeToPRA18}. Therefore, in order to evaluate both probabilities in \eqref{eq:link_P_delta}, one can resort to techniques from standard quantum state discrimination, as those described e.g.~in \cite[Sec.~IV\,B]{SCaHeToPRA18}, \cite{SHolevo73,SYuKeLa75,SQDET76,SBae13}.

If the partitioned state ensemble $\enp$ is not trivial, any compatible pair of measurements solving the optimization problem in the definition of $\Ppg(\enp)$ necessarily lies on the relative boundary of the set $\cOO$. More precisely, we have the following fact. It should be compared with the detailed characterization of the relative boundary and the extreme points of the compact convex set $\cOO$ provided in \cite{SGuCu18}.

\begin{proposition}\label{prop:boundary_Ppost}
Suppose the partitioned state ensemble $\enp = (\en,\{X,Y\})$ is such that either one of the restrictions $\left.\en\right|_X$ or $\left.\en\right|_Y$ is not a constant function. Then, for two compatible measurements $\Ao$ and $\Bo$, the equality $\Ppg(\enp) = \Prg(\enp;\Ao,\Bo)$ entails that $(\Ao,\Bo)\in\partial\cOO$. Moreover, there always exists some extreme point $(\Ao_0,\Bo_0)$ of $\cOO$ such that $\Ppg(\enp) = \Prg(\enp;\Ao_0,\Bo_0)$.
\end{proposition}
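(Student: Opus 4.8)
The plan is to recast the statement entirely in terms of the support function $\delta_{\cOO}$ and then invoke Lemma~\ref{lem:delta} with $\Cscr_0 = \cOO$. First I would set $v^* = (\left.\en\right|_X,\left.\en\right|_Y)\in\V^*$ under the identification \eqref{eq:iden_V*}, and record that the dictionary \eqref{eq:pairing_Ppg}--\eqref{eq:link_P_delta} gives $\Prg(\enp;\Ao,\Bo) = \pair{(\Ao,\Bo)}{v^*}$ and $\Ppg(\enp) = \delta_{\cOO}(v^*)$. The hypothesis $\Ppg(\enp) = \Prg(\enp;\Ao,\Bo)$ then reads $\pair{(\Ao,\Bo)}{v^*} = \delta_{\cOO}(v^*)$; that is, the compatible pair $(\Ao,\Bo)$ attains the maximum of $\pair{\cdot}{v^*}$ over $\cOO$.

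Next I would verify the single hypothesis-dependent ingredient, namely that $v^*\notin\V(\cOO)^\perp$. By the explicit description \eqref{eq:annihilator_obs}, a dual vector $(F,G)$ lies in $\V(\cOO)^\perp$ exactly when both $F$ and $G$ are constant functions. Since at least one of $\left.\en\right|_X$, $\left.\en\right|_Y$ is assumed non-constant, it follows that $v^*\notin\V(\cOO)^\perp$. I expect this to be the only genuinely delicate point, even though it collapses to a one-line check once \eqref{eq:annihilator_obs} is available: it is precisely here that the \emph{non-constant restriction} assumption is used, and it is what guarantees that the functional $\pair{\cdot}{v^*}$ is not constant on $\cOO$.

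Finally, both assertions would follow at once. The boundary claim $(\Ao,\Bo)\in\partial\cOO$ is the conclusion of Lemma~\ref{lem:delta}(b) applied to $\Cscr_0 = \cOO$ and the above $v^*$, whose non-membership in $\V(\cOO)^\perp$ was just established. The existence of an extreme point $(\Ao_0,\Bo_0)$ of $\cOO$ with $\Ppg(\enp) = \Prg(\enp;\Ao_0,\Bo_0)$ is, after translating back through the same dictionary, exactly Lemma~\ref{lem:delta}(a) for this $v^*$. The whole argument mirrors the proof of Proposition~\ref{prop:tightness}, with the tight-witness value $\delta_{\cOO}(v^*)$ now interpreted as $\Ppg(\enp)$, and I anticipate no obstacle beyond keeping the identification \eqref{eq:iden_V*} straight.
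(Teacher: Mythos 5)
Your proposal is correct and follows essentially the same route as the paper: the paper's proof likewise combines Lemma~\ref{lem:delta} with the dictionary \eqref{eq:pairing_Ppg}--\eqref{eq:link_P_delta} and uses \eqref{eq:annihilator_obs} to translate the non-constancy hypothesis into $(\left.\en\right|_X,\left.\en\right|_Y)\notin\V(\cOO)^\perp$. The only (immaterial) presentational difference is that the paper notes the extreme-point claim from Lemma~\ref{lem:delta}(a) holds unconditionally, whereas you state it after the non-membership check even though it does not depend on it.
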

\begin{proof}
By combining Lemma \ref{lem:delta} and \eqref{eq:pairing_Ppg}, \eqref{eq:link_P_delta}, the second claim is always true, while the first one holds whenever $(\left.\en\right|_X \,,\,\left.\en\right|_Y)$ is not an element of $\V(\cOO)^\perp$. By \eqref{eq:annihilator_obs}, this is equivalent to either $\left.\en\right|_X$ or $\left.\en\right|_Y$ being not constant.
\end{proof}

Note that the last proposition does not require $\Ppg(\enp) < \Prg(\enp)$. Indeed, if two compatible measurements $\Ao$ and $\Bo$ attain both the equalities $\Ppg(\enp) = \Prg(\enp) = \Prg(\enp;\Ao,\Bo)$, then actually $(\Ao,\Bo)\in\partial\cOO\cap\partial\OO$ by an easy extension of the argument in the last proof.

Remarkably, as a consequence of Prop.~\ref{prop:boundary_Ppost}, whenever there is a unique pair of compatible measurements $\Ao$ and $\Bo$ attaining the equality $\Ppg(\enp) = \Prg(\enp;\Ao,\Bo)$, then $(\Ao,\Bo)$ is necessarily an extreme point of the compact convex set $\cOO$; some examples can be found in \cite[Secs.~V and VI]{SCaHeToPRA18}.

\section{Incompatibility witnesses with two mutually unbiased bases}\label{sec:MUB_suppl}

In this section, we suppose $\{\phii_h \mid h = 1,\ldots,d\}$ and $\{\psi_k\mid k = 1,\ldots,d\}$ are two fixed mutually unbiased bases (MUB) of the $d$-dimensional Hilbert space $\hh$. We will show how these bases can be used to construct a family of IWs for pairs of measurements with outcomes $X = \{(h,\phii)\mid h=1,\ldots,d\}$ and $Y = \{(k,\psi)\mid k=1,\ldots,d\}$. Moreover, as a byproduct of this construction, we will also characterize the amount of uniform noise that is needed in order to make the two given MUB compatible. Note that, although when one regards $X$ and $Y$ as separate sets, the extra symbols $\phii$ and $\psi$ are clearly redundant, nonetheless they are needed to define the disjoint union $Z=X\cup Y$, and then consider the partition $\{X,Y\}$ of $Z$.

\subsection{Construction of the IWs}\label{subsec:construction_MUB}

{For all $\bmu = (\mu_\phii,\mu_\psi)\in [0,1]\times [0,1]$, we can define the state ensemble $\en_\bmu$ with label set $Z$, given as
\equasi{eq:def_en_bmu}{
\en_\bmu (j,\ell) = \frac{1}{2d} \left[\mu_\ell\kb{\ell_j}{\ell_j} + (1-\mu_\ell) \frac{1}{d}\,\id\right] \qquad \forall j\in\{1,\ldots,d\},\,\ell\in\{\phii,\psi\}\,,
}
and the corresponding partitioned state ensemble $\enp_\bmu = (\en,\{X,Y\})$. For this state ensemble, all labels $z\in Z$ occur with the same probability $p(z) = \tr{\en_\bmu (z)} = 1/(2d)$; moreover, $p(X) = p(Y) = 1/2$ are the probabilities that a label occur in the set $X$ or $Y$, respectively. The subensemble $\en_{\bmu,X}(x) = (1/p(X))\en_\bmu(x) \ \ \forall x\in X$ is given by
\begin{equation}\label{eq:MUB_subensemble}
\en_{\bmu,X} (h,\phii) = \frac{1}{d} \left[\mu_\phii\kb{\phii_h}{\phii_h} + (1-\mu_\phii) \frac{1}{d}\,\id\right] \,,
\end{equation}
and it corresponds to a situation in which Alice randomly picks any of the vectors within the first basis with equal probability, and then she adds it uniform noise with intensity $1-\mu_\phii$. A similar interpretation holds for the subensemble $\en_{\bmu,Y}$, with a possibly different choice of the noise parameter $\mu_\psi$.}

{In order to obtain some more inequivalent IWs, we can let either $\mu_\phii$ or $\mu_\psi$ take even slightly negative values, in which case $\en_{\bmu,X}$ and $\en_{\bmu,Y}$ can no longer be interpreted as noisy ensembles. The most general choice still making \eqref{eq:def_en_bmu} define a state ensemble is indeed $\bmu = [1/(1-d),1]\times [1/(1-d),1]$. In the following, we always use the latter values, and we further assume ${\bmu} \neq (0,0)$ to avoid the trivial case.}

In order to use the state ensemble $\en_\bmu$ for constructing a tight IW as in \eqref{eq:standard_IW_suppl}, first of all we need to evaluate the pre- and postmeasurement guessing probabilities $\Prg(\enp_\bmu)$ and $\Ppg(\enp_\bmu)$. To this aim, we recall the following two useful results.

\begin{proposition}[Prop.~2 of \cite{SCaHeToPRA18}]\label{prop:easy}
Suppose $\en$ is a state ensemble with label set $X$. For all $x\in X$, denote by $\lambda(x)$ the largest eigenvalue of $\en(x)$, and by $\Pi(x)$ the orthogonal projection onto the $\lambda(x)$-eigenspace of $\en(x)$. Define
\begin{equation}\label{eq:def_lam_Xen}
\lambda_\en = \max_{x\in X} \lambda(x) \, , \qquad\qquad X_\en = \{ x \in X: \lambda(x) = \lambda_\en \} \, .
\end{equation}
Then, if there exists $\nu\in\Rb$ such that
\begin{equation}\label{eq:easy_condition_1}
\sum_{x\in X_\en} \Pi(x) = \nu\id \,,
\end{equation}
we have the following consequences:
\begin{enumerate}[(a)]
\item $\nu = \tfrac{1}{d} \sum_{x\in X_\en} \rank{\Pi(x)}$;\label{item:(mu)prop_easy}
\item $\Pg(\en) = d\lambda_\en$;\label{item:(lambda)prop_easy}
\item a (not necessarily unique) measurement $\Mo_0$ attaining the maximum guessing probability $\Pg(\en)$ is\label{item:(M0)prop_easy}
\begin{equation*}
\Mo_0(x) = \begin{cases}
\nu^{-1}\Pi(x) & \text{ if } x\in X_\en \\
0 & \text{ if } x\notin X_\en
\end{cases}\,.
\end{equation*}
\end{enumerate}
\end{proposition}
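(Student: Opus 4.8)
The plan is to prove the three claims in the order (a), then the upper bound $\Pg(\en)\le d\lambda_\en$ for an arbitrary measurement, and finally the attainment of this bound by the explicit $\Mo_0$ of part (c), which simultaneously yields (b). I would first dispose of (a) as a bookkeeping identity: taking the trace of the hypothesis \eqref{eq:easy_condition_1} and using $\tr{\Pi(x)} = \rank\Pi(x)$ together with $\tr{\id} = d$ gives $\sum_{x\in X_\en}\rank\Pi(x) = \nu d$, which is exactly the asserted value of $\nu$.

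For the upper bound, the key observation is the operator inequality $\en(x)\le\lambda(x)\id\le\lambda_\en\id$, which holds because $\lambda(x)$ is the largest eigenvalue of the positive operator $\en(x)$ and $\lambda(x)\le\lambda_\en$ by definition. For any measurement $\Mo$ one has $\Mo(x)\ge 0$, and since $\tr{AM}\le\tr{BM}$ whenever $A\le B$ and $M\ge 0$ (the trace of a product of two positive operators being nonnegative), I obtain $\tr{\en(x)\Mo(x)}\le\lambda_\en\tr{\Mo(x)}$ for each $x$. Summing over $x$ and invoking the normalization $\sum_x\Mo(x) = \id$ gives $\Pg(\en;\Mo)\le\lambda_\en\tr{\id} = d\lambda_\en$, and maximizing over $\Mo$ yields $\Pg(\en)\le d\lambda_\en$. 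I would stress that this bound needs no assumption whatsoever on the projections $\Pi(x)$.

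It then remains to verify that the $\Mo_0$ displayed in (c) is a legitimate measurement saturating the bound, and this is precisely where hypothesis \eqref{eq:easy_condition_1} enters — the only genuinely substantive step. Positivity of each $\Mo_0(x)$ is immediate; normalization follows from $\sum_x\Mo_0(x) = \nu^{-1}\sum_{x\in X_\en}\Pi(x) = \nu^{-1}\cdot\nu\id = \id$, so \eqref{eq:easy_condition_1} is exactly the condition that makes the top-eigenspace projections tile the identity and thereby produces a valid POVM. Since $\en(x)\Pi(x) = \lambda(x)\Pi(x) = \lambda_\en\Pi(x)$ for $x\in X_\en$, I compute $\Pg(\en;\Mo_0) = \nu^{-1}\lambda_\en\sum_{x\in X_\en}\rank\Pi(x) = d\lambda_\en$, where the last equality uses part (a) to resolve the sum. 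This matches the upper bound, so $\Mo_0$ is optimal, giving both (b) and (c). The main obstacle is conceptual rather than computational: recognizing that \eqref{eq:easy_condition_1} plays no role in the bound itself and is needed solely to exhibit an admissible measurement that attains it.
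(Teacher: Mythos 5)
Your proof is correct and complete: the trace of \eqref{eq:easy_condition_1} gives (a), the operator inequality $\en(x)\leq\lambda_\en\id$ together with $\sum_x\Mo(x)=\id$ gives the upper bound $d\lambda_\en$, and \eqref{eq:easy_condition_1} is exactly what makes $\Mo_0$ a valid POVM saturating it. Note that the paper itself does not prove this statement but imports it as Prop.~2 of \cite{SCaHeToPRA18}; your argument is the standard one used there, so there is nothing further to compare.
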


\begin{theorem}[Thm.~2 of \cite{SCaHeToPRA18}]\label{prop:equiv}
For any partitioned state ensemble $\enp = (\en,\{X,Y\})$, we have
\begin{equation*}
\Ppg(\enp) = (\abs{X}p(Y) + \abs{Y}p(X)) \Pg(\enf) \,,
\end{equation*}
where $\enf$ is the state ensemble having the Cartesian product $X\times Y$ as its label set, and given by
\begin{equation}\label{eq:assisting}
\enf(x,y) = \frac{\en(x) + \en(y)}{\abs{X}p(Y) + \abs{Y}p(X)} \qquad \forall (x,y)\in X\times Y \,.
\end{equation}
Moreover, for a measurement $\Mo : X\times Y\to\lhs$, we have the equivalence
$$
\Pg(\enf;\Mo) = \Pg(\enf) \quad \Leftrightarrow \quad \Prg(\enp;\Mo_X,\Mo_Y) = \Ppg(\enp) \,,
$$
where $\Mo_X$ and $\Mo_Y$ are the two margin measurements of $\Mo$ defined in \eqref{eq:defC}.
\end{theorem}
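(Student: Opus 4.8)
The plan is to turn the constrained maximization defining $\Ppg(\enp)$ into an unconstrained guessing-probability problem for the single ensemble $\enf$, using the fact that compatible pairs are exactly the margin pairs of joint measurements. I start from the characterization $\Ppg(\enp) = \max_{(\Ao,\Bo)\in\cOO}\Prg(\enp;\Ao,\Bo)$, which is the content of \eqref{eq:pairing_Ppg} and \eqref{eq:link_P_delta}. By the definition of compatibility recalled around \eqref{eq:defC}, a pair $(\Ao,\Bo)$ lies in $\cOO$ precisely when $\Ao = \Mo_X$ and $\Bo = \Mo_Y$ for some joint measurement $\Mo\in\obs{X\times Y}$; since the objective $\Prg(\enp;\Ao,\Bo)$ depends only on the pair, the maximum over $\cOO$ equals the maximum over all $\Mo\in\obs{X\times Y}$ of $\Prg(\enp;\Mo_X,\Mo_Y)$.

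Next I would make this objective explicit in $\Mo$. From the component form $\Prg(\enp;\Ao,\Bo) = \sum_x \tr{\en(x)\Ao(x)} + \sum_y \tr{\en(y)\Bo(y)}$ supplied by the pairing \eqref{eq:pairing_Ppg}, together with $\Mo_X(x) = \sum_y \Mo(x,y)$ and $\Mo_Y(y) = \sum_x \Mo(x,y)$, interchanging the order of summation gives
\begin{equation*}
\Prg(\enp;\Mo_X,\Mo_Y) = \sum_{(x,y)\in X\times Y} \tr{(\en(x)+\en(y))\Mo(x,y)} \,.
\end{equation*}
Writing $c = \abs{X}p(Y)+\abs{Y}p(X)$ and comparing with the definition \eqref{eq:assisting} of $\enf$, the right-hand side is exactly $c\,\Pg(\enf;\Mo)$. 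Here I would pause to check that $\enf$ is a bona fide state ensemble: positivity is immediate from $\en(x),\en(y)\geq 0$ and $c>0$, and the normalization $\sum_{(x,y)}\tr{\enf(x,y)} = 1$ follows from $\sum_x\tr{\en(x)} = p(X)$ and $\sum_y\tr{\en(y)} = p(Y)$, which collapse the numerator's total trace to $\abs{Y}p(X)+\abs{X}p(Y) = c$; note also $c \geq p(X)+p(Y) = 1 > 0$.

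Maximizing over $\Mo$ then yields $\Ppg(\enp) = c\,\max_\Mo \Pg(\enf;\Mo) = c\,\Pg(\enf)$, which is the first assertion. The equivalence is then free: the per-measurement identity $\Prg(\enp;\Mo_X,\Mo_Y) = c\,\Pg(\enf;\Mo)$ derived above, combined with $\Ppg(\enp) = c\,\Pg(\enf)$ and $c>0$, gives at once
\begin{equation*}
\Prg(\enp;\Mo_X,\Mo_Y) = \Ppg(\enp) \quad\Leftrightarrow\quad \Pg(\enf;\Mo) = \Pg(\enf) \,.
\end{equation*}

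Everything after the reduction is bookkeeping, so the one genuinely conceptual step --- and the point I would be most careful about --- is the first: that maximizing $\Prg$ over the compatible pairs agrees with maximizing it over the margin pairs of all joint measurements. This uses both inclusions, namely that every $\Mo$ yields a compatible margin pair and, conversely, that every compatible pair is by definition the margin pair of at least one $\Mo$; because the objective factors through the margins, the two maxima are taken of the same set of values and therefore coincide.
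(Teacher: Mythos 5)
Your derivation is correct. Note first that the paper itself offers no proof of this statement: it is imported verbatim as Thm.~2 of the cited reference \cite{SCaHeToPRA18}, so there is no in-paper argument to compare against. Judged on its own terms, your proof is sound: the identity $\Prg(\enp;\Mo_X,\Mo_Y)=\sum_{(x,y)}\tr{(\en(x)+\en(y))\Mo(x,y)}=c\,\Pg(\enf;\Mo)$ with $c=\abs{X}p(Y)+\abs{Y}p(X)$ is a correct Fubini-type rearrangement, the normalization check $\sum_{(x,y)}\tr{\enf(x,y)}=1$ is right, and the passage from the maximum over $\cOO$ to the maximum over $\Mo\in\obs{X\times Y}$ is justified exactly as you say, by both inclusions of the definition of compatibility together with the fact that the objective factors through the margins. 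The equivalence statement then does follow for free from the per-measurement identity and $c>0$.

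The one caveat is where the real content of the theorem lives. You take as your starting point the characterization $\Ppg(\enp)=\max_{(\Ao,\Bo)\in\cOO}\Prg(\enp;\Ao,\Bo)$, i.e.\ Eq.~(6) of the main text, which this paper also only cites. In the original reference, $\Ppg$ is defined operationally (Bob performs one fixed measurement and post-processes its outcome after Alice's announcement), and the step from that operational definition to the maximization over compatible pairs is the genuinely nontrivial part; the reduction to the auxiliary ensemble $\enf$, which is what you prove, is then the bookkeeping you describe. So your argument is a correct derivation of the theorem \emph{from} Eq.~(6), which is a legitimate reading given how the present paper presents $\Ppg$, but you should be explicit that you are treating Eq.~(6) as the definition rather than as something you have established.
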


{We can immediately use Prop.~\ref{prop:easy} to evaluate the optimal guessing probability for the subensemble $\en_{\bmu,X}$. Indeed, the largest eigenvalue of \eqref{eq:MUB_subensemble} is}
\aligno{
\lam(h,\phii) & = \begin{cases}
\displaystyle\frac{1}{d^2}\left[1+(d-1)\mu_\phii\right] & \text{ if } \mu_\phii \geq 0 \\[0.3cm]
\displaystyle\frac{1}{d^2}\left(1-\mu_\phii\right) & \text{ if } \mu_\phii < 0 
\end{cases} \\[0.2cm]
& = \frac{1}{d^2}\left\{1+\frac{1}{2}\left[(d-2)\mu_\phii + d\abs{\mu_\phii}\right]\right\} \qquad \forall h=1,\ldots,d \,.
}
Thus, in \eqref{eq:def_lam_Xen}, the overall largest eigenvalue $\lam_{\en_{\bmu,X}}$ is given by the latter expression, and it is attained on the whole label set $X_{\en_{\bmu,X}} = X$. Eq.~\eqref{eq:easy_condition_1} is then easily verified, so that Prop.~\ref{prop:easy} can be applied. Hence, Prop.~\ref{prop:easy}.\eqref{item:(lambda)prop_easy} yields
$$
\Pg(\en_{\bmu,X}) = \frac{1}{d}\left\{1+\frac{1}{2}\left[(d-2)\mu_\phii + d\abs{\mu_\phii}\right]\right\} \,.
$$
A similar formula holds also for $\Pg(\en_{\bmu,Y})$. We finally obtain
\aligsi{
\Prg(\enp_\bmu) & = p(X)\Pg(\en_{\bmu,X}) + p(Y)\Pg(\en_{\bmu,Y}) \notag \\
& = \frac{1}{4}\left[2 + \abs{\mu_\phii} + \abs{\mu_\psi} + \left(1-\frac{2}{d}\right) (\mu_\phii + \mu_\psi - 2)\right] \,. \label{eq:Ppre_MUB}
}

Now we tackle the more difficult problem of evaluating $\Ppg(\enp_\bmu)$. In order to apply Thm.~\ref{prop:equiv}, we first write the auxiliary state ensemble \eqref{eq:assisting}, which is
\equasi{eq:assisting_MUB}{
\enf((h,\phii),(k,\psi)) = \frac{1}{2d^2} \left[\mu_\phii\kb{\phii_h}{\phii_h} + \mu_\psi\kb{\psi_k}{\psi_k} + \frac{1}{d} (2 - \mu_\phii - \mu_\psi) \,\id\right]
}
for the partitioned state ensemble $\enp_\bmu$. Then, we check if we can apply Prop.~\ref{prop:easy} in order to calculate $\Pg(\enf)$. To this aim, we need to find the spectral decomposition of \eqref{eq:assisting_MUB} for all $h,k$. The next lemma is useful to this purpose.

\begin{lemma}\label{lem:lemma}
Let $\varphi,\psi\in\hh$ be two unit vectors such that $\abs{\ip{\varphi}{\psi}}^2 = 1/d$. Denote $Q=\kb{\varphi}{\varphi}$ and $P=\kb{\psi}{\psi}$, and let
\begin{equation}\label{eq:defS}
S = qQ+pP \quad \text{with} \quad q,p\in\Rb \quad \text{and} \quad (q,p) \neq (0,0)\,.
\end{equation}
Then, the eigenvalues of the selfadjoint operator $S$ are
\begin{alignat}{2}
\lambda_+ & = \frac{1}{2}\left[(q+p)+\sqrt{q^2 + p^2 - 2\Delta qp}\right] & \qquad & \text{with multiplicity $\geq 1$}\,, \label{eq:lam+}\\
\lambda_- & = \frac{1}{2}\left[(q+p)-\sqrt{q^2 + p^2 - 2\Delta qp}\right] & \qquad & \text{with multiplicity $\geq 1$}\,, \label{eq:lam-}\\
\lambda_0 & = 0 & \qquad & \text{with multiplicity $\geq d-2$} \,, \label{eq:lam0}
\end{alignat}
where
\equano{
\Delta = 1-\frac{2}{d} \geq 0 \,.
}
They satisfy the following inequalities:
\begin{enumerate}[-]
\item if $q\geq 0$ and $p\geq 0$, then $\lambda_+ > \lambda_- \geq 0$, with equality if and only if $qp = 0$;
\item if either $q>0$ and $p<0$, or $q<0$ and $p>0$, then $\lambda_+ > 0 > \lambda_-$;
\item if $q\leq 0$ and $p\leq 0$, then $0 \geq \lambda_+ > \lambda_-$, with equality if and only if $qp = 0$.
\end{enumerate}
Moreover, the three selfadjoint operators
\begin{align}
\Pi_+ & = \frac{1}{\lam_+ - \lam_-} \left\{q Q + p P - \frac{d\lambda_-}{d-1} \left[Q+P-(QP+PQ)\right]\right\}\,, \label{eq:defPi+} \\
\Pi_- & = \frac{1}{\lambda_- - \lambda_+} \left\{q Q + p P  - \frac{d\lambda_+}{d-1} \left[Q+P-(QP+PQ)\right]\right\}\,, \label{eq:defPi-} \\
\Pi_0 & = \id - \frac{d}{d-1}\left[Q+P-(QP+PQ)\right] \label{eq:defPi0}
\end{align}
constitute an orthogonal resolution of the identity, with $S\Pi_k = \Pi_k S = \lam_k S$ for all $k\in\{+,-,0\}$, and $\rank{\Pi_+} = \rank{\Pi_-} = 1$, $\rank{\Pi_0} = d-2$.
\end{lemma}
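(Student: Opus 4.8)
The plan is to exploit the fact that $S = qQ + pP$ acts nontrivially only on the two-dimensional subspace $\mathcal{M} = \spanno{\varphi,\psi}$. Since $\abs{\ip{\varphi}{\psi}}^2 = 1/d < 1$ for $d\geq 2$, the vectors $\varphi$ and $\psi$ are linearly independent, so $\dim\mathcal{M} = 2$; moreover $Sw = 0$ whenever $w\perp\varphi,\psi$, because both $Q$ and $P$ annihilate $\mathcal{M}^\perp$. This immediately yields the eigenvalue $\lam_0 = 0$ with multiplicity at least $d-2$ and reduces the whole problem to diagonalizing the restriction $S|_{\mathcal{M}}$, a $2\times 2$ selfadjoint operator. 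First I would pin down $\lam_\pm$ through the two invariants of $S|_{\mathcal{M}}$: from $\tr{Q} = \tr{P} = 1$ one gets $\tr{S} = q+p$, and from $\tr{QP} = \abs{\ip{\varphi}{\psi}}^2 = 1/d$ one gets $\tr{S^2} = q^2 + p^2 + 2qp/d$. Since the zero eigenvalue contributes to neither trace, these read $\lam_+ + \lam_- = q+p$ and $\lam_+^2 + \lam_-^2 = q^2 + p^2 + 2qp/d$, whence $\lam_+\lam_- = qp(1 - 1/d)$; solving the quadratic $\lam^2 - (q+p)\lam + qp(1-1/d) = 0$ and writing $\Delta = 1-2/d$ reproduces \eqref{eq:lam+}--\eqref{eq:lam0}.

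The computational heart of the argument is the basis-free identity
\begin{equation*}
Q + P - (QP + PQ) = \frac{d-1}{d}\, E\,,
\end{equation*}
where $E$ denotes the orthogonal projection onto $\mathcal{M}$. I would prove it by evaluating the left-hand side on $\varphi$, on $\psi$, and on $\mathcal{M}^\perp$: using $Q\varphi = \varphi$, $P\varphi = \overline{\ip{\varphi}{\psi}}\,\psi$ and $\abs{\ip{\varphi}{\psi}}^2 = 1/d$, a short calculation gives $[Q+P-(QP+PQ)]\varphi = (1-1/d)\varphi$, symmetrically for $\psi$, while the left-hand side clearly kills $\mathcal{M}^\perp$. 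This identity, which hinges on the exact value $\abs{\ip{\varphi}{\psi}}^2 = 1/d$, is the step I expect to require the most care. Substituting $E = \tfrac{d}{d-1}[Q+P-(QP+PQ)]$ into the rank-one spectral projections of $S|_{\mathcal{M}}$, which extend to the full space as $\Pi_\pm = \pm(S - \lam_\mp E)/(\lam_+ - \lam_-)$ and $\Pi_0 = \id - E$, then gives precisely \eqref{eq:defPi+}--\eqref{eq:defPi0}.

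Finally I would verify the asserted structural properties. Selfadjointness of $\Pi_+,\Pi_-,\Pi_0$ is clear, and $\Pi_+ + \Pi_- = E$, $\Pi_+ + \Pi_- + \Pi_0 = \id$ are immediate from the definitions; idempotency, mutual orthogonality, and the eigen-relations (so that $\{\Pi_k\}$ is the spectral resolution of $S$) follow from $SE = ES = S$ together with the Cayley--Hamilton relation $S^2 = (q+p)S - qp(1-1/d)E$, which one checks directly from the expansion $S^2 = q^2 Q + p^2 P + qp(QP+PQ)$ and the displayed identity. The sign inequalities are then read off from $\lam_+ + \lam_- = q+p$ and $\lam_+\lam_- = qp(1-1/d)$ together with $1 - 1/d > 0$: the product fixes whether the two eigenvalues share a sign, and a quick estimate shows the discriminant $q^2 + p^2 - 2\Delta qp$ is strictly positive for $(q,p)\neq(0,0)$ (since $0\leq\Delta<1$), so $\lam_+ > \lam_-$ always; consequently each nonzero eigenvalue has a one-dimensional eigenspace inside $\mathcal{M}$, giving $\rank{\Pi_\pm} = 1$ and $\rank{\Pi_0} = d-2$. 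The only subtlety is that $\lam_+$ or $\lam_-$ may itself vanish (e.g.\ when $qp=0$), making $0$ a degenerate eigenvalue of $S$; the multiplicities are stated as inequalities precisely to accommodate this, and since $\lam_+\neq\lam_-$ the rank count for $\Pi_\pm$ is unaffected.
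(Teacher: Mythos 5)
Your proof is correct and follows essentially the same route as the paper: both reduce the problem to the two\-/dimensional span of $\varphi,\psi$, identify $\tfrac{d}{d-1}\left[Q+P-(QP+PQ)\right]$ as the orthogonal projection onto that span (the computational crux, resting on $QPQ=\tfrac{1}{d}Q$), and invoke positive definiteness of the quadratic form $q^2+p^2-2\Delta qp$ to get $\lambda_+\neq\lambda_-$ and hence the rank count. The only differences are cosmetic: you extract $\lambda_\pm$ from the invariants $\tr{S}$ and $\tr{S^2}$ rather than from the characteristic polynomial of the explicit $2\times 2$ matrix of $S$ in the nonorthogonal basis $\{\varphi,\psi\}$, and you obtain $\Pi_\pm$ from the Lagrange-type formula $\pm(S-\lambda_\mp E)/(\lambda_+-\lambda_-)$ checked via Cayley--Hamilton instead of solving the pair of relations $\Pi_++\Pi_-=E$ and $S=\lambda_+\Pi_++\lambda_-\Pi_-$ as the paper does.
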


\begin{proof}
After noticing that the vectors \(\{\varphi,\psi\}\) are linearly independent, we define the following two subspaces of $\hh$ with respective dimensions $2$ and $d-2$
$$
\hh_1 = \spanno{\varphi,\psi} \,,\qquad\qquad \hh_0 = \hh_1^\perp \, .
$$
Then, $S\hh_1\subseteq \hh_1$ and $S\hh_0 = \{0\}$. In particular, the eigenvalue $0$ has multiplicity greater than or equal to $d-2$ in $S$. The matrix form of the restriction $S_1 := \left. S\right|_{\hh_1}$ with respect to the (nonorthogonal) basis $\{\varphi,\psi\}$ of $\hh_1$ is
$$
S_1 = \left(\begin{array}{cc}
q & q \ip{\varphi}{\psi} \\
p \ip{\psi}{\varphi} & p 
\end{array}\right) \,.
$$
The roots of the characteristic polynomial of $S_1$ are the eigenvalues $\lambda_+$ and $\lambda_-$ in \eqref{eq:lam+}, \eqref{eq:lam-}. Since the quadratic form $q^2 + p^2 - 2\Delta qp$ is positive definite, the square root in \eqref{eq:lam+} and \eqref{eq:lam-} is nonzero, hence $\lambda_+ > \lambda_-$. The remaining inequalities involving $0$ are straightforward calculations. The multiplicities of $\lambda_+$ and $\lambda_-$ in $S$ can not be less than the respective multiplicities in $S_1$, which are $1$. This completes the proof of the statements about the eigenvalues.\\
Now, we claim that
\begin{equation}\label{eq:defPi1}\tag{$\sim$}
\Pi_1 = \frac{d}{d-1}[Q+P-(QP+PQ)]
\end{equation}
is the orthogonal projection onto $\hh_1$. Indeed,
\begin{enumerate}[-]
\item $\Pi_1^* = \Pi_1$ \quad (immediate);
\item $\left. \Pi_1 \right|_{\hh_0} = 0$ \quad (immediate);
\item $\Pi_1 Q = Q$ \quad (because $Q^2 = Q$ and $QPQ = (1/d)Q$) \quad $\Rightarrow$ \quad $\Pi_1\varphi = \varphi$;
\item $\Pi_1 P = P$ \quad (because $P^2 = P$ and $PQP = (1/d)P$) \quad $\Rightarrow$ \quad $\Pi_1\psi = \psi$.
\end{enumerate}
By applying the spectral theorem to $S_1$, there exist two rank-$1$ orthogonal projections $\Pi_+$ and $\Pi_-$ defined on $\hh$ and satisfying the relations
$$
\Pi_+ + \Pi_- = \Pi_1 \,, \qquad \qquad S = \lambda_+\Pi_+ + \lambda_-\Pi_- \,.
$$
Inserting the expressions \eqref{eq:defS} for $S$ and \eqref{eq:defPi1} for $\Pi_1$ into these relations, and solving with respect to $\Pi_+$, $\Pi_-$, we get \eqref{eq:defPi+} and \eqref{eq:defPi-}. Since the operator \eqref{eq:defPi0} is the orthogonal projection $\Pi_0 = \id-\Pi_1$ onto $\hh_0$, the proof of the last claim of the lemma is concluded.
\end{proof}

The expression in the square root of \eqref{eq:lam+}, \eqref{eq:lam-} is the quadratic form associated with the $2\times 2$ matrix
$$
G = \left(\begin{array}{cc}
1 & -\Delta \\
-\Delta & 1 
\end{array}\right) \,.
$$
Since $G$ is positive definite, this quadratic form actually is the squared norm
$$
\no{(q,p)}_G^2 = (q,p)G(q,p)^t = q^2 + p^2 - 2\Delta qp
$$
defined by the Euclidean scalar product $\pair{(q_2,p_2)}{(q_1,p_1)}_G = (q_1,p_1)G(q_2,p_2)^t$ in $\Rb^2$. Remarkably, in the special case $d=2$, we have $\Delta = 0$, hence $\pair{\cdot}{\cdot}_G$ is the canonical Euclidean scalar product of $\Rb^2$. Also note that in this case, independently of the values of $q$ and $p$, the largest eigenvalue of $S$ is $\lam_+$ with multiplicity $1$, its associated eigenprojection is $\Pi_+$, and the orthogonal projection $\Pi_0$ is zero.

Assuming $d\geq 3$, only for $q>0$ or $p>0$ the largest eigenvalue of $S$ is still given by $\lam_+$; otherwise, for $q\leq 0$ and $p\leq 0$, the largest eigenvalue of $S$ is $0$. In the former case, the eigenvalue $\lam_+$ has still multiplicity $1$ and associated eigenprojection $\Pi_+$; in the latter case, the eigenvalue $0$ has either multiplicity $d-2$ and associated eigenprojection $\Pi_0$ (subcase $qp\neq 0$), or multiplicity $d-1$ and associated eigenprojection $\Pi_0+\Pi_+$ (subcase $qp = 0$).

As a consequence of the last two paragraphs, the spectral decomposition of the operators \eqref{eq:assisting_MUB} is different according to the cases:
\begin{enumerate}[(C1)]
\item $d=2$ or $\max\{\mu_\phii,\mu_\psi\}>0$;\label{it:case1}
\item $d\geq 3$ and $\max\{\mu_\phii,\mu_\psi\}<0$;\label{it:case2}
\item $d\geq 3$ and $\max\{\mu_\phii,\mu_\psi\}=0$.\label{it:case3}
\end{enumerate}
More precisely, all the operators $\{\enf(x,y) \mid (x,y)\in X\times Y\}$ always have the same largest eigenvalue, which is
$$
\lam_\enf = \begin{cases}
\displaystyle\frac{1}{4d^2}\left[\left(1-\frac{2}{d}\right)(\mu_\phii+\mu_\psi) + \frac{4}{d} + \no{\bmu}_G\right] & \text{ in case (C\ref{it:case1})} \\[0.3cm]
\displaystyle\frac{1}{2d^3}\left[2 - (\mu_\phii+\mu_\psi)\right] & \text{ in cases (C\ref{it:case2})-(C\ref{it:case3})}
\end{cases}
$$
by Lemma \ref{lem:lemma}; in particular, the state ensemble $\enf$ attains its largest eigenvalue on the whole label set $X\times Y$, that is, $(X\times Y)_\enf = X\times Y$ with the notation of Prop.~\ref{prop:easy}. Moreover, combining \eqref{eq:defPi+} and \eqref{eq:defPi0} according to the case at hand, the orthogonal projection onto the $\lam_\enf$-eigenspace of \eqref{eq:assisting_MUB} is
\begin{subequations}\label{eq:defPi}
\equasi{eq:defPi_1}{
\Pi((h,\phii),(k,\psi)) = a \id + b (\mu_\phii \Qo(h) + \mu_\psi \Po(k)) - \frac{dc}{d-1} \left[\Qo(h)+\Po(k)-(\Qo(h)\Po(k)+\Po(k)\Qo(h))\right] \,,
}
where
\begin{equation}
\Qo(h) = \kb{\phii_h}{\phii_h}\,, \qquad\qquad \Po(k) = \kb{\psi_k}{\psi_k}
\end{equation}
and
\equasi{eq:def_abc}{
\begin{aligned}
a & = 0\,,  \qquad & b & = \frac{1}{\no{\bmu}_G}\,, \qquad & c & = \frac{1}{2}\left(\frac{\mu_\phii + \mu_\psi}{\no{\bmu}_G} - 1\right) \qquad && \text{in case (C\ref{it:case1})} \\
a & = 1\,,  \qquad &  b & = 0\,, \qquad & c & = 1 \qquad && \text{in case (C\ref{it:case2})} \\
a & = 1\,,  \qquad & b & = \frac{1}{\no{\bmu}_G}\,, \qquad & c & = 0 \qquad && \text{in case (C\ref{it:case3})}\,.
\end{aligned}
}
\end{subequations}
Since $\sum_{h=1}^d\Qo(h) = \sum_{k=1}^d\Po(k) = \id$ and the coefficients $a,b,c$ do not depend on $h,k$, we have
$$
\sum_{(x,y)\in (X\times Y)_\enf} \Pi(x,y) = \sum_{h,k=1}^d \Pi((h,\phii),(k,\psi)) = \nu \id
$$
where
\equasi{eq:nu}{
\nu = d^2 a + db(\mu_\phii + \mu_\psi) - 2dc =
\begin{cases}
d & \text{ in case (C\ref{it:case1})}\\
d(d-2) & \text{ in case (C\ref{it:case2})}\\
d(d-1) & \text{ in case (C\ref{it:case3})}
\end{cases} \,.
}
In all cases, Eq.~\eqref{eq:easy_condition_1} holds, hence we can apply Prop.~\ref{prop:easy} to determine $\Pg(\enf) = d\lam_\enf$ and the optimal measurement $\Mo_0(x,y) = \nu^{-1}\Pi(x,y) \ \ \forall (x,y)\in X\times Y$. Note that all projections $\Pi(x,y)$ have the same rank, hence $\rank{\Mo_0(x,y)} = \nu/d$ for all $x,y$ by Prop.~\ref{prop:easy}.\eqref{item:(mu)prop_easy}.

Now that we have found $\Ppg(\enf)$ and characterized a measurement attaining it, by making use of Thm.~\ref{prop:equiv} we can translate these results into the state discrimination problem with postmeasurement information for the partitioned state ensemble $\enp_\bmu$. The maximal guessing probability in the postmeasurement scenario is $\Ppg(\enp_\bmu) = d\Pg(\enf) = d^2 \lam_\enf$; it is attained on the compatible pair given by the margins of the measurement $\Mo_0 = \nu^{-1}\Pi$, which are easily calculated from \eqref{eq:defPi} and the normalization of $\Qo$ and $\Po$.
We collect the conclusions in the next proposition.

\begin{proposition}\label{prop:summary_Ppost_MUB}
For all $\bmu = (\mu_\phii,\mu_\psi) \in [1/(1-d),1]\times [1/(1-d),1]$ with $\bmu\neq (0,0)$, we have
\equasi{eq:Ppost_MUB}{
\Ppg(\enp_\bmu) = \begin{cases}
\displaystyle \frac{1}{4}\left[\left(1-\frac{2}{d}\right)(\mu_\phii+\mu_\psi) + \frac{4}{d} + \no{\bmu}_G\right] & \text{ in case (C\ref{it:case1})} \\[0.3cm]
\displaystyle \frac{1}{2d}\left[2 - (\mu_\phii+\mu_\psi)\right] & \text{ in cases (C\ref{it:case2})-(C\ref{it:case3})}
\end{cases} \,.
}
Moreover, $\Ppg(\enp_\bmu) = \Prg(\enp_\bmu;\Mo_{0,X},\Mo_{0,Y})$, where the measurement $\Mo_0 = \nu^{-1}\Pi$ is expressed in terms of the constant $\nu$ defined in \eqref{eq:nu} and the rank-$(\nu/d)$ orthogonal projections $\Pi(x,y)$ given by \eqref{eq:defPi} (which both depend on $\bmu$). Explicitly,
\equasi{eq:optimargins}{
\Mo_{0,X}(h,\phii) = \gamma_\phii \Qo(h) + (1-\gamma_\phii)\frac{1}{d}\,\id \,, \qquad\qquad \Mo_{0,Y}(k,\psi) = \gamma_\psi \Po(k) + (1-\gamma_\psi)\frac{1}{d}\,\id \,,
}
where
\begin{equation}\label{eq:gamma}
\gamma_\ell = 
\frac{d}{\nu}\left[b\mu_\ell - \frac{(d-2)c}{d-1}\right] =
\begin{cases}
\displaystyle \frac{d\mu_\ell - (d-2)(\mu_{\overline{\ell}} - \no{\bmu}_G)}{2(d-1)\no{\bmu}_G} & \text{ in case (C\ref{it:case1})} \\[0.3cm]
\displaystyle \frac{1}{1-d} (1-\delta_{\mu_\ell,0}) & \text{ in cases (C\ref{it:case2})-(C\ref{it:case3})}
\end{cases} \,.
\end{equation}
In the last equation, $(\ell,\overline{\ell})$ denotes either $(\phii,\psi)$ or $(\psi,\phii)$; moreover, $\delta_{\mu_\phii,0}$ and $\delta_{\mu_\psi,0}$ are the usual Kronecker deltas.
\end{proposition}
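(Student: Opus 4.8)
The plan is to package into closed form the computations already carried out above, so only two explicit substitutions remain. For the guessing probability I would read off $\Ppg(\enp_\bmu) = d\,\Pg(\enf) = d^2\lam_\enf$: the first equality is Thm.~\ref{prop:equiv} with $\abs{X} = \abs{Y} = d$ and $p(X) = p(Y) = 1/2$, so that the prefactor $\abs{X}p(Y) + \abs{Y}p(X)$ equals $d$; the second is Prop.~\ref{prop:easy}.\eqref{item:(lambda)prop_easy}, whose hypothesis \eqref{eq:easy_condition_1} has already been verified through the identity $\sum_{(x,y)}\Pi(x,y) = \nu\id$. Inserting the case-wise value of $\lam_\enf$ into $d^2\lam_\enf$ then produces \eqref{eq:Ppost_MUB} at once, the prefactor $1/(4d^2)$ becoming $1/4$ in case (C\ref{it:case1}) and $1/(2d^3)$ becoming $1/(2d)$ in cases (C\ref{it:case2})--(C\ref{it:case3}).

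For the optimal measurements I would use the equivalence half of Thm.~\ref{prop:equiv}: since $\Mo_0 = \nu^{-1}\Pi$ attains $\Pg(\enf)$ by Prop.~\ref{prop:easy}.\eqref{item:(M0)prop_easy}, its margins attain $\Ppg(\enp_\bmu)$, so it remains only to compute them from \eqref{eq:defPi}. Summing $\Pi((h,\phii),(k,\psi))$ over $k$ and using the single fact $\sum_{k=1}^d\Po(k) = \id$, the cross term collapses as $\sum_k[\Qo(h)\Po(k) + \Po(k)\Qo(h)] = 2\Qo(h)$ and all explicit $k$-dependence disappears; collecting the coefficient of $\Qo(h)$ gives $\gamma_\ell = (d/\nu)[b\mu_\ell - (d-2)c/(d-1)]$, the first line of \eqref{eq:gamma}. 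That the coefficient of $\id$ then equals $(1-\gamma_\ell)/d$ --- so that $\Mo_{0,X}$ is the genuine smeared POVM \eqref{eq:optimargins} --- is forced precisely by the relation $d^2 a + db(\mu_\phii+\mu_\psi) - 2dc = \nu$ recorded in \eqref{eq:nu}. The $Y$-margin follows by the symmetric computation summing over $h$.

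The only genuinely computational point, hence the main (though routine) obstacle, is reducing $\gamma_\ell$ to the explicit right-hand sides of \eqref{eq:gamma} via the constants in \eqref{eq:def_abc} and \eqref{eq:nu}. In case (C\ref{it:case1}) I would substitute $a = 0$, $b = 1/\no{\bmu}_G$, $c = \tfrac12(\frac{\mu_\phii+\mu_\psi}{\no{\bmu}_G} - 1)$, $\nu = d$, put everything over $2(d-1)\no{\bmu}_G$, and use $\mu_\phii + \mu_\psi = \mu_\ell + \mu_{\overline\ell}$ together with $2(d-1) - (d-2) = d$ to turn the numerator into $d\mu_\ell - (d-2)(\mu_{\overline\ell} - \no{\bmu}_G)$. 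In case (C\ref{it:case2}) one has $b = 0$ and $\nu = d(d-2)$, giving $\gamma_\ell = 1/(1-d)$ for both labels, which agrees with $\tfrac{1}{1-d}(1-\delta_{\mu_\ell,0})$ since $\mu_\ell \neq 0$ there. In case (C\ref{it:case3}) one has $c = 0$ and $\nu = d(d-1)$, so $\gamma_\ell = \mu_\ell/[(d-1)\no{\bmu}_G]$; here the constraint $\max\{\mu_\phii,\mu_\psi\} = 0$ forces one coordinate to vanish, whence $\no{\bmu}_G = \abs{\mu_\ell}$ and the expression collapses to $0$ when $\mu_\ell = 0$ and to $1/(1-d)$ when $\mu_\ell < 0$, once more matching $\tfrac{1}{1-d}(1-\delta_{\mu_\ell,0})$.
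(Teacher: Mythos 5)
Your proposal is correct and follows essentially the same route as the paper, which presents this proposition as a summary of the preceding computations: $\Ppg(\enp_\bmu)=d\,\Pg(\enf)=d^2\lam_\enf$ via Thm.~\ref{prop:equiv} and Prop.~\ref{prop:easy}, with the margins of $\Mo_0=\nu^{-1}\Pi$ read off from \eqref{eq:defPi} using the normalization of $\Qo$ and $\Po$. Your explicit verification that the identity coefficient equals $(1-\gamma_\ell)/d$ precisely because of \eqref{eq:nu}, and the case-by-case reduction of $\gamma_\ell$ (including the observation that $\no{\bmu}_G=\abs{\mu_\ell}$ in case (C\ref{it:case3})), correctly fills in the details the paper leaves as "easily calculated."
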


Remarkably, in case (C\ref{it:case2}) there exists an optimal pair of compatible measurements which is unaffected by the value of $\bmu$; it is the pair $(\Ao,\Bo)$ given by $\Ao(h,\phii) = [1/(1-d)]\Qo(h) + [1/(d-1)] \id$ and $\Bo(k,\psi) = [1/(1-d)]\Po(h) + [1/(d-1)] \id$ for all $h,k$. Note that $(\Ao,\Bo)$ is optimal also in case (C\ref{it:case3}), where the margins of $\Mo_0$ however yield a different solution.

We can now determine the values of $\bmu$ for which $\Ppg(\enp_\bmu) < \Prg(\enp_\bmu)$, and for these values explicitly evaluate the tight IW \eqref{eq:standard_IW_suppl} associated with $\enp_\bmu$. This yields the first main result of the present section.

\begin{theorem}[Thm.~3 of the main paper]\label{thm:MUBwitness_suppl}
Let $\bmu = (\mu_\phii,\mu_\psi) \in [1/(1-d),1]\times [1/(1-d),1]$ with $\bmu\neq (0,0)$. Then, we have the strict inequality $\Ppg(\enp_\bmu) < \Prg(\enp_\bmu)$ if and only if $\mu_\phii\mu_\psi\neq 0$ and either $d=2$ or $\max\{\mu_\phii,\mu_\psi\} > 0$. In this case, the tight IW associated with the partitioned state ensemble $\enp_\bmu$ by means of \eqref{eq:standard_IW_suppl} is
\begin{equation}\label{eq:MUBwitness_suppl}
\begin{aligned}
\wit_{\enp_\bmu} (\Ao,\Bo) = &  \frac{1}{4}\left(\mu_\phii + \mu_\psi + \no{\bmu}_G\right) - \frac{1}{2d} \left\{ \mu_\phii \sum_{h=1}^d \tr{\Ao(h,\phii)\Qo(h)} + \mu_\psi \sum_{k=1}^d \tr{\Bo(k,\psi)\Po(k)} \right\} \,.
\end{aligned}
\end{equation}
Finally, the ensembles $\enp_\bmu$ and $\enp_\bnu$ yield detection equivalent IWs if and only if $\bnu = \alpha\bmu$ for some $\alpha > 0$.
\end{theorem}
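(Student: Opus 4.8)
The plan is to establish the three assertions in turn — the characterization of when $\Ppg(\enp_\bmu) < \Prg(\enp_\bmu)$, the closed form \eqref{eq:MUBwitness_suppl} of the tight witness, and the classification of detection equivalence — using only the explicit values of $\Prg(\enp_\bmu)$ from \eqref{eq:Ppre_MUB} and of $\Ppg(\enp_\bmu)$ from \eqref{eq:Ppost_MUB}, together with Prop.~\ref{teo:equivalence}. First I would simply subtract the two guessing probabilities. Writing $\Delta = 1 - 2/d$, in case (C1) the terms free of $\no{\bmu}_G$ collapse (the constant $2 - 2\Delta - 4/d$ vanishes), leaving the clean gap
\[
\Prg(\enp_\bmu) - \Ppg(\enp_\bmu) = \tfrac14\bigl(\abs{\mu_\phii} + \abs{\mu_\psi} - \no{\bmu}_G\bigr),
\]
whereas the same subtraction in cases (C2)--(C3) yields $\Prg(\enp_\bmu) = \Ppg(\enp_\bmu)$ identically, so the strict inequality never holds there. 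To read off the sign in case (C1) I would square: since both $\abs{\mu_\phii}+\abs{\mu_\psi}$ and $\no{\bmu}_G$ are nonnegative, the gap is positive iff $(\abs{\mu_\phii}+\abs{\mu_\psi})^2 - \no{\bmu}_G^2 = 2\abs{\mu_\phii\mu_\psi} + 2\Delta\mu_\phii\mu_\psi > 0$. Splitting on the sign of $\mu_\phii\mu_\psi$ and using $0 \le \Delta < 1$, this bracket equals $\mu_\phii\mu_\psi(1+\Delta)$ when $\mu_\phii\mu_\psi>0$ and $\abs{\mu_\phii\mu_\psi}(1-\Delta)$ when $\mu_\phii\mu_\psi<0$ — both strictly positive — and vanishes precisely when $\mu_\phii\mu_\psi = 0$. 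Hence within case (C1) the strict inequality is equivalent to $\mu_\phii\mu_\psi\neq 0$; recalling that (C1) is exactly the condition ``$d=2$ or $\max\{\mu_\phii,\mu_\psi\}>0$'' gives the stated characterization.

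For the explicit witness I would start from the definition \eqref{eq:standard_IW_suppl} and expand $\Prg(\enp_\bmu;\Ao,\Bo) = \pair{(\Ao,\Bo)}{(\left.\en_\bmu\right|_X,\left.\en_\bmu\right|_Y)}$ using \eqref{eq:def_en_bmu}. The uniform-noise contributions simplify through the normalizations $\sum_h \Ao(h,\phii) = \sum_k \Bo(k,\psi) = \id$, giving
\[
\Prg(\enp_\bmu;\Ao,\Bo) = \frac{1}{2d}\Bigl[\mu_\phii \sum_{h} \tr{\Ao(h,\phii)\Qo(h)} + \mu_\psi \sum_{k} \tr{\Bo(k,\psi)\Po(k)}\Bigr] + \frac{2-\mu_\phii-\mu_\psi}{2d}.
\]
Subtracting this from $\Ppg(\enp_\bmu)$ evaluated in case (C1), the constant reduces after cancelling $\tfrac1d$ and recombining $\tfrac14\Delta(\mu_\phii+\mu_\psi)+\tfrac{1}{2d}(\mu_\phii+\mu_\psi) = \tfrac14(\mu_\phii+\mu_\psi)$ to exactly $\tfrac14(\mu_\phii+\mu_\psi+\no{\bmu}_G)$, reproducing \eqref{eq:MUBwitness_suppl}.

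For detection equivalence I would put $\wit_{\enp_\bmu}$ into the canonical form $\delta_\bmu - \pair{(\Ao,\Bo)}{(F_\bmu,G_\bmu)}$ read off from \eqref{eq:MUBwitness_suppl}, namely $F_\bmu(h,\phii) = \tfrac{\mu_\phii}{2d}\Qo(h)$ and $G_\bmu(k,\psi) = \tfrac{\mu_\psi}{2d}\Po(k)$, and apply Prop.~\ref{teo:equivalence}. Its first condition requires $F_\bnu(h,\phii) - \alpha F_\bmu(h,\phii) = \tfrac{1}{2d}(\nu_\phii - \alpha\mu_\phii)\Qo(h)$ to be a fixed operator $A$ independent of $h$; since the rank-one projections $\Qo(h) = \kb{\phii_h}{\phii_h}$ are pairwise distinct (as $d\geq 2$), a scalar multiple of them can be constant in $h$ only if that scalar is zero, forcing $\nu_\phii = \alpha\mu_\phii$ and $A = 0$, and symmetrically $\nu_\psi = \alpha\mu_\psi$, $B = 0$. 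This gives $\bnu = \alpha\bmu$ with $\alpha > 0$. Conversely, $\bnu = \alpha\bmu$ for $\alpha > 0$ makes $A = B = 0$ satisfy the first two conditions of Prop.~\ref{teo:equivalence}, and the remaining constant condition $\delta_\bnu = \alpha\delta_\bmu$ holds automatically — both by the tightness remark following Prop.~\ref{prop:general_equivalence} and directly from the homogeneity $\no{\alpha\bmu}_G = \alpha\no{\bmu}_G$.

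The bulk of the work is routine algebraic bookkeeping, so the genuinely load-bearing point — which I expect to be the crux — is the cancellation in the first step that produces the clean gap $\tfrac14(\abs{\mu_\phii}+\abs{\mu_\psi}-\no{\bmu}_G)$: the entire sign analysis and the clean constant of \eqref{eq:MUBwitness_suppl} hinge on it. The second delicate step is the distinctness-of-projections argument, which is what collapses the equivalence condition of Prop.~\ref{teo:equivalence} down to a single positive scaling and thereby forces the additive operators $A,B$ to vanish.
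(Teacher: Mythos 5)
Your proposal is correct and follows essentially the same route as the paper's proof: subtract \eqref{eq:Ppre_MUB} from \eqref{eq:Ppost_MUB} to get the gap $\tfrac14(\abs{\mu_\phii}+\abs{\mu_\psi}-\no{\bmu}_G)$ in case (C1) and zero otherwise, expand $\Prg(\enp_\bmu;\Ao,\Bo)$ to obtain \eqref{eq:MUBwitness_suppl}, and invoke Prop.~\ref{teo:equivalence} for the equivalence classification (the paper settles the sign via the strict triangle inequality for $\no{\cdot}_G$ and runs the converse with the full ensemble representation $\left.\en_\bmu\right|_X$, whereas you square the gap and use the reduced representation $F_\bmu(h,\phii)=\tfrac{\mu_\phii}{2d}\Qo(h)$ with distinctness of the projections --- both equivalent). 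The only blemish is a harmless factor of $2$: the squared difference is $2\abs{\mu_\phii\mu_\psi}+2\Delta\mu_\phii\mu_\psi$, so the bracketed quantities in your case split should read $2\mu_\phii\mu_\psi(1+\Delta)$ and $2\abs{\mu_\phii\mu_\psi}(1-\Delta)$, which does not affect the sign analysis.
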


\begin{proof}
By \eqref{eq:Ppre_MUB} and \eqref{eq:Ppost_MUB},
$$
\Prg(\enp_\bmu) - \Ppg(\enp_\bmu) =
\begin{cases}
\displaystyle \frac{1}{4}\left(\abs{\mu_\phii} + \abs{\mu_\psi} - \no{\bmu}_G\right) & \text{ in case (C\ref{it:case1})} \\[0.3cm]
\displaystyle 0 & \text{ in cases (C\ref{it:case2})-(C\ref{it:case3})}
\end{cases}\,.
$$
The triangular inequality for the norm $\no{\cdot}_G$ implies that the expression for case (C\ref{it:case1}) strictly positive unless $\mu_\phii = 0$ or $\mu_\psi = 0$. This proves the first claim.\\
Eq.~\eqref{eq:MUBwitness_suppl} then follows by combining \eqref{eq:Ppost_MUB} with
\aligno{
\Prg(\enp_\bmu;\Ao,\Bo) & = p(X)\Pg(\en_{\bmu,X};\Ao) + p(Y)\Pg(\en_{\bmu,Y};\Bo) \\
& = \frac{1}{2d}\left\{\mu_\phii\sum_h\tr{\Ao(h,\phii)\Qo(h)} + \mu_\psi\sum_k\tr{\Bo(k,\psi)\Po(k)} + 2 - (\mu_\phii + \mu_\psi)\right\} \,.
}
Finally, we prove the equivalence statement. If $\bnu = \alpha\bmu$ with $\alpha > 0$, then $\wit_{\enp_\bnu}=\alpha\wit_{\enp_\bmu}$ by \eqref{eq:MUBwitness_suppl}, hence the two witnesses are detection equivalent. Conversely, if $\wit_{\enp_\bmu}\approx\wit_{\enp_\bnu}$, then by Prop.~\ref{teo:equivalence} there exist $\alpha > 0$ and $A,B\in\lhs$ such that
$$
\en_\bnu (h,\phii) = \alpha \en_\bmu (h,\phii) + A \quad \forall h \quad \text{and} \quad \en_\bnu (k,\psi) = \alpha \en_\bmu (k,\psi) + B \quad \forall k \,.
$$
In particular, $\mu_\ell = 0 \ \Leftrightarrow \ \nu_\ell =0$, as any of the two equalities imply that both the maps $\en_\bmu (\cdot,\ell)$ and $\en_\bnu (\cdot,\ell)$ are constant.
Since
$$
\en_{\bnu} (j,\ell) = \frac{\nu_\ell}{\mu_\ell} \, \en_{\bmu} (j,\ell) + \frac{1}{2d^2}\left(1 - \frac{\nu_\ell}{\mu_\ell}\right)\,\id \qquad\forall j,\ell \qquad \left(\text{with } \frac{0}{0} := 0\right)\,,
$$
it must be $\alpha = \nu_\phii/\mu_\phii = \nu_\psi/\mu_\psi$ and $A = B = [(1-\alpha)/(2d^2)]\,\id$. In particular, $\bnu = \alpha\bmu$.
\end{proof}

\subsection{Incompatibility of noisy MUB}\label{subsec:incoMUB}

As a byproduct, the evaluation of $\Ppg(\enp_\bmu)$ and the solution to the related optimization problem provided in the last subsection allow to characterize the compatibility of the following pair of  measurements $(\Ao_\bgam,\Bo_\bgam)\in\OO$
\equasi{eq:def_AB_suppl}{
\Ao_\bgam(h,\phii) = \gamma_\phii \Qo(h) + (1-\gamma_\phii) \frac{1}{d}\,\id\,, \qquad\qquad \Bo_\bgam(k,\psi) = \gamma_\psi \Po(k) + (1-\gamma_\psi) \frac{1}{d}\,\id \,,
}
where $\bgam = (\gamma_\phii,\gamma_\psi)\in [1/(1-d),1]\times [1/(1-d),1]$. This set constitues all the values of the parameters $\bgam$ such that the operator valued maps $\Ao_\bgam,\Bo_\bgam$ constitute two POVMs. Note that, for $\gamma_\phii,\gamma_\psi \geq 0$, the measurements \eqref{eq:def_AB_suppl} can be interpreted as uniformly noisy versions of the sharp measurements $\Qo$ and $\Po$ \cite{SCaHeTo12}.

Since $\Prg(\enp_{\alpha\bmu};\Ao,\Bo) = \alpha\Prg(\enp_\bmu;\Ao,\Bo) + (1-\alpha)/d$ for $\alpha >0$, the optimization problem is the same for the partitioned state ensembles $\enp_\bmu$ and $\enp_{\alpha\bmu}$; hence, there is no restriction in parametrizing the family of state ensembles \eqref{eq:def_en_bmu} by means of the direction of the vector $\bmu$. We then choose the parametrization $\bmu = \bmu(\theta)$, given by
\equano{
\mu_\phii(\theta) = \alpha\left(\sqrt{d} \cos\theta + \sqrt{\frac{d}{d-1}}\sin\theta\right)\,, \qquad\qquad
\mu_\psi(\theta) = \alpha\left(\sqrt{d} \cos\theta - \sqrt{\frac{d}{d-1}}\sin\theta\right)
}
in terms of the single real parameter $\theta\in\left[-\pi\,,\,\pi\right]$. Here, $\alpha > 0$ is fixed in such a way that $\abs{\mu_\phii(\theta)} \leq 1/(d-1)$ and $\abs{\mu_\psi(\theta)} \leq 1/(d-1)$ for all $\theta$'s; for example, $\alpha = 1/(d\sqrt{d-1})$. The curve $\{\bmu(\theta)\mid \theta\in\left[-\pi\,,\,\pi\right]\}$ is an ellipse centered at $(0,0)$, and thus it spans all directions in the $(\gamma_\phii, \gamma_\psi)$-plane. With this parametrization, we always have $\no{\bmu(\theta)}_G = 2\alpha$.

In order to apply the classification into cases (C\ref{it:case1}), (C\ref{it:case2}) and (C\ref{it:case3}) of the the previous section, note that the inequality $\mu_\psi(\theta) < 0$ holds if and only if $\theta\in \left[-\pi\,,\,\pi\right]\setminus\left[-\theta_0 \,,\,\pi-\theta_0\right]$, where
\equasi{eq:def_theta0}{
\theta_0 = \pi-\arctan\sqrt{d-1}\in\left(\pi/2 \,,\, 3\pi/4\right] \,;
}
in addition, the set $\left\{-\theta_0 \,,\, \pi-\theta_0\right\}$ constitutes all the solutions to the equation $\mu_\psi(\theta) = 0$. Combining these facts with the symmetry $\mu_\phii(\theta) = \mu_\psi(-\theta)$, we have $\max\{\mu_\phii(\theta)\,,\,\mu_\psi(\theta)\} \geq 0$ if and only if $\theta\in\left[-\theta_0\,,\,\theta_0\right]$. Moreover, $\{ - \theta_0\,,\,\theta_0-\pi\,,\,\pi-\theta_0\,,\,\theta_0\}$ are all the values of $\theta$ for which either $\mu_\phii(\theta) = 0$ or $\mu_\psi(\theta) = 0$. Among the latter values, we have $\max\{\mu_\phii(\theta),\mu_\psi(\theta)\} = 0$ if and only if $\theta\in\{-\theta_0\,,\,\theta_0\}$. Thus, the cases described in the previous section occur as follows:
\begin{itemize}
\item[(C\ref{it:case1})] $\Leftrightarrow$ $d=2$ or $\theta\in (-\theta_0\,,\,\theta_0)$;
\item[(C\ref{it:case2})] $\Leftrightarrow$ $d\geq 3$ and $\theta\in [-\pi\,,\,\pi] \setminus [-\theta_0\,,\,\theta_0]$;
\item[(C\ref{it:case3})] $\Leftrightarrow$ $d\geq 3$ and $\theta\in \{-\theta_0\,,\,\theta_0\}$.
\end{itemize}

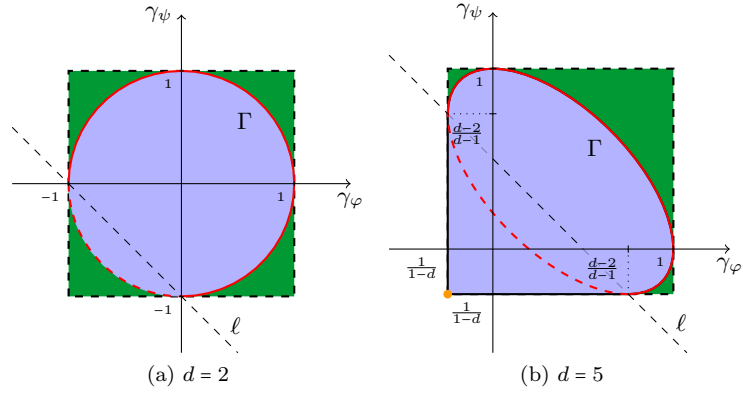
\begin{figure}[h!]\def\xy{1.5}
\centering
\subfloat[$d=2$]{
\begin{tikzpicture}[xscale=1.50,yscale=1.50,
declare function={
rsin(\x) = sin(deg(\x));
rcos(\x) = cos(deg(\x));
}]

\draw [thick, dashed, fill = darkgreen]
(-1,-1) -- (-1,1) -- plot (1,1) -- (1,-1) -- (-1,-1) ;

\draw [color=blue!30!white, domain=-1:1, samples=40, fill = blue!30!white]
plot ({rcos(\x*pi)}, {rsin(\x*pi)});

\draw [thick, red, domain=-0.5:1.0, samples=40]
plot ({rcos(\x*pi)}, {rsin(\x*pi)});

\draw [thick, dashed, red, domain=-1.0:-0.5, samples=40]
plot ({rcos(\x*pi)}, {rsin(\x*pi)});

\draw ({rcos(0.25*pi)}, {rsin(0.25*pi)}) node[anchor=north east]{$\Gamma$};

\draw[->] ({-\xy},0)--(\xy,0)node[anchor=north]{\small $\gamma_\phii$};
\draw[->] (0,{-\xy})--(0,\xy)node[anchor=east]{\small $\gamma_\psi$};

\draw [dashed] ({-\xy},{\xy-1}) -- ({\xy-1},{-\xy})
node[pos=0.99,above=0.1cm]{$\ell$};

\draw (-1,0)node[anchor=north east]{\tiny $-1$};
\draw[thin,-] (-1,-0.02)--(-1,0.02);

\draw (1,0)node[anchor=north east]{\tiny $1$};
\draw[thin,-] (1,-0.02)--(1,0.02);

\draw (0,-1)node[anchor=north east]{\tiny $-1$};
\draw[thin,-] (-0.02,-1)--(0.02,-1);

\draw (0,1)node[anchor=north east]{\tiny $1$};
\draw[thin,-] (-0.02,1)--(0.02,1);

\end{tikzpicture}}
\subfloat[$d=5$]{
\def\d{5}
\begin{tikzpicture}[xscale=2.4,yscale=2.4,
declare function={
ratan(\x) = rad(atan(\x));
rsin(\x) = sin(deg(\x));
rcos(\x) = cos(deg(\x));
}]

\def\th0{(pi-ratan(sqrt(\d-1)))}
\def\xymen{(-(0.46/1.2)*\xy)}
\def\xypiu{((1.05/1.2)*\xy)}

\draw [thick, dashed, fill = darkgreen]
({1/(1-\d)},{1/(1-\d)}) -- ({1/(1-\d)},1) -- (1,1) -- (1,{1/(1-\d)}) -- ({1/(1-\d)},{1/(1-\d)}) ;

\draw [thick, domain=-1:1, samples=40, fill=blue!30!white]
plot({1/(1-\d)},{1/(1-\d) + 1/2 * (\x+1)})
-- plot ({(\d*rcos(\x*\th0 + \th0) + 2 - \d)/(2*(1 - \d))}, {(\d*rcos(\x*\th0 - \th0) + 2 - \d)/(2*(1 - \d))})
-- plot({1/(1-\d) + 1/2 * (-\x+1)},{1/(1-\d)}) ;

\draw [thick,red, domain=-1:1, samples=40]
plot ({(\d*rcos(\x*\th0 + \th0) + 2 - \d)/(2*(1 - \d))}, {(\d*rcos(\x*\th0 - \th0) + 2 - \d)/(2*(1 - \d))}) ;

\draw [thick, dashed, red, domain=1:{(2*pi)/\th0-1}, samples=40]
plot ({(\d*rcos(\x*\th0 + \th0) + 2 - \d)/(2*(1 - \d))}, {(\d*rcos(\x*\th0 - \th0) + 2 - \d)/(2*(1 - \d))}) ;

\draw[->] ({\xymen},0)--({\xypiu},0)node[anchor=north]{\small $\gamma_\phii$};
\draw[->] (0,{\xymen})--(0,{\xypiu})node[anchor=east]{\small $\gamma_\psi$};

\draw [dashed] ({\xymen},{-\xymen+(\d-3)/(\d-1)}) -- ({-1/(\d-1)},{(\d-2)/(\d-1)}) ;
\draw [dashed,opacity=0.3] ({-1/(\d-1)},{(\d-2)/(\d-1)}) -- (0,{(\d-3)/(\d-1)});
\draw [dashed] (0,{(\d-3)/(\d-1)}) -- ({(\d-3)/(\d-1)},0) ;
\draw [dashed,opacity=0.3] ({(\d-3)/(\d-1)},0) -- ({(\d-2)/(\d-1)},{-1/(\d-1)});
\draw [dashed] ({(\d-2)/(\d-1)},{-1/(\d-1)}) -- ({-\xymen+(\d-3)/(\d-1)},{\xymen})
node[pos=0.90,above=0.1cm]{$\ell$} ;

\draw ({1/(1-\d)},0)node[anchor=north east]{\tiny $\frac{1}{1-d}$};
\draw[thin,-] ({1/(1-\d)},-0.02)--({1/(1-\d)},0.02);

\draw ({(\d-2)/(\d-1)},0)node[anchor=north east]{\tiny $\frac{d-2}{d-1}$};
\draw[thin,-] ({(\d-2)/(\d-1)},-0.02)--({(\d-2)/(\d-1)},0.02);
\draw[dotted,-] ({(\d-2)/(\d-1)},0)--({(\d-2)/(\d-1)},{1/(1-\d)});

\draw (1,0)node[anchor=north east]{\tiny $1$};
\draw[thin,-] (1,-0.02)--(1,0.02);

\draw (0,{1/(1-\d)})node[anchor=north east]{\tiny $\frac{1}{1-d}$};
\draw[thin,-] (-0.02,{1/(1-\d)})--(0.02,{1/(1-\d)});

\draw (0,{(\d-2)/(\d-1)})node[anchor=north east]{\tiny $\frac{d-2}{d-1}$};
\draw[thin,-] (-0.02,{(\d-2)/(\d-1)})--(0.02,{(\d-2)/(\d-1)});
\draw[dotted,-] (0,{(\d-2)/(\d-1)})--({1/(1-\d)},{(\d-2)/(\d-1)});

\draw (0,1)node[anchor=north east]{\tiny $1$};
\draw[thin,-] (-0.02,1)--(0.02,1);

\draw[color=darkyellow,fill=darkyellow] ({1/(1-\d)},{1/(1-\d)}) circle[radius=0.6pt];

\draw ({(\d*rcos(\th0) + 2 - \d)/(2*(1 - \d))}, {(\d*rcos(-\th0) + 2 - \d)/(2*(1 - \d))}) node[anchor=north east]{$\Gamma$};

\end{tikzpicture}}
\caption{The set of $\bgam = (\gamma_\phii,\gamma_\psi)$ for which \eqref{eq:def_AB_suppl} defines two measurements (green square), and the one for which these measurements are compatible (blue region) for different values of the dimension $d$. The red line is the ellipse \eqref{eq:ellipse}, whose solid part is the curve \eqref{eq:boundary_suppl} for $\theta\in\left[-\theta_0\,,\,\theta_0\right]$. The dashed line $\ell$ is described by the equation $\gamma_\phii + \gamma_\psi = (d-3)/(d-1)$. Note the symmetry around the origin in dimension $d=2$, and the compatible pair corresponding to $\bgam_0 = (1/(1-d)\,,\,1/(1-d))$ when $d\geq 3$ (orange dot).\label{fig:MUBregion_suppl}}
\end{figure}

If $\theta\in\left(-\theta_0\,,\,\theta_0\right)$, inserting $\mu_\phii(\theta)$ and $\mu_\psi(\theta)$ into \eqref{eq:gamma}, we obtain that the two  measurements \eqref{eq:def_AB_suppl} coincide with the margin measurements \eqref{eq:optimargins} for $\bgam = \bgam(\theta) = (\gamma_\phii(\theta),\gamma_\psi(\theta))$, where
\equasi{eq:boundary_suppl}{
\gamma_\phii(\theta) = \frac{d-2-d\cos(\theta+\theta_0)}{2(d-1)}\,, \qquad\qquad \gamma_\psi(\theta) = \frac{d-2-d\cos(\theta-\theta_0)}{2(d-1)} \,;
}
moreover, the equality $\Ppg(\enp_{\bmu(\theta)}) = \Prg(\enp_{\bmu(\theta)};\Ao_{\bgam(\theta)},\Bo_{\bgam(\theta)})$ holds with this choice of $\bgam$. By Prop.~\ref{prop:boundary_Ppost}, this implies that $(\Ao_{\bgam(\theta)},\Bo_{\bgam(\theta)})\in\partial\cOO$ for all $\theta\in(-\theta_0\,,\,\theta_0)$. Since the set $\partial\cOO$ is closed, an easy continuity argument shows that the last inclusion is true also for $\theta\in\{-\theta_0\,,\,\theta_0\}$ (although $\Ao_{\bgam(\theta)}$ and $\Bo_{\bgam(\theta)}$ do no longer coincide with the margins \eqref{eq:optimargins} for these values of $\theta$). Note that for $\lam>1$ the two measurements $\Ao_{\lam\bgam(\theta)}$ and $\Bo_{\lam\bgam(\theta)}$ are necessarily incompatible, as otherwise we would get the contradiction $(\Ao_{\bgam(\theta)},\Bo_{\bgam(\theta)}) = (1-1/\lam)(\Uo_X,\Uo_Y) + (1/\lam)(\Ao_{\lam\bgam(\theta)},\Bo_{\lam\bgam(\theta)})\in\ri{\cOO}$ by Prop.~\ref{prop:convex_structure}.\eqref{it:c_convex_structure} and \cite[Thm.~6.1]{SRock}.

In the $(\gamma_\phii, \gamma_\psi)$-plane, the curve $\Gamma = \{\bgam(\theta)\mid\theta\in\left[-\theta_0\,,\,\theta_0\right]\}$ given by \eqref{eq:boundary_suppl} is the part of the ellipse
\equasi{eq:ellipse}{
d(\gamma_\phii^2 + \gamma_\psi^2) + 2(d-2) \gamma_\phii \gamma_\psi - 2(d-2) (\gamma_\phii + \gamma_\psi) = 4-d
}
lying above  the line $\gamma_\phii + \gamma_\psi = (d-3)/(d-1)$, as depicted in Fig.~\ref{fig:MUBregion_suppl}. As we have just seen, all pairs of measurements corresponding to points beyond $\Gamma$ are incompatible.

If $\theta\in\left[-\pi\,,\,\pi\right]\setminus\left[-\theta_0\,,\,\theta_0\right]$, two different situations occur according to the dimension $d$. If $d=2$, inserting $\mu_\phii(\theta)$ and $\mu_\psi(\theta)$ into \eqref{eq:gamma} still gives the same results as before also for the new values of $\theta$. The curve \eqref{eq:boundary_suppl} for $\theta\in\left[-\pi\,,\,\pi\right]$ is then the unit circle in the $(\gamma_\phii, \gamma_\psi)$-plane. Since for $\bgam = (0,0)$ the two observables $\Ao_\bgam = \Uo_X$ and $\Bo_\bgam = \Uo_Y$ are trivially compatible, by convexity we thus conclude that the measurements $\Ao_\bgam$ and $\Bo_\bgam$ are compatible if and only if $\gamma_\phii^2 + \gamma_\psi^2 \leq 1$. On the other hand, if $d\geq 3$, then inserting $\mu_\phii(\theta)$ and $\mu_\psi(\theta)$ into \eqref{eq:gamma} yields $\gamma_\phii = \gamma_\psi = 1/(1-d)$ irrespectively of the value of $\theta$. Setting $\bgam_0 = (1/(1-d)\,,\,1/(1-d))$, it follows that $(\Ao_{\bgam_0},\Bo_{\bgam_0})\in\partial\cOO$, and, again by convexity, all $\bgam$'s lying between $\bgam_0$ and the curve $\Gamma$ correspond to compatible pairs of measurements.

The previous discussion is summarized in the second main result of this section.

\begin{theorem}[Thm.~4 of the main paper]\label{thm:2nd_thm}
\begin{enumerate}[(a)]
\item Suppose $d=2$. For $\bgam\in [-1,1]\times [-1,1]$, the two measurements $\Ao_\bgam$ and $\Bo_\bgam$ of \eqref{eq:def_AB_suppl} are compatible if and only if $\gamma_\phii^2 + \gamma_\psi^2 \leq 1$. Moreover, $(\Ao_\bgam,\Bo_\bgam)\in\partial\cOO$ if and only if $\gamma_\phii^2 + \gamma_\psi^2 = 1$.\label{it:2nd_thm_a}
\item Suppose $d\geq 3$. For $\bgam\in [1/(1-d)\,,\,1]\times [1/(1-d)\,,\,1]$, the two measurements $\Ao_\bgam$ and $\Bo_\bgam$ of \eqref{eq:def_AB_suppl} are compatible if and only if
\equano{
\gamma_\phii + \gamma_\psi \leq \frac{d-3}{d-1} \qquad
\text{or} \qquad d(\gamma_\phii^2 + \gamma_\psi^2) + 2(d-2) \gamma_\phii \gamma_\psi - 2(d-2) (\gamma_\phii + \gamma_\psi) \leq 4-d \,.
}
Moreover, $(\Ao_\bgam,\Bo_\bgam)\in\partial\cOO$ if and only if\label{it:2nd_thm_b}
\equano{
\begin{gathered}
\bgam = \left( \frac{t}{d-1} \,,\, \frac{1}{1-d} \right) \quad \text{or} \quad \bgam = \left( \frac{1}{1-d} \,,\, \frac{t}{d-1} \right) \quad \text{for some } t\in\left[-1 \,,\, d-2\right] \\
\text{or} \quad \bgam = \left( \frac{d-2-d\cos(\theta+\theta_0)}{2(d-1)} \,,\, \frac{d-2-d\cos(\theta-\theta_0)}{2(d-1)} \right) \quad \text{for some $\theta \in[-\theta_0,\theta_0]$} \,,
\end{gathered}
}
where $\theta_0$ is given in \eqref{eq:def_theta0}.
\end{enumerate}
\end{theorem}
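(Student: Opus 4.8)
The plan is to exploit that the assignment $\bgam\mapsto(\Ao_\bgam,\Bo_\bgam)$ is affine, so that both the compatibility constraint and the relative-boundary condition pull back to \emph{convex} conditions in the two-dimensional $\bgam$-square. I would organize the argument around three facts already in hand: (i) by Prop.~\ref{prop:boundary_Ppost} applied to the optimal margins \eqref{eq:optimargins} of $\Ppg(\enp_{\bmu(\theta)})$, the curve $\bgam(\theta)$ of \eqref{eq:boundary_suppl} lies in $\partial\cOO$ for every $\theta\in[-\theta_0,\theta_0]$; (ii) the origin $(0,0)=(\Uo_X,\Uo_Y)$ belongs to $\ri{\cOO}$ by Prop.~\ref{prop:convex_structure}.\eqref{it:c_convex_structure}; and (iii) the radial scaling argument preceding the theorem, which shows that $(\Ao_{\lam\bgam(\theta)},\Bo_{\lam\bgam(\theta)})$ is incompatible whenever $\lam>1$. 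The whole proof then reduces to matching the set carved out by these facts with the explicit inequalities and with the parametrization of $\partial\cOO$.

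For the \emph{compatibility} (``if'') direction I would argue by convexity. Denote by $K$ the region of the $\bgam$-square described by the stated inequalities. In case (b) this is exactly the convex hull of the corner $\bgam_0=(1/(1-d),1/(1-d))$ together with the arc $\Gamma=\{\bgam(\theta)\mid\theta\in[-\theta_0,\theta_0]\}$; the two bounding edge segments are then convex combinations of $\bgam_0$ and the endpoints $\bgam(\theta_0)=((d-2)/(d-1),1/(1-d))$ and $\bgam(-\theta_0)$ (its mirror). Since $\bgam_0$ is compatible (it is the optimal margin in case (C\ref{it:case2}), hence in $\cOO$) and every point of $\Gamma$ is compatible, affinity of $\bgam\mapsto(\Ao_\bgam,\Bo_\bgam)$ together with convexity of $\cOO$ gives $K\subseteq\cOO$. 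In case (a) the same reasoning applies with $\Gamma$ replaced by the full unit circle and $\bgam_0$ by the origin, so the closed disk maps into $\cOO$.

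For the \emph{incompatibility} (``only if'') direction I would invoke the scaling argument. Given $\bgam$ in the square but outside $K$, set $t^\ast=\sup\{t\ge 0\mid t\bgam\in K\}$; then $t^\ast\in(0,1)$ and $t^\ast\bgam\in\partial K$. The key observation is that $t^\ast\bgam$ cannot lie on an edge segment: those segments sit on the extreme faces $\{\gamma_\phii=1/(1-d)\}$ or $\{\gamma_\psi=1/(1-d)\}$ of the square, and a strict contraction towards the origin of a square point can never reach that face. Hence $t^\ast\bgam=\bgam(\theta)$ lies on $\Gamma$, so $\bgam=(1/t^\ast)\bgam(\theta)$ with $1/t^\ast>1$, and fact (iii) makes it incompatible. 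In case (a) this is simply the statement that every $\bgam$ with $\gamma_\phii^2+\gamma_\psi^2>1$ is a dilation by $\no{\bgam}>1$ of its radial projection onto the unit circle.

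Finally, for the \emph{relative-boundary} characterization I would combine three remarks. Interior points of $K$ map into $\ri{\cOO}$: writing such a point as a convex combination of the origin (in $\ri{\cOO}$) and any point of $\cOO$ puts it in $\ri{\cOO}$ by \cite[Thm.~6.1]{SRock}. The arc $\Gamma$ is in $\partial\cOO$ by fact (i). The edge segments also lie in $\partial\cOO$: their points are compatible (being convex combinations of boundary points) yet they sit on $\partial\OO$, and since $\ri{\cOO}\subseteq\ri{\OO}$ by Prop.~\ref{prop:convex_structure}.\eqref{it:c_convex_structure} one has $\cOO\cap\partial\OO\subseteq\partial\cOO$. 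Thus $\partial\cOO$ intersected with the square is precisely $\Gamma$ together with the two edge segments (and the unit circle in case (a)), which is the parametrization stated in the theorem. It then remains to check that the two analytic descriptions agree: that the chord joining $\bgam(\theta_0)$ and $\bgam(-\theta_0)$ is the line $\gamma_\phii+\gamma_\psi=(d-3)/(d-1)$, that $\bgam_0$ lies on its compatible side, and that $\Gamma$ is the arc of the ellipse \eqref{eq:ellipse} lying above that line; these are the computations already recorded around \eqref{eq:ellipse} and \eqref{eq:def_theta0}. I expect the main obstacle to be exactly this bookkeeping of the edge segments---verifying rigorously that they are genuinely part of $\partial\cOO$ rather than of $\ri{\cOO}$, and confirming that the disjunctive inequality (``below the chord \emph{or} inside the ellipse'') reproduces the convex hull $K$ without over- or under-counting the region near the corner $\bgam_0$.
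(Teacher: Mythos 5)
Your proposal is correct and follows essentially the same route as the paper: the arc $\Gamma$ is placed on $\partial\cOO$ via Prop.~\ref{prop:boundary_Ppost} applied to the optimal margins \eqref{eq:optimargins}, compatibility inside the region follows by convexity from $(\Uo_X,\Uo_Y)\in\ri{\cOO}$ and the corner $\bgam_0$, and incompatibility outside follows from the radial dilation argument. Your explicit handling of the edge segments (via $\cOO\cap\partial\OO\subseteq\partial\cOO$) and of the interior of $K$ (via \cite[Thm.~6.1]{SRock}) spells out two steps that the paper leaves implicit in its concluding summary, but does not change the underlying argument.
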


Statement \eqref{it:2nd_thm_a} of the previous theorem is well known \cite{SBusch86}. On the other hand, statement \eqref{it:2nd_thm_b} was proved in the particular case in which the two measurements $\Ao_\bgam$ and $\Bo_\bgam$ are conjugate by the Fourier transform of the cyclic group $\Zb_d$, and restricting only to $\bgam\in [0,1]\times [0,1]$ \cite{SCaHeTo12}. Compared with the group theoretical approach of \cite{SCaHeTo12}, the present derivation of Thm.~\ref{thm:2nd_thm} has the advantage of not requiring any symmetry condition on the two MUB $\{\phii_1,\ldots,\phii_d\}$ and $\{\psi_1,\ldots,\psi_d\}$; thus, for dimensions $d\geq 4$ it actually applies to many inequivalent pairs of MUB, and not only to the Fourier conjugate pairs considered in \cite{SCaHeTo12} (see \cite{STaZy06} for a list of inequivalent pairs in dimensions $4\leq d \leq 16$ and the proof of the equivalence of all pairs in dimensions $d=2,3$). In the general (not symmetric) case, a proof of statement \eqref{it:2nd_thm_b} for $\bgam$ constrained on the diagonal $\gamma_\phii = \gamma_\psi$ is contained in \cite{SUoLuMoHe16,SDeSkFrBr18}.

We finally remark that the essential differences that lead to separate results for the cases $d=2$ and $d\geq 3$ are: (1) the additional symmetry $(\Ao_\bgam,\Bo_\bgam)\in\cOO \ \Leftrightarrow \ (\Ao_{-\bgam},\Bo_{-\bgam})\in\cOO$, which is specific of the $d=2$ case; (2) the fact that, for $\bgam_0 = (1/(1-d)\,,\,1/(1-d))$, the two measurements $(\Ao_{\bgam_0},\Bo_{\bgam_0})$ are compatible if and only if $d\geq 3$.




\end{document}